	 \newtheorem{theorem}{Theorem}
      \newtheorem{assumption}{Assumption}
      \newtheorem*{acknowledgements}{Acknowledgements}
      \newtheorem{property}{Property}
	 \newtheorem{definition}{Definition}
      \newtheorem{lemma}{Lemma}
      \newtheorem{example}{Example}
      \newtheorem{proposition}{Proposition}
\title{Insights on the Theory of Robust Games}
\author[1]{G. P. Crespi\thanks{giovanni.crespi@uninsubria.it}}
\author[2]{D. Radi\thanks{davide.radi@unipi.it \ and \ davide.radi@vsb.cz}}
\author[1]{M. Rocca\thanks{matteo.rocca@uninsubria.it}}
\affil[1]{\emph{\small Department of Economics, Insubria University, Varese (VA), Italy.}}
\affil[2]{\emph{\small Department of Economics and Management, University of Pisa, Pisa (PI), Italy.}}
\affil[2]{\emph{\small  Faculty of Economics, V\u{S}B--Technical University of Ostrava, Ostrava, Czech Republic.}}
\date{}
\begin{document}

\maketitle

\begin{abstract}
Restricting the attention to static games, we consider the problem of ambiguity generated by uncertain values of players' payoff functions. Uncertainty is represented by a bounded set of possible realizations, and the level of uncertainty is parametrized in such a way that zero means no uncertainty, the so-called nominal counterpart game, and one the maximum uncertainty. Assuming that agents are ambiguity averse and they adopt the worst-case optimization approach to uncertainty, we employ the robust-optimization techniques to obtain a so-called robust game, see \cite{AghassiBertsimas2006}. A robust game is a distribution-free model to handle ambiguity in a conservative way. The equilibria of this game are called robust-optimization equilibria and the existence is guaranteed by standard regularity conditions. The paper investigates the sensitivity to the level of uncertainty of the equilibrium outputs of a robust game. Defining the opportunity cost of uncertainty as the extra profit that a player would obtain by reducing his level of uncertainty, keeping fixed the actions of the opponents, we prove that a robust-optimization equilibrium is an $\epsilon$-Nash equilibrium of the nominal counterpart game where the $\epsilon$-approximation measures the opportunity cost of uncertainty. Moreover, considering an $\epsilon$-Nash equilibrium of a nominal game, we prove that it is always possible to introduce uncertainty such that the $\epsilon$-Nash equilibrium is a robust-optimization equilibrium. Under some regularity conditions on the payoff functions, we also show that a robust-optimization equilibrium converges \emph{smoothly} towards a Nash equilibrium of the nominal counterpart game, when the level of uncertainty vanishes. Despite these analogies, the equilibrium outputs of a robust game may be qualitatively different from the ones of the nominal counterpart. An example shows that a robust Cournot duopoly model can admit multiple and asymmetric robust-optimization equilibria despite the nominal counterpart is a simple symmetric game with linear-quadratic payoff functions for which only a symmetric Nash equilibrium exists.

\medskip

\textbf{Keywords:} Ambiguity aversion; worst-case optimization; robust games.

\end{abstract}

\section{Introduction}
\label{intro}

Ambiguity affects the choice of economic agents, as demonstrated in his seminal contribution by \cite{Ellsberg1961}. As opposed to risk, where an objective probability distribution describes some possible occurrence,  ambiguity (also known as uncertainty or Knightian uncertainty, see \cite{Knight1921}) is characterized by the inability of the decision maker to formulate a unique probability distribution or by his/her lack of trust in any single probability estimate. The experimentally documented attitude to prefer situations with known probabilities to unknown ones, to the extent that these can be compared, is defined uncertainty aversion or ambiguity aversion, see \cite{Epstein1999}.

In game theory, uncertainty and risk are related to the actions of the opponents as well as to the environment of the game. A cornerstone model that describes the decision mechanisms of players in this context of incomplete information is the Bayesian-game framework proposed by Harsanyi, see, e.g., \cite{Harsanyi1967}, \cite{Harsanyi1968} and \cite{Harsanyi1968b}. It represents a game-theoretical model to describe how players handle risk in a non-cooperative strategic setting, with objective probabilities available about the possible realizations of the environment of the game and the actions of the opponents. Nevertheless, this model does not provide a behavior setting and an equilibrium concept when players are affected by ambiguity, which is essential, for example, in discussing the refinement of Nash equilibria or to infer about players' selection problem among multiple Nash equilibria, see, e.g., \cite{Marinacci2000}, and in mechanism design where a planner does not know agents' beliefs about other players' types as well as the environment of the game, see, e.g. \cite{CremerMcLean1985} and \cite{BergemannMorris2005}. As indicated in \cite{JehielMeyerterVehnMoldovanuZame2006}, the sensitivity of Bayesian-Nash equilibrium to agents' beliefs suggests the use of more robust (distribution-free) notions of equilibrium for incomplete-information games. A very conservative concept is the one of \emph{ex-post equilibrium}, which is a Nash equilibrium under all possible realizations of the uncertain parameters. Introduced in \cite{HolmstromMyerson1983} with the name of \emph{uniform incentive compatibility} and used for the first time in the context of auctions in \cite{CremerMcLean1985}, ex-post equilibrium is increasingly studied in game theory, see, e.g., \cite{Kalai2004}, and it is often used in mechanism design as a very robust solution concept, see, e.g., \cite{CremerMcLean1985} and \cite{PerryReny2002}.\footnote{In oligopoly games, for example, ex-post solutions are considered in \cite{KlempererMeyer1989} where firms handle uncertainty by committing themselves to use a specific supply function and determine prices, or quantities, once uncertainty vanishes. Concerning mechanism design, \cite{BergemannMorris2005} looked at mechanisms that implement the social choice correspondence in ex-post equilibrium. Robust mechanism design in income taxation and public goods is considered instead in \cite{Bierbrauer2009}, where an ex-post implementation is adopted in a large economy model of income taxation and public goods provision under the assumption that the social benefits from public goods provision are a priori unknown.}

The ex-post equilibrium concept presents several drawbacks. Despite \cite{Kalai2004} shows that an equilibrium of large games is an ex-post Nash equilibrium, this remains a very conservative distribution-free equilibrium concept and its existence depends on the structure of the game and on players' uncertainty. Moreover, a recent literature on interdependence value environment has obtained positive and negative results using this solution concept, see, e.g., \cite{DasguptaMaskin2000} and \cite{BergemannValimaki2002}. Therefore new equilibrium concepts may be useful. The modern decision theory offers many models to describe the behavior of agents under ambiguity aversion, see \cite{GilboaMarinacci2013} for a recent survey on the topic. The most famous ones are the Choquet expected utility model proposed in \cite{Schmeidler1989} and the maxmin expected utility model introduced in \cite{GilboaSchmeidler1989}. The Choquet expected utility model accommodates the different degrees of confidence in one's own probability assignments or beliefs that a decision maker can experience. The model employs non-additive probabilities to represent uncertainty aversion. The maxmin expected utility model, followed by \cite{MaccheroniMarinacciRustichini2006} and others, captures the idea that an agent has beliefs about the state of the world represented by a set of probability distribution functions and he/she maximizes with respect to the minimum expected value taken over the possible probabilities.

In light of these developments in decision theory, alternative approaches to uncertainty have been proposed in game theory. Considering uncertainty about opponent agents' actions, \cite{Marinacci2000} introduces the \emph{ambiguous games}, a modification of the normal form game that allows the presence of vagueness in players' beliefs over the opponents' choice of strategies. In ambiguous games, players' behavior is expressed through a Choquet expected utility model where non-additive probability measures reflect the ambiguity aversion of agents. The model is flexible and distinguishes pessimistic players that, in the presence of ambiguity, emphasize the lower payoffs, and optimistic players that instead emphasize the higher ones. Different is instead the uncertainty about the environment of a game, which implies partial knowledge by players of their own payoff function. In the simplest case the payoff function depends on some parameters, the values of which are unknown. A player may have some ambiguous knowledge about the probability distribution of the uncertain parameters, for example he/she knows that the parameters with unknown values are drawn according to a set of possible probability distributions, or he/she may know that the possible realization of the uncertain parameters are described by deterministic sets. In an attempt to describe the uncertainty aversion of agents, \cite{GilboaSchmeidler1989} axiomatized a choice model, the maxmin expected utility model mentioned above, which mirrors in the soft robust-optimization technique when uncertainty is represented by ambiguous knowledge about the probability distribution of the parameters with unknown values, see \cite{Ben-TalGhaouiNemirovski2009}, and in the particular case of robust-optimization when parameters with unknown values are described by deterministic sets, see \cite{BenTalNemirovski1998}. Adopting the soft robust-optimization approach, an agent maximizes his guaranteed expected payoff, a measure of robustness against distributional variation, while in the robust-optimization approach an agent maximizes its guaranteed payoff. The last one is useful to accommodate bounded uncertainty and it can be seen as a special case of the previous model, which can be employed also when the possible realizations of the parameters with unknown values are not bounded.

In a context of bounded-uncertainty about the possible realizations of the values of the parameters of the payoff functions, \cite{AghassiBertsimas2006} and \cite{CrespiRadiRocca2017} employ a robust-optimization approach to ambiguity in which the agents rely on deterministic-set representations of uncertainty instead of ambiguous probabilistic descriptions of the environment of the game. A game where agents are uncertain about the shape of their own payoff function, as well as the ones of their opponents, and have an ambiguity aversion expressed by the robust-optimization approach to uncertainty, is denoted \emph{robust game}. The robust-game framework offers a distribution-free equilibrium concept, known as \emph{robust-optimization equilibrium}. 
It is a distribution-free equilibrium concept and it is less conservative than the ex-post equilibrium concept, in fact the first one implies the second one but not the vice versa, see  \cite{AghassiBertsimas2006}. The advantages are an existence theorem, computational methods offered by the robust-optimization techniques developed in the last two decades, see, e.g., \cite{BenTalNemirovski1998} and \cite{Ben-TalGhaouiNemirovski2009}, and theoretical results showing the relations with other equilibrium concepts in game theory. Regarding this last aspect, in \cite{CrespiRadiRocca2017} it has been shown that a robust-optimization equilibrium is an $\epsilon$-Nash equilibrium of the nominal counterpart game, that is of the game without uncertainty.

The aim of the current paper is to measure the sensitivity of the equilibrium outputs of a robust game with respect to uncertainty. This is a relevant aspect in forecasting the result of a game and in policy analysis as players have a subjective perception and representation of uncertainty and an erroneous specification of this uncertainty is possible. In this respect, the contribution of the current paper is new and alternative to the previous ones that are mainly focused on studying the output of a game once the uncertainty is specified. To study the sensitivity to uncertainty, a new modeling framework is proposed where uncertainty is characterized by the shape of the uncertainty set and by the level of uncertainty. This representation of the uncertainty set is crucial to study the effect of the ambiguity aversion. By focusing on the sensitivity towards the level of uncertainty, the theoretical insights underline that the equilibrium output of a robust game is an $\epsilon$-Nash equilibrium of the nominal counterpart game. Moreover, the $\epsilon$ waiver of an extra payoff required to play an $\epsilon$-Nash equilibrium measures the opportunity cost of uncertainty, or cost of uncertainty aversion. Therefore, the robust optimization equilibria of a robust game are confined in the set of $\epsilon$-Nash equilibria of the nominal counterpart game. In addition, any $\epsilon$-Nash equilibrium can be seen as a robust-optimization equilibrium assuming that players are affected by uncertainty in a suitable way. The result provides a theoretical foundation for the $\epsilon$ waiver of an extra payoff required to play an $\epsilon$-Nash equilibrium in terms of opportunity cost of uncertainty. The theoretical foundation proposed has the advantage of offering a criterion of choice for $\epsilon$-Nash equilibria, in the sense that a specific characterization of the uncertainty set, in terms of shape and level of uncertainty, is usually consistent with only a subset of the set of all $\epsilon$-Nash equilibria of a nominal game. Under certain conditions on the payoff functions of a game, the sensitivity to the level of uncertainty is also characterized in terms of robust-optimization equilibria that converge smoothly towards Nash equilibria when uncertainty vanishes.

The equilibrium outputs of a robust game are not only sensitive to the level of uncertainty but also to the shape of the uncertainty set, which are the two ingredients that define the uncertainty of a player. In the spirit of Knight, see \cite{Knight1921}, the level of uncertainty can be interpreted as the amount of confidence that a player has on his/her knowledge of the true values of the parameters of the payoff function, which is the essential of ambiguity modeling. On the contrary, the shape of uncertainty set represents a further degree of freedom offered by the robust games that is not available in other modeling frameworks that account for ambiguity in a context of strategic interaction. This second degree of freedom makes robust games a flexible modeling framework, able to replicate complicated equilibrium configurations with very simple modeling assumptions. In fact, for certain shapes of the uncertainty sets and assuming a sufficiently high level of uncertainty, the equilibrium outputs of a robust game may be qualitatively different from the ones of the nominal counterpart game. Special equilibrium outputs can be obtained even by simple geometric configurations of the uncertainty sets. A robust version of the Cournot duopoly model as in \cite{SinghVives1984}, underlines that uncertainty, in the form of a simple uncertainty set, leads to multiple robust-optimization equilibria, while the nominal version of the game admits a unique Nash equilibrium. Moreover, the Nash equilibrium is symmetric, the firms produce the same output, while the set of robust-optimization equilibria includes also asymmetric equilibrium outputs. The multiple robust-optimization equilibria arise thanks to non monotonic decreasing best-reply functions, which is a feature that can be observed in a nominal game only by assuming complicated payoff functions. See the stream of research that sparked after the seminal contribution in \cite{Rand1978}. In this respect, the example underlines the modeling flexibility offered by robust games. 

The road map of the paper is as follows. Section \ref{RobustGames} introduces the basic concepts of robust game theory. Section \ref{ExistenceROE} provides existence results for robust-optimization equilibria. Section \ref{ThandFoundation} introduces the concept of opportunity cost of uncertainty and contains the main theoretical insights on robust games. A simple example is provided where a symmetric two-player game with linear-quadratic payoff function depending on a parameter with unknown value can generate multiple and asymmetric robust-optimization equilibria, where only one of the many converges smoothly to a Nash equilibrium (it has a Nash equilibrium counterpart). Section \ref{Application} introduces a robust version of a duopoly model similar to the one in \cite{SinghVives1984} and provides results about the existence of robust-optimization equilibria.
Section \ref{Conc} concludes and provides indications for future research directions in robust-game theory. \ref{Ap:T} contains the proofs of the theoretical results in Sections \ref{ExistenceROE} and \ref{ThandFoundation}. \ref{AlgorithmBR} contains an algorithm to compute the worst-case best reply functions when the uncertainty sets are of polyhedral shape. \ref{AppendixB} contains the proofs of the conditions for the existence of the robust-optimization equilibria of the robust duopoly model in Section \ref{Application}.

\section{Robust games and equilibria}\label{RobustGames}

\bigskip

In an incomplete-information game, it is possible to distinguish uncertainty about opponents, considered in the ambiguous games introduced in \cite{Marinacci2000}, and uncertainty about your own payoff function or payoff environment, see, e.g., \cite{AghassiBertsimas2006} and \cite{CrespiRadiRocca2017}. 
Considering finite-person, non-cooperative, simultaneous-move, one-shot games only, we focus on payoff uncertainty and we consider incomplete-information games where payoff functions depend on some parameters, the values of which are not known in advance. Players are aware of this uncertainty and they have perfect knowledge of the set of all possible realizations of the uncertain parameters, the so called \emph{uncertainty set}. Employing the information about the uncertainty set, each player chooses the action that maximizes his/her own maximum guaranteed payoff. Such a player is denoted robust player, and a game populated by robust players is called a robust game.

More formally, we assume a finite set $\mathcal{N}=\left\{1,2,\ldots,n\right\}$ of players and $A_{i}$ is the action space of player $i$. Set $A = \times_{i=1}^{n}A_{i}$, we define by $f_{i}: W^{\delta_{i}}_{i} \times A \rightarrow \mathbb{R}$ the payoff function of player $i$ which is known except for the value of a parameter vector $\boldsymbol{\alpha}_{i}\in W^{\delta_{i}}_{i}$. Denoted by $f_{i}\left(\boldsymbol{\alpha}_{i};x_{i},\mathbf{x}_{-i}\right)$, the payoff function of a robust player $i$ depends on the realization of the uncertain parameter vector $\boldsymbol{\alpha}_{i}\in W^{\delta_{i}}_{i}$, on the player $i$'s action $x_{i}\in A_{i}$ and on the action of his/her opponents $\mathbf{x}_{-i}\in A_{-i}=\times_{j\in \mathcal{N};j\neq i}A_{j}$. Here, $W^{\delta_{i}}_{i}$ is a deterministic uncertainty set depending on a parameter $\delta_{i}\in\left[0,1\right]$ and defined as $W^{\delta_{i}}_{i}=\delta_{i} U_{i}+\left(1-\delta_{i}\right)\boldsymbol{\alpha}^{0}_{i}$, where $U_{i}$ is a closed and bounded deterministic set representing all possible realizations of the vector parameter $\boldsymbol{\alpha}_{i}$, while $\boldsymbol{\alpha}^{0}_{i}\in U_{i}$ is a singleton. According to this setting and given an action profile $\left(x_{i},\mathbf{x}_{-i}\right)$, player $i$'s vagueness about payoff is represented by the set $\left\{\left . f_{i}\left(\boldsymbol{\alpha}_{i};x_{i},\mathbf{x}_{-i}\right) \right |\boldsymbol{\alpha}_{i} \in W^{\delta_{i}}_{i}\right\}$ and $\delta_{i}$ measures the level of uncertainty, which vanishes when $\delta_{i}=0$ and it is maximum when $\delta_{i} = 1$.\footnote{In a generalized version of this model $U_{i}$ may be a space of probability distributions representing vagueness of player $i$ and $\boldsymbol{\alpha}^{0}_{i}$ a single probability distribution (in case of absence of vagueness) from which the unknown parameters (parameters with unknown values) of the payoff function are drawn.} Denoting a game without uncertainty a \emph{nominal game}, the proposed representation of the uncertainty set allows to define a unique nominal counterpart of a robust game, i.e. the nominal game obtained when $\max_{i\in \mathcal{N}}\delta_{i}=0$.
This latter game is called the \emph{nominal counterpart} of the robust game.\footnote{Note that for each robust game defined as in Definition \ref{Def::RG} there is a unique nominal counterpart game, i.e. a game with same players, same action space and same payoff functions but no uncertainty. However, from a nominal game it is possible to define an infinite variety of robust games because infinite are the possibilities to define the uncertainty sets. Therefore, we avoid the term \emph{robust counterpart}.}

In a robust game, each player $i$ is ambiguity averse (or uncertainty averse) and determines his/her action $x_{i}$ by maximizing his/her \emph{worst-case payoff function}:
\begin{equation}\label{WCWPF}
\rho^{\delta_{i}}_{i}\left(x_{i},\mathbf{x}_{-i}\right)\triangleq \min_{\boldsymbol{\alpha}_{i}\in W^{\delta_{i}}_{i}}f_{i}\left(\boldsymbol{\alpha}_{i};x_{i},\mathbf{x}_{-i}\right)
\end{equation}
The actions that players undertake by maximizing their worst-case payoff functions are denoted \emph{robust-optimization strategies}. In case uncertainty vanishes, the robust-optimization strategies become the well-known Nash strategies.

A robust optimization strategy derives from an extreme form of ambiguity aversion, in the sense that any possible probability knowledge assigned to a player would not make an action profile less appealing to him/her than the worst-case approach to uncertainty. This extreme uncertainty attitude may appear paranoid in the sense that it is equivalent to assuming that \emph{nature} will choose a realization of the unknown values of the parameters as if to spite the player. This extreme feeling can be, however, relaxed in the modeling framework here proposed. By interpreting $\boldsymbol{\alpha}^{0}_{i}$ as the real realization of the vector of parameters and inspired by \cite{Knight1921}, the level of uncertainty of a player can be seen as the level of confidence that the player has on his/her knowledge of the true values of the parameters of the payoff function. Then, the ambiguity aversion of a player can be measured by his/her level of uncertainty, and despite the worst-case approach to uncertainty, we can assume that a player is not ambiguity averse by assuming a level of uncertainty equal to zero.


To underline this aspect, note that a worst-case payoff function depends on the level of uncertainty and has the following property.

\medskip

\begin{property}\label{Prop1}
The worst-case payoff functions are such that $\rho^{\delta_{i}^{1}}_{i}\left(x_{i},\mathbf{x}_{-i}\right)\geq \rho^{\delta_{i}^{2}}_{i}\left(x_{i},\mathbf{x}_{-i}\right)$, for all $\delta_{i}^{2}>\delta_{i}^{1}$, $\delta_{i}^{1},\delta_{i}^{2}\in\left[0,1\right]$, $\forall \left(x_{i},\mathbf{x}_{-i}\right)\in A$ and $\forall i\in \mathcal{N}$. 
\end{property}

\medskip

Assuming that the worst-case payoff functions are \emph{well defined}, i.e. the minimum of functions $f_{i}$ with respect to $\boldsymbol{\alpha}_{i}\in W^{\delta_{i}}_{i}$ exists and is finite for all $\left(x_{i},\mathbf{x}_{-i}\right)\in A$, a robust game is defined as follows.

\medskip

\begin{definition}\label{Def::RG}
Denoted by $\left\{ A_{i},f_{i},W_{i}^{\delta_{i}} : i \in \mathcal{N} \right\}$, a robust game is a normal form game $\left\{ A_{i},f_{i} : i \in \mathcal{N} \right\}$ and an $n$-tuple $\left\{W_{i}^{\delta_{i}}\right\}_{i=1}^{n}$ of uncertainty sets, such that player $i$ possible payoffs related to action $\left(x_{i},\mathbf{x}_{-i}\right)$ belong to $\left\{\left . f_{i}\left(\boldsymbol{\alpha}_{i};x_{i},\mathbf{x}_{-i}\right) \right |\boldsymbol{\alpha}_{i} \in W_{i}^{\delta_{i}}\right\}$ and he/she maximizes his/her own worst-case payoff function $\rho^{\delta_{i}}_{i}\left(x_{i},\mathbf{x}_{-i}\right)$.
\end{definition}

\medskip

According to this definition and by defining \emph{equivalent} two games that have the same players, the same action spaces and players' behavior is independent of which of the two games is considered, we have that a robust game is equivalent to a nominal game when for each player the worst-case payoff function of the robust game coincides with the payoff function of the nominal game. That is, the robust game $\left\{ A_{i},f_{i},W_{i}^{\delta_{i}} : i \in \mathcal{N}\right\}$ is equivalent to the nominal game $\left\{ A_{i},\rho^{\delta_{i}}_{i}: i \in \mathcal{N} \right\}$.\footnote{Note that for each robust game there is a unique equivalent nominal game, while for each nominal game there are an infinite variety of equivalent robust games as the same worst-case payoff function can be obtained by different combinations of payoff functions and uncertainty sets.} Using this correspondence between a robust game and a nominal game, \cite{AghassiBertsimas2006} and \cite{CrespiRadiRocca2017} introduce the following equilibrium notion for robust games.

\medskip

\begin{definition}\label{definition::ROE}
A robust-optimization equilibrium (ROE in short) of a robust game $\left\{ A_{i},f_{i},W_{i}^{\delta_{i}} : i \in \mathcal{N} \right\}$ is a Nash equilibrium of the game $\left\{ A_{i},\rho^{\delta_{i}}_{i}: i \in \mathcal{N} \right\}$.
\end{definition}

\medskip

\section{The existence of a robust-optimization equilibrium}\label{ExistenceROE}

The definition of robust-optimization equilibrium emphasizes the similarities with the Nash equilibria in many respects. Indeed, once the worst-case payoff functions are derived, searching for a robust-optimization equilibrium of a robust game is equivalent to searching for a Nash equilibrium. Therefore, the same techniques and algorithms can be used. Moreover, the problem of existence for a robust-optimization equilibrium is equivalent to the problem of existence for a Nash equilibrium. Specifically, when the usual properties (see, e.g., \cite{Nash1950}) imposed on the payoff functions for the existence of a Nash equilibrium in a nominal game are satisfied by the worst-case payoff functions of a robust game, then a robust-optimization equilibrium exists for that robust game.

\medskip

\begin{theorem}\label{EquilibriumExistenceROEandNASHequivalentGame}
Consider a finite-person, non-cooperative, simultaneous-move, one-shot robust game $\left\{ A_{i},f_{i},W_{i}^{\delta_{i}} : i \in \mathcal{N} \right\}$, in which there is no private information. 
Assume that $A_{i}$ is a non-empty, closed, bounded and convex subset of an Euclidean space for all $i\in\mathcal{N}$. Moreover, assume that the game is equivalent to the nominal game $\left\{ A_{i},\rho_{i}^{\delta_{i}} : i \in \mathcal{N} \right\}$ with the worst-case payoff functions that are continuous in $A_{i}\times A_{-i}$, and concave in $A_{i}$ for every $\mathbf{x}_{-i}\in A_{-i}$. Then, a robust-optimization equilibrium of the robust game $\left\{ A_{i},f_{i},W_{i}^{\delta_{i}} : i \in \mathcal{N} \right\}$ exists.
\end{theorem}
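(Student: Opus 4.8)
The plan is to reduce the existence of a robust-optimization equilibrium entirely to the classical Nash existence theorem applied to the equivalent nominal game $\left\{ A_{i},\rho_{i}^{\delta_{i}} : i \in \mathcal{N} \right\}$. By Definition \ref{definition::ROE}, a robust-optimization equilibrium of the robust game is, by construction, nothing other than a Nash equilibrium of this nominal game whose payoff functions are the worst-case payoff functions $\rho_{i}^{\delta_{i}}$. Hence it suffices to verify that the game $\left\{ A_{i},\rho_{i}^{\delta_{i}} : i \in \mathcal{N} \right\}$ satisfies the standard hypotheses of a Nash-type existence result (the Debreu--Glicksberg--Fan generalization of \cite{Nash1950} to concave games with continuous payoffs on compact convex action sets), and then quote that result to conclude.

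First I would recall the precise regularity conditions required: each action set $A_{i}$ should be a non-empty, compact, convex subset of a Euclidean space, and each payoff $\rho_{i}^{\delta_{i}}$ should be continuous jointly on $A = A_{i}\times A_{-i}$ and quasi-concave (here concave suffices) in the player's own variable $x_{i}$ for every fixed $\mathbf{x}_{-i}\in A_{-i}$. The theorem's hypotheses grant exactly these: $A_{i}$ is assumed non-empty, closed, bounded and convex, and in a Euclidean space closed-and-bounded is equivalent to compact; continuity and own-concavity of $\rho_{i}^{\delta_{i}}$ are assumed directly. With these conditions in hand, the standard argument proceeds by forming the best-reply correspondence $B_{i}(\mathbf{x}_{-i}) = \arg\max_{x_{i}\in A_{i}} \rho_{i}^{\delta_{i}}(x_{i},\mathbf{x}_{-i})$, observing that it is non-empty (continuity plus compactness), convex-valued (own-concavity), and upper hemicontinuous with closed graph (Berge's maximum theorem), and then applying Kakutani's fixed-point theorem to the product correspondence to obtain a fixed point, which is precisely a Nash equilibrium of the nominal game and therefore a robust-optimization equilibrium.

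Strictly speaking, there is nothing to prove beyond invoking this machinery, so the main point is organizational rather than technical: the whole force of the statement lies in the \emph{equivalence} supplied by Definition \ref{Def::RG} and Definition \ref{definition::ROE}, which lets us replace the worst-case minimization problem embedded in each $\rho_{i}^{\delta_{i}}$ by a single nominal game to which classical theory applies verbatim. The one place demanding a little care is ensuring the hypotheses are genuinely about the \emph{worst-case} functions $\rho_{i}^{\delta_{i}}$ and not the underlying $f_{i}$; the theorem is stated so that continuity and own-concavity are imposed directly on $\rho_{i}^{\delta_{i}}$, so I would simply note that the assumptions are made at the right level and avoid the temptation to derive them from properties of $f_{i}$ and the uncertainty sets $W_{i}^{\delta_{i}}$. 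The anticipated ``obstacle'' is thus merely to phrase the reduction cleanly and cite the appropriate fixed-point version of Nash's theorem; the rest is immediate.
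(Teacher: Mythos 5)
Your proposal is correct and follows essentially the same route as the paper: the paper's proof (adapted from \cite{AghassiBertsimas2006}) likewise treats a robust-optimization equilibrium as a Nash equilibrium of the game with payoffs $\rho_i^{\delta_i}$, builds the best-reply correspondence, verifies non-emptiness (Weierstrass), convex-valuedness (concavity) and upper semi-continuity (a direct sequence argument where you invoke Berge), and applies Kakutani's fixed-point theorem. The only difference is presentational: the paper writes out the Kakutani verification in full rather than citing the Debreu--Glicksberg--Fan theorem as a black box.
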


\medskip

The existence result in Theorem \ref{EquilibriumExistenceROEandNASHequivalentGame} is based on the assumption that the worst-case payoff functions satisfy the usual properties, that means continuity and concavity. Instead of considering the worst-case payoff functions, the conditions for the existence of a robust-optimization equilibrium can be imposed directly on the payoff functions that define a robust game.


\medskip

\begin{assumption}\label{Ass1}
Let $A_{i}$, with $i\in\mathcal{N}$, be subsets of Euclidean spaces. We assume that:
\begin{itemize}
\item $A_{i}$ is a non-empty, closed, bounded, and convex set, for all $i\in \mathcal{N}$;
\item $U_{i}\subset \mathbb{R}^{\nu_{i}}$ is a non-empty, closed, bounded, and convex set, for all $i\in \mathcal{N}$, where $\nu_{i}$ is the number of entries in vector $\boldsymbol{\alpha}_{i}$;
\item $f_{i}$ are continuous on $W^{\delta_{i}}_{i}\times A$;
\item $f_{i}$ are concave in $x_{i}$, $\forall\boldsymbol{\alpha}_{i}\in W^{\delta_{i}}_{i}$.
\end{itemize}
\end{assumption}

\medskip

These restrictions are imposed hereafter and they ensure that the worst-case payoff functions satisfy the conditions imposed in Theorem \ref{EquilibriumExistenceROEandNASHequivalentGame} for the existence of a robust-optimization equilibrium.

\medskip

\begin{lemma}\label{PropWorstCasePayoffFunctions}
Under the conditions imposed in Assumption \ref{Ass1}, for all $i\in\mathcal{N}$ the following properties hold true:
\begin{itemize}
\smallskip
\item[-] $\rho^{\delta_{i}}_{i}:A\rightarrow \mathbb{R}$ is a continuous function;
\smallskip
\item[-] $\rho^{\delta_{i}}_{i}\left(\cdot,\mathbf{x}_{-i}\right)$ is concave for every $\mathbf{x}_{-i}\in A_{-i}$.
\end{itemize}
\end{lemma}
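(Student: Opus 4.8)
The plan is to treat the two claimed properties separately, since concavity is essentially immediate while continuity is where the real work lies.

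First I would record a preliminary fact used throughout: the uncertainty set $W^{\delta_{i}}_{i}=\delta_{i}U_{i}+\left(1-\delta_{i}\right)\boldsymbol{\alpha}^{0}_{i}$ is nonempty, closed, bounded and convex. This is because it is the image of $U_{i}$ under the affine map $\boldsymbol{\alpha}\mapsto\delta_{i}\boldsymbol{\alpha}+\left(1-\delta_{i}\right)\boldsymbol{\alpha}^{0}_{i}$, and such maps carry the nonempty compact convex set $U_{i}$ guaranteed by Assumption \ref{Ass1} to a set with the same properties. In particular $W^{\delta_{i}}_{i}$ is compact, so that, together with the continuity of $f_{i}$ in $\boldsymbol{\alpha}_{i}$, the minimum defining $\rho^{\delta_{i}}_{i}$ is attained for every $\left(x_{i},\mathbf{x}_{-i}\right)\in A$.

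For concavity, fix $\mathbf{x}_{-i}\in A_{-i}$. By Assumption \ref{Ass1}, for each fixed $\boldsymbol{\alpha}_{i}\in W^{\delta_{i}}_{i}$ the map $x_{i}\mapsto f_{i}\left(\boldsymbol{\alpha}_{i};x_{i},\mathbf{x}_{-i}\right)$ is concave on the convex set $A_{i}$. Since the pointwise infimum of an arbitrary family of concave functions is concave, the function
\[
\rho^{\delta_{i}}_{i}\left(\cdot,\mathbf{x}_{-i}\right)=\inf_{\boldsymbol{\alpha}_{i}\in W^{\delta_{i}}_{i}}f_{i}\left(\boldsymbol{\alpha}_{i};\cdot,\mathbf{x}_{-i}\right)
\]
is concave on $A_{i}$. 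Explicitly, for $x_{i},x_{i}'\in A_{i}$ and $\lambda\in\left[0,1\right]$ one has $f_{i}\left(\boldsymbol{\alpha}_{i};\lambda x_{i}+\left(1-\lambda\right)x_{i}',\mathbf{x}_{-i}\right)\ge\lambda\rho^{\delta_{i}}_{i}\left(x_{i},\mathbf{x}_{-i}\right)+\left(1-\lambda\right)\rho^{\delta_{i}}_{i}\left(x_{i}',\mathbf{x}_{-i}\right)$ for every $\boldsymbol{\alpha}_{i}$, and taking the infimum over $\boldsymbol{\alpha}_{i}$ on the left gives the inequality for $\rho^{\delta_{i}}_{i}$. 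This part poses no difficulty.

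The substantive step is continuity, which I would obtain from Berge's Maximum Theorem. The key observation is that the feasible set of the inner minimization, $W^{\delta_{i}}_{i}$, does \emph{not} depend on the action profile: the constraint correspondence is constant, hence trivially both upper and lower hemicontinuous with nonempty compact values. Since $f_{i}$ is continuous on $W^{\delta_{i}}_{i}\times A$, Berge's theorem yields that the value function $\left(x_{i},\mathbf{x}_{-i}\right)\mapsto\min_{\boldsymbol{\alpha}_{i}\in W^{\delta_{i}}_{i}}f_{i}\left(\boldsymbol{\alpha}_{i};x_{i},\mathbf{x}_{-i}\right)=\rho^{\delta_{i}}_{i}\left(x_{i},\mathbf{x}_{-i}\right)$ is continuous on $A$. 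For a self-contained argument I would instead fix a convergent sequence $\left(x_{i}^{k},\mathbf{x}_{-i}^{k}\right)\to\left(x_{i},\mathbf{x}_{-i}\right)$ and establish the two bounds $\limsup_{k}\rho^{\delta_{i}}_{i}\left(x_{i}^{k},\mathbf{x}_{-i}^{k}\right)\le\rho^{\delta_{i}}_{i}\left(x_{i},\mathbf{x}_{-i}\right)$ and $\liminf_{k}\rho^{\delta_{i}}_{i}\left(x_{i}^{k},\mathbf{x}_{-i}^{k}\right)\ge\rho^{\delta_{i}}_{i}\left(x_{i},\mathbf{x}_{-i}\right)$: the first by evaluating $f_{i}$ along the sequence at a minimizer for the limit point, the second by extracting from the compact $W^{\delta_{i}}_{i}$ a convergent subsequence of minimizers and invoking joint continuity of $f_{i}$. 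The only care needed is the routine subsequence-of-a-subsequence bookkeeping in the lower bound; this is precisely where compactness of $W^{\delta_{i}}_{i}$ is genuinely used, and it constitutes the main (though mild) obstacle of the proof.
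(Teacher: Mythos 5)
Your proposal is correct and follows essentially the same route as the paper: continuity via Berge's Maximum Theorem (with the constant, compact-valued constraint correspondence $W^{\delta_{i}}_{i}$) and concavity via the fact that a pointwise infimum of concave functions is concave, which the paper writes out through the minimizer $\boldsymbol{\alpha}^{*}_{i}$ at the convex combination. Your added remarks --- that $W^{\delta_{i}}_{i}$ inherits compactness and convexity from $U_{i}$ as its image under an affine map, and the sketched direct sequence argument for continuity --- are correct supplements but do not change the substance of the argument.
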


\medskip

The results in Lemma \ref{PropWorstCasePayoffFunctions} indicate that the conditions imposed in Assumption \ref{Ass1} are a particular case of the more general sufficient conditions imposed in Theorem \ref{EquilibriumExistenceROEandNASHequivalentGame} for the existence of a robust-optimization equilibrium. However, the conditions imposed in Assumption \ref{Ass1} regard directly the elements that define a robust game. Therefore, they allow a direct comparison with the necessary conditions imposed by the Nash's Theorem, see \cite{Nash1950} (and \cite{Dutang2013} for a recent survey). The comparison underlines that the sufficient conditions to impose on the payoff function for the existence of a robust-optimization equilibrium are stronger than the ones required for the existence of a Nash equilibrium. In fact, the continuity of the payoff function is also required with respect to the uncertainty set, and the concavity is required for all possible realizations of the unknown values of the parameters.

The results in Lemma \ref{PropWorstCasePayoffFunctions}, also discussed in \cite{CrespiRadiRocca2017} but without a formal proof, indicate therefore that the restrictions in Assumption \ref{Ass1} are sufficient (but not necessary) to ensure the existence of a robust-optimization equilibrium as well as a Nash equilibrium for the nominal counterpart game as stated in the following theorem, the proof of which is in \ref{Ap:T}.

\medskip

\begin{theorem}\label{EquilibriumExistenceROE}
Consider a finite-person, non-cooperative, simultaneous-move, one-shot robust game $\left\{ A_{i},f_{i},W_{i}^{\delta_{i}} : i \in \mathcal{N} \right\}$, in which there is no private information. 
Under Assumption \ref{Ass1}, we have that:
\begin{itemize}
\smallskip
\item[-] The robust game has at least a robust-optimization equilibrium;
\smallskip
\item[-] All robust games $\left\{ A_{i},f_{i},W_{i}^{\delta^{+}_{i}} : i \in \mathcal{N} \right\}$, with $\delta^{+}_{i}\in\left[0,\delta_{i}\right)$, have a robust-optimization equilibrium.
\end{itemize}
\end{theorem}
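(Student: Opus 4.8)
The plan is to reduce Theorem \ref{EquilibriumExistenceROE} to the already-established existence result of Theorem \ref{EquilibriumExistenceROEandNASHequivalentGame}, using Lemma \ref{PropWorstCasePayoffFunctions} as the bridge. The strategy for both bullet points is identical: verify that Assumption \ref{Ass1} implies the hypotheses of Theorem \ref{EquilibriumExistenceROEandNASHequivalentGame} for the relevant robust game, and then simply invoke that theorem.

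For the first bullet, I would argue as follows. Assumption \ref{Ass1} directly gives that each $A_i$ is non-empty, closed, bounded and convex, so the action-space hypothesis of Theorem \ref{EquilibriumExistenceROEandNASHequivalentGame} is immediately satisfied. The content of Lemma \ref{PropWorstCasePayoffFunctions} is precisely that, under Assumption \ref{Ass1}, each worst-case payoff function $\rho^{\delta_i}_i$ is continuous on $A = A_i \times A_{-i}$ and concave in $x_i$ for every $\mathbf{x}_{-i}\in A_{-i}$. These are exactly the regularity conditions that Theorem \ref{EquilibriumExistenceROEandNASHequivalentGame} requires of the worst-case payoff functions. Hence the robust game $\left\{ A_i,f_i,W_i^{\delta_i} : i\in\mathcal{N}\right\}$ is equivalent to the nominal game $\left\{ A_i,\rho^{\delta_i}_i : i\in\mathcal{N}\right\}$ satisfying all hypotheses of Theorem \ref{EquilibriumExistenceROEandNASHequivalentGame}, and a robust-optimization equilibrium exists.

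For the second bullet, the key observation is that replacing $\delta_i$ by any $\delta^+_i\in[0,\delta_i)$ does not disturb Assumption \ref{Ass1}. The point is that the uncertainty set $W^{\delta^+_i}_i = \delta^+_i U_i + (1-\delta^+_i)\boldsymbol{\alpha}^0_i$ is a convex combination of points of $U_i$ (recalling $\boldsymbol{\alpha}^0_i\in U_i$), so by convexity of $U_i$ one has $W^{\delta^+_i}_i \subseteq U_i$; in particular each $f_i$ remains continuous on $W^{\delta^+_i}_i\times A$ and concave in $x_i$ for every $\boldsymbol{\alpha}_i\in W^{\delta^+_i}_i$, since these properties were assumed over the larger set $U_i\supseteq W^{\delta_i}_i \supseteq W^{\delta^+_i}_i$. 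Thus Assumption \ref{Ass1} holds verbatim for the robust game with uncertainty level $\delta^+_i$, and the first bullet applies to each such game, yielding a robust-optimization equilibrium for every $\delta^+_i\in[0,\delta_i)$.

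I expect no serious obstacle, as the theorem is essentially a corollary of the preceding lemma and existence theorem; the only substantive step is the set-inclusion $W^{\delta^+_i}_i\subseteq U_i$ that transports the continuity and concavity assumptions from the full uncertainty set down to every smaller one. The genuine mathematical work — proving that the $\min$ over a compact set preserves continuity and concavity — has already been discharged in Lemma \ref{PropWorstCasePayoffFunctions}, so here I would simply cite it rather than reprove it.
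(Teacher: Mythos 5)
Your proposal is correct and follows essentially the same route as the paper: both bullets are reduced to Theorem \ref{EquilibriumExistenceROEandNASHequivalentGame} via Lemma \ref{PropWorstCasePayoffFunctions}, and the second bullet rests on the nesting of uncertainty sets (the paper uses $W^{\delta^{+}_{i}}_{i}\subseteq W^{\delta_{i}}_{i}$, you use the equivalent chain $W^{\delta^{+}_{i}}_{i}\subseteq W^{\delta_{i}}_{i}\subseteq U_{i}$, both of which follow from convexity of $U_{i}$ and $\boldsymbol{\alpha}^{0}_{i}\in U_{i}$). The only nitpick is that Assumption \ref{Ass1} states continuity and concavity of $f_{i}$ over $W^{\delta_{i}}_{i}$ rather than over all of $U_{i}$ as you phrase it, but since the inclusion you need is into $W^{\delta_{i}}_{i}$ anyway, the argument is unaffected.
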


\medskip

The second point in Theorem \ref{EquilibriumExistenceROE} indicates that if a robust game has a robust-optimization equilibrium because the sufficient conditions imposed in Assumption \ref{Ass1} are satisfied, then reducing the level of uncertainty, the existence of at least a robust-optimization equilibrium, therefore the existence of the Nash equilibrium of the nominal counterpart game, are guaranteed. This result does not imply that a robust-optimization equilibrium for a robust game ensures the existence of at least a Nash equilibrium for the nominal counterpart game, or the existence of at least a robust-optimization equilibrium when the level of uncertainty shrinks. In fact, the existence of a robust-optimization equilibrium for a robust game that does not satisfy the conditions in Assumption \ref{Ass1} is also possible, and in this case the nominal counterpart game may not have a Nash equilibrium.

A further remark on the existence of a robust-optimization equilibrium is required. Due to the ambiguity aversion of players, specifically due to their robust or worst-case attitude towards uncertainty, some possible realizations of the uncertain parameters may be irrelevant for the robust game. To be precise, let us define the \emph{worst-case frontier} of an uncertainty set $W_{i}^{\delta_{i}}$ as that set of values of the unknown parameters that are the worst-case realization of the uncertainty set for at least a given action profile of players:
\begin{equation}
\left\{\boldsymbol{\alpha}^{*}_{i} \ | \  \exists \left(x_{i},\mathbf{x}_{-i}\right) \in A \quad \text{s.t.} \quad \boldsymbol{\alpha}^{*}_{i}=\arg\min_{\boldsymbol{\alpha}_{i}\in W^{\delta_{i}}_{i}}f_{i}\left(\boldsymbol{\alpha}_{i}; x_{i},\mathbf{x}_{-i}\right) \right\}
\end{equation}
Then, it is obvious that two robust games that differ for the uncertainty sets only, but the uncertainty sets have the same worst-case frontiers, are equivalent games according to the definition of equivalent games introduced above. The consequence is that the two robust games will have the same robust-optimization equilibria and the players will behave exactly in the same way, whatever the action of the opponents. Moreover, for the existence of a robust-optimization equilibrium it is sufficient that the conditions in Assumption \ref{Ass1} hold true for the worst-case frontiers of the uncertainty sets only. In other words, the knowledge of the worst-case frontier is the only information on an uncertainty set which is relevant to characterize a robust game.

\section{Foundations and theoretical insights}\label{ThandFoundation}

Uncertainty implies a loss of profit for player $i$ which determines the so-called \emph{opportunity cost of uncertainty} for player $i$.\footnote{Since in a robust game players are averse to uncertainty and the loss of profit caused by uncertainty depends on this attitude, the opportunity cost of uncertainty can be interpreted as the cost of aversion to uncertainty.} Given the actions of player $i$'s opponents, the opportunity cost of uncertainty for players $i$ measures the extra profit that player $i$ gains if the uncertainty about his/her payoff function vanishes. Therefore, given $\mathbf{x}_{-i}$ the vector of actions of player $i$'s opponents and let $\delta_{i}$ be the level of uncertainty for player $i$, the opportunity cost of uncertainty for player $i$ is
\begin{equation}\label{Ebm}
C^{\delta_{i}}_{i}\left(\mathbf{x}_{-i}\right):=\max_{x_{i}\in A_{i}}\rho^{0}_{i}\left(x_{i},\mathbf{x}_{-i}\right)-\rho^{0}_{i}\left(x^{+}_{i}\left(\delta_{i}\right),\mathbf{x}_{-i}\right)\text{; }\quad  \text{where} \quad x^{+}_{i}\left(\delta_{i}\right)\in \arg\max_{x_{i}}\rho^{\delta_{i}}_{i}\left(x_{i},\mathbf{x}_{-i}\right)
\end{equation}

According to the definition, for each given set of actions of the opponents, player $i$'s opportunity cost of uncertainty is the difference between the payoff player $i$ would obtain by maximizing his/her nominal payoff function and the value of the nominal payoff function in correspondence of the action that maximizes his/her worst-case payoff function. Consider a robust game where payoff uncertainty involves player $i$ only, which has a robust-optimization equilibrium and the nominal counterpart game admits a Nash equilibrium. In the case in which the opponents play their Nash equilibrium strategies and these strategies do not depend on the action of player $i$, the opportunity cost of uncertainty for player $i$ represents the loss that he/she undergoes when a robust-optimization equilibrium is played instead of a Nash equilibrium of the nominal counterpart game. More generally, the opportunity cost of uncertainty is not the extra payoff player $i$ obtains in the nominal counterpart game. In fact, the Nash-equilibrium strategies of player $i$'s opponents are not independent of player $i$'s action and, therefore, of player $i$'s payoff uncertainty. Moreover, in a robust game where payoff uncertainty involves also player $i$'s opponents, player $i$'s opponents play their robust-optimization strategies rather than their Nash-equilibrium strategies. Therefore, the opportunity cost of uncertainty for player $i$ is the extra profit he/she obtains when the actions of the opponents are fixed and his/her uncertainty vanishes.\footnote{Since the actions of the opponents depend on their level of uncertainty, we have that player $i$'s opportunity cost of uncertainty depends implicitly on the level of uncertainty of the opponents.}

\subsection{Uncertainty aversion leads to play an $\epsilon$-Nash equilibrium}

The opportunity cost of uncertainty does not represent, therefore, the players' loss of profit when a robust-optimization equilibrium is played instead of a Nash equilibrium of the nominal counterpart game. This notwithstanding, it links a robust-optimization equilibrium of a robust game to an $\epsilon$-Nash equilibrium of the nominal counterpart game. 

\medskip

\begin{definition}
The action profile $\left(x_{1}^{*},\ldots,x_{n}^{*}\right)\in A$ is an $\epsilon$-Nash equilibrium of game $\left\{ A_{i},f_{i}: i \in \mathcal{N} \right\}$, when for each $i\in \mathcal{N}$
\begin{equation}
f_{i}\left(\boldsymbol{\alpha}^{0}_{i};x^{*}_{i},\mathbf{x}^{*}_{-i}\right)\geq f_{i}\left(\boldsymbol{\alpha}^{0}_{i};x_{i},\mathbf{x}^{*}_{-i}\right)-\epsilon \quad \forall x_{i}\in A_{i}
\end{equation}
\end{definition}

\medskip

The basic idea behind the notion of $\epsilon$-Nash equilibrium is that a player accepts to play a strategy that is not optimal with respect to Nash definition, yet he/she will not deviate unless the payoff improvement is greater than $\epsilon$. The following theorem (see proof in \ref{Ap:T}) underlines that a robust-optimization equilibrium is an $\epsilon$-Nash equilibrium of the nominal counterpart game, where $\epsilon$ is the maximum of players' opportunity costs of uncertainty.

\medskip

\begin{theorem}\label{Theorem::DeltaROEm}
If $\left(x_{1}^{*},\ldots,x_{n}^{*}\right)\in A$ is a robust-optimization equilibrium of the robust game $\left\{ A_{i},f_{i},W_{i}^{\delta_{i}} : i \in \mathcal{N} \right\}$, then $\left(x_{1}^{*},\ldots,x_{n}^{*}\right)$ is an $\epsilon$-Nash equilibrium of its nominal counterpart, with $\epsilon=\max\left\{C^{\delta_{1}}_{1}\left(\mathbf{x}_{-1}^{*}\right),\ldots,C^{\delta_{n}}_{n}\left(\mathbf{x}_{-n}^{*}\right)\right\}$. Moreover, for $\hat{\epsilon}<\epsilon$, $\left(x_{1}^{*},\ldots,x_{n}^{*}\right)$ is not an $\hat{\epsilon}$-Nash equilibrium of the nominal counterpart game.
\end{theorem}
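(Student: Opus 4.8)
The plan is to unfold the definitions and reduce both assertions to the single scalar inequality $\epsilon \ge C_{i}^{\delta_{i}}(\mathbf{x}_{-i}^{*})$ holding for every $i\in\mathcal{N}$. First I would record two elementary facts. Since $W_{i}^{\delta_{i}}=\delta_{i}U_{i}+(1-\delta_{i})\boldsymbol{\alpha}^{0}_{i}$ collapses to the singleton $\{\boldsymbol{\alpha}^{0}_{i}\}$ at $\delta_{i}=0$, the minimum defining $\rho^{0}_{i}$ runs over a single point, so $\rho^{0}_{i}(x_{i},\mathbf{x}_{-i})=f_{i}(\boldsymbol{\alpha}^{0}_{i};x_{i},\mathbf{x}_{-i})$ for all arguments; in particular the nominal counterpart is exactly the game $\{A_{i},\rho^{0}_{i}:i\in\mathcal{N}\}$. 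Second, because $(x_{1}^{*},\ldots,x_{n}^{*})$ is a robust-optimization equilibrium, by Definition \ref{definition::ROE} it is a Nash equilibrium of $\{A_{i},\rho^{\delta_{i}}_{i}:i\in\mathcal{N}\}$, so for each $i$ the action $x_{i}^{*}$ maximizes $\rho^{\delta_{i}}_{i}(\cdot,\mathbf{x}_{-i}^{*})$. This legitimizes taking $x^{+}_{i}(\delta_{i})=x_{i}^{*}$ in the opportunity-cost formula \eqref{Ebm} with the opponents frozen at $\mathbf{x}_{-i}^{*}$, whence $C_{i}^{\delta_{i}}(\mathbf{x}_{-i}^{*})=\max_{x_{i}\in A_{i}}\rho^{0}_{i}(x_{i},\mathbf{x}_{-i}^{*})-\rho^{0}_{i}(x_{i}^{*},\mathbf{x}_{-i}^{*})$.

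For the first claim I would then rewrite the $\epsilon$-Nash inequality for player $i$, using the first fact, as $\rho^{0}_{i}(x_{i}^{*},\mathbf{x}_{-i}^{*})\ge \rho^{0}_{i}(x_{i},\mathbf{x}_{-i}^{*})-\epsilon$ for all $x_{i}\in A_{i}$. The binding instance is the nominal best response, i.e. the maximizer of the right-hand side over $x_{i}$ (attained by compactness of $A_{i}$ and continuity of $\rho^{0}_{i}$ from Lemma \ref{PropWorstCasePayoffFunctions}), so this family of inequalities is equivalent to $\rho^{0}_{i}(x_{i}^{*},\mathbf{x}_{-i}^{*})\ge \max_{x_{i}}\rho^{0}_{i}(x_{i},\mathbf{x}_{-i}^{*})-\epsilon$, which rearranges to precisely $\epsilon\ge C_{i}^{\delta_{i}}(\mathbf{x}_{-i}^{*})$. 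Requiring this for every $i$ simultaneously is the single condition $\epsilon\ge\max_{i}C_{i}^{\delta_{i}}(\mathbf{x}_{-i}^{*})$, so the choice $\epsilon=\max\{C_{1}^{\delta_{1}}(\mathbf{x}_{-1}^{*}),\ldots,C_{n}^{\delta_{n}}(\mathbf{x}_{-n}^{*})\}$ makes $(x_{1}^{*},\ldots,x_{n}^{*})$ an $\epsilon$-Nash equilibrium of the nominal counterpart.

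For the \emph{moreover} part I would argue by tightness. Let $j$ be a player attaining the maximum, so $C_{j}^{\delta_{j}}(\mathbf{x}_{-j}^{*})=\epsilon$, and pick $\hat{x}_{j}\in\arg\max_{x_{j}}\rho^{0}_{j}(x_{j},\mathbf{x}_{-j}^{*})$. For any $\hat{\epsilon}<\epsilon$ the identity above gives $\rho^{0}_{j}(\hat{x}_{j},\mathbf{x}_{-j}^{*})-\rho^{0}_{j}(x_{j}^{*},\mathbf{x}_{-j}^{*})=C_{j}^{\delta_{j}}(\mathbf{x}_{-j}^{*})=\epsilon>\hat{\epsilon}$, i.e. player $j$ strictly improves his nominal payoff by more than $\hat{\epsilon}$ by deviating to $\hat{x}_{j}$. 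Hence the $\hat{\epsilon}$-Nash inequality fails for $j$, and the profile is not a $\hat{\epsilon}$-Nash equilibrium of the nominal counterpart.

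The argument is essentially a chain of definitional rewritings, so I expect no deep obstacle; the points demanding care are the identification $\rho^{0}_{i}=f_{i}(\boldsymbol{\alpha}^{0}_{i};\cdot)$ at vanishing uncertainty, the justification for setting $x^{+}_{i}(\delta_{i})=x_{i}^{*}$ (which rests on the Nash optimality of $x_{i}^{*}$ in the worst-case game), and exhibiting the explicit profitable deviation $\hat{x}_{j}$ to secure the sharpness statement.
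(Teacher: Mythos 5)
Your proposal is correct and follows essentially the same route as the paper: both identify $C_{i}^{\delta_{i}}(\mathbf{x}_{-i}^{*})$ with the gap $\max_{x_{i}}\rho^{0}_{i}(x_{i},\mathbf{x}_{-i}^{*})-\rho^{0}_{i}(x_{i}^{*},\mathbf{x}_{-i}^{*})$ by taking $x^{+}_{i}(\delta_{i})=x_{i}^{*}$, observe that the $\epsilon$-Nash condition for player $i$ is exactly $\epsilon\ge C_{i}^{\delta_{i}}(\mathbf{x}_{-i}^{*})$, and establish sharpness via the explicit deviation to a nominal best response of a player attaining the maximum. If anything, your handling of the \emph{moreover} part is slightly cleaner than the paper's, which splits into the cases $\epsilon=0$ and $\epsilon>0$ rather than arguing uniformly from the tight deviation.
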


\medskip

Thus, in case of payoff uncertainty, uncertainty-averse players play an $\epsilon$-Nash equilibrium instead of a Nash equilibrium, where the level of approximation $\epsilon$ is the largest of players' opportunity costs of uncertainty. Moreover, reaching a better approximation is not possible in the sense that $\epsilon$ cannot be smaller than the greatest of the players' opportunity costs of uncertainty.

Measuring the level of approximation that can be obtained in a robust game, the opportunity costs of uncertainty vanish when uncertainty vanishes. Nevertheless, it is not straightforward to determine if and at which speed they shrink when uncertainty reduces. In particular, kept fixed the actions of the opponents, it cannot be stated in general that player $i$'s opportunity cost of uncertainty converges to zero when his level of uncertainty does so, i.e. when $\delta_{i}\rightarrow 0$. Moreover, given the actions of the opponents, it is not clear, for example, if player $i$'s opportunity cost of uncertainty is monotonically increasing with respect to its level of uncertainty. This because, given the actions of the opponents $\mathbf{x}_{-i}$, player $i$'s opportunity cost of uncertainty depends on $\delta_{i}$ according to a function not known in general. In fact, player $i$'s opportunity cost of uncertainty depends on the robust-optimization strategy played by player $i$ in a robust game where his/her level of uncertainty is $\delta_{i}$ and the opponents play the action profile $\mathbf{x}_{-i}$.

There is, however, a class of robust games for which the behavior of the opportunity cost of uncertainty can be characterized in more detail. In fact, under the further assumption of $f_{i}\left(\boldsymbol{\alpha}_{i};x_{i},\mathbf{x}_{-i}\right)$ concave w.r.t. $\boldsymbol{\alpha}_{i}$, $\forall \boldsymbol{\alpha}_{i}\in U_{i}$, $\forall \left(x_{i},\mathbf{x}_{-i}\right)\in A$ and $\forall i\in \mathcal{N}$, as suggested in \cite{CrespiRadiRocca2017}, it is possible to construct a function that is linear w.r.t. the parametrized level of uncertainty and that approximates by excess player $i$'s opportunity costs of uncertainty. To this aim, define
\begin{equation}\label{Efs}
E_{i}\left(x_{i},\mathbf{x}_{-i}\right):= \rho^{0}_{i}\left(x_{i},\mathbf{x}_{-i}\right)-\rho^{1}_{i}\left(x_{i},\mathbf{x}_{-i}\right)\quad \forall i \in \mathcal{N}
\end{equation}
and the function
\begin{equation}\label{Ef}
\bar{E}_{i}\left(\mathbf{x}_{-i}\right):= \max_{x_{i}\in A_{i}}E_{i}\left(x_{i},\mathbf{x}_{-i}\right) \quad \forall i \in \mathcal{N}
\end{equation}
which can be used to construct an approximation by excess of the opportunity cost of uncertainty of player $i$ as indicated in the following Lemma (see proof in \ref{Ap:T}).

\medskip

\begin{lemma}\label{Lemma::1}
Let $f_{i}\left(\boldsymbol{\alpha}_{i};x_{i},\mathbf{x}_{-i}\right)$ be concave w.r.t. $\boldsymbol{\alpha}_{i}$,  $\forall \boldsymbol{\alpha}_{i}\in U_{i}$ and  $\forall i\in \mathcal{N}$. Then, $\delta_{i}\bar{E}_{i}\left(\mathbf{x}_{-i}\right)\geq C^{\delta_{i}}_{i}\left(\mathbf{x}_{-i}\right)$ $\forall \mathbf{x}_{-i}\in A_{-i}$  and $\forall i \in \mathcal{N}$.
\end{lemma}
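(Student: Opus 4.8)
The plan is to reduce the claim to a single sub-additivity estimate for the worst-case payoff function, obtained from the concavity of $f_i$ in $\boldsymbol{\alpha}_i$ together with the affine structure $W_i^{\delta_i}=\delta_i U_i+(1-\delta_i)\boldsymbol{\alpha}_i^0$, and then to chain it with the monotonicity Property~\ref{Prop1} and with the optimality that defines $x_i^+(\delta_i)$. For $\delta_i=0$ the inequality is trivial, since then $C_i^{0}(\mathbf{x}_{-i})=0=\delta_i\bar{E}_i(\mathbf{x}_{-i})$, so I would fix $\delta_i\in(0,1]$ throughout.

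The central step is the following estimate. Fixing an action profile and any $\boldsymbol{\beta}_i\in U_i$, I would set $\boldsymbol{\alpha}_i=\delta_i\boldsymbol{\beta}_i+(1-\delta_i)\boldsymbol{\alpha}_i^0\in W_i^{\delta_i}$; concavity of $f_i$ in its first argument gives
\[
f_i(\boldsymbol{\alpha}_i;x_i,\mathbf{x}_{-i})\geq \delta_i f_i(\boldsymbol{\beta}_i;x_i,\mathbf{x}_{-i})+(1-\delta_i)f_i(\boldsymbol{\alpha}_i^0;x_i,\mathbf{x}_{-i}).
\]
Since $\boldsymbol{\beta}_i\mapsto\delta_i\boldsymbol{\beta}_i+(1-\delta_i)\boldsymbol{\alpha}_i^0$ maps $U_i$ onto $W_i^{\delta_i}$, minimizing the left-hand side over $\boldsymbol{\beta}_i\in U_i$ returns $\rho_i^{\delta_i}$, while on the right the first term yields $\delta_i\rho_i^1$ and the constant second term yields $(1-\delta_i)\rho_i^0$ (recall $\rho_i^0(\cdot)=f_i(\boldsymbol{\alpha}_i^0;\cdot)$ because $W_i^0=\{\boldsymbol{\alpha}_i^0\}$). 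Using that a pointwise inequality is preserved under taking minima, I obtain
\[
\rho_i^{\delta_i}(x_i,\mathbf{x}_{-i})\geq \delta_i\rho_i^1(x_i,\mathbf{x}_{-i})+(1-\delta_i)\rho_i^0(x_i,\mathbf{x}_{-i}),
\]
which rearranges, via the definition \eqref{Efs} of $E_i$, into $\rho_i^0(x_i,\mathbf{x}_{-i})-\rho_i^{\delta_i}(x_i,\mathbf{x}_{-i})\leq\delta_i E_i(x_i,\mathbf{x}_{-i})$ for every $(x_i,\mathbf{x}_{-i})\in A$.

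The conclusion would then follow by a short chain. Let $x_i^*$ attain $\max_{x_i\in A_i}\rho_i^0(x_i,\mathbf{x}_{-i})$ and recall $x_i^+(\delta_i)\in\arg\max_{x_i}\rho_i^{\delta_i}(x_i,\mathbf{x}_{-i})$. Optimality of $x_i^+(\delta_i)$ for $\rho_i^{\delta_i}$ followed by Property~\ref{Prop1} (which gives $\rho_i^0\geq\rho_i^{\delta_i}$) yields
\[
\rho_i^0\bigl(x_i^+(\delta_i),\mathbf{x}_{-i}\bigr)\geq\rho_i^{\delta_i}\bigl(x_i^+(\delta_i),\mathbf{x}_{-i}\bigr)\geq\rho_i^{\delta_i}(x_i^*,\mathbf{x}_{-i}).
\]
Bounding the right-hand side below by the central estimate applied at $x_i^*$ and rearranging, I get
\[
C_i^{\delta_i}(\mathbf{x}_{-i})=\rho_i^0(x_i^*,\mathbf{x}_{-i})-\rho_i^0\bigl(x_i^+(\delta_i),\mathbf{x}_{-i}\bigr)\leq\delta_i E_i(x_i^*,\mathbf{x}_{-i})\leq\delta_i\bar{E}_i(\mathbf{x}_{-i}),
\]
the last step being the definition \eqref{Ef} of $\bar{E}_i$. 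This is exactly the asserted bound.

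I expect the only delicate point to be the passage to the minimum in the central estimate: one must verify both that $g\geq h$ pointwise forces $\min g\geq\min h$ (which holds because the left minimizer evaluates $h$ at a value no smaller than $\min h$) and that the affine reparametrization is onto $W_i^{\delta_i}$, so that the left-hand minimum genuinely equals $\rho_i^{\delta_i}$. Everything after the central estimate is routine bookkeeping with the definitions \eqref{Ebm}, \eqref{Efs}, \eqref{Ef} and Property~\ref{Prop1}.
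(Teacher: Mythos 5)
Your proposal is correct and follows essentially the same route as the paper: the core ingredient in both is the concavity estimate $\rho_i^{\delta_i}\geq\delta_i\rho_i^1+(1-\delta_i)\rho_i^0$, combined with Property~\ref{Prop1} and the optimality of $x_i^+(\delta_i)$; you merely chain the inequalities through the explicit maximizers where the paper invokes sub-additivity of the max operator, which is the same bookkeeping. A small bonus of your write-up is that you actually derive the central estimate from the affine reparametrization of $W_i^{\delta_i}$ onto $U_i$, a step the paper asserts without detail.
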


\medskip

By definition, an $\epsilon^{1}$-Nash equilibrium is also an $\epsilon^{2}$-Nash equilibrium as long as $\epsilon^{2}\geq \epsilon^{1}$. Hence, Lemma \ref{Lemma::1} implies the following result.\footnote{A proof of Theorem \ref{Theorem::DeltaROE} was already proposed in \cite{CrespiRadiRocca2017}. However, the one here proposed is new, straightforward and elegant and it is obtained by exploiting a property of $\epsilon$-Nash equilibria as well as Theorem \ref{Theorem::DeltaROEm}, which were not considered in \cite{CrespiRadiRocca2017}.}

\medskip

\begin{theorem}[in \cite{CrespiRadiRocca2017}]\label{Theorem::DeltaROE}
Let $f_{i}\left(\boldsymbol{\alpha}_{i};x_{i},\mathbf{x}_{-i}\right)$ be concave w.r.t. $\boldsymbol{\alpha}_{i}$,  $\forall i\in \mathcal{N}$. If $\left(x_{1}^{*},\ldots,x_{n}^{*}\right)\in A$ is a robust-optimization equilibrium of the robust game $\left\{ A_{i},f_{i},W_{i}^{\delta_{i}} : i \in \mathcal{N} \right\}$, then $\left(x_{1}^{*},\ldots,x_{n}^{*}\right)$ is an $\epsilon$-Nash equilibrium of its nominal counterpart, with $\epsilon=\max\left\{\delta_{1}\bar{E}_{1}\left(\mathbf{x}_{-1}^{*}\right),\ldots,\delta_{n}\bar{E}_{n}\left(\mathbf{x}_{-n}^{*}\right)\right\}$.
\end{theorem}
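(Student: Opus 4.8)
The plan is to obtain the statement as an immediate consequence of Theorem \ref{Theorem::DeltaROEm} and Lemma \ref{Lemma::1}, glued together by the elementary monotonicity of the $\epsilon$-Nash notion in the tolerance parameter $\epsilon$. No new estimate on the payoff functions is required: the concavity hypothesis on $f_i$ with respect to $\boldsymbol{\alpha}_i$, which is the standing assumption of this theorem, enters only through Lemma \ref{Lemma::1}, which is already available.

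First I would apply Theorem \ref{Theorem::DeltaROEm} to the given robust-optimization equilibrium $\left(x_1^*,\ldots,x_n^*\right)$. This yields that the profile is an $\epsilon^*$-Nash equilibrium of the nominal counterpart with
\[
\epsilon^* = \max\left\{C^{\delta_1}_1\left(\mathbf{x}^*_{-1}\right),\ldots,C^{\delta_n}_n\left(\mathbf{x}^*_{-n}\right)\right\}.
\]
Next, since $f_i$ is concave in $\boldsymbol{\alpha}_i$ for every $i\in\mathcal{N}$, Lemma \ref{Lemma::1} applies to each player separately and gives $C^{\delta_i}_i\left(\mathbf{x}^*_{-i}\right)\leq \delta_i\bar{E}_i\left(\mathbf{x}^*_{-i}\right)$. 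Taking the maximum over $i$ and writing $\epsilon:=\max\left\{\delta_1\bar{E}_1\left(\mathbf{x}^*_{-1}\right),\ldots,\delta_n\bar{E}_n\left(\mathbf{x}^*_{-n}\right)\right\}$, I obtain $\epsilon^*\leq\epsilon$.

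Finally I would invoke the monotonicity property recalled just before the statement: an $\epsilon^*$-Nash equilibrium is automatically an $\epsilon$-Nash equilibrium whenever $\epsilon\geq\epsilon^*$, because enlarging the tolerance only relaxes each of the defining inequalities $f_i\left(\boldsymbol{\alpha}^0_i;x^*_i,\mathbf{x}^*_{-i}\right)\geq f_i\left(\boldsymbol{\alpha}^0_i;x_i,\mathbf{x}^*_{-i}\right)-\epsilon^*$. Chaining the previous two steps with this observation shows that $\left(x_1^*,\ldots,x_n^*\right)$ is an $\epsilon$-Nash equilibrium of the nominal counterpart with precisely the claimed $\epsilon$, which closes the argument.

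As for the main obstacle: at the level of this theorem there is essentially none, since everything reduces to a short chain of inequalities. The genuine content sits upstream in Lemma \ref{Lemma::1}, whose proof exploits the representation $W^{\delta_i}_i=\delta_i U_i+\left(1-\delta_i\right)\boldsymbol{\alpha}^0_i$ together with concavity in $\boldsymbol{\alpha}_i$ to show that the worst-case gap $C^{\delta_i}_i$ grows at most linearly in $\delta_i$. The fact that $\delta_i\bar{E}_i$ is a bound \emph{by excess} rather than an identity is exactly what prevents sharpening $\epsilon$ below the value already delivered by Theorem \ref{Theorem::DeltaROEm}, so the new $\epsilon$ here is generally looser but has the advantage of being linear in the level of uncertainty.
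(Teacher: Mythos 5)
Your proposal is correct and coincides with the paper's own argument: the authors likewise derive Theorem \ref{Theorem::DeltaROE} by combining Theorem \ref{Theorem::DeltaROEm} with the bound $C^{\delta_{i}}_{i}\left(\mathbf{x}^{*}_{-i}\right)\leq\delta_{i}\bar{E}_{i}\left(\mathbf{x}^{*}_{-i}\right)$ from Lemma \ref{Lemma::1} and the observation that an $\epsilon^{1}$-Nash equilibrium is an $\epsilon^{2}$-Nash equilibrium whenever $\epsilon^{2}\geq\epsilon^{1}$. Your closing remark about the bound being by excess, hence not sharpening the $\epsilon$ of Theorem \ref{Theorem::DeltaROEm}, also matches the paper's discussion.
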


\medskip

Theorem \ref{Theorem::DeltaROE} underlines the relationship between a robust-optimization equilibrium of a robust game and an $\epsilon$-Nash equilibrium of the nominal counterpart game, where $\epsilon$ represents an upper bound of players' opportunity costs of uncertainty. Differently from the opportunity cost of uncertainty defined in \eqref{Ebm}, this upper bound is a function that is both linear and strictly increasing in the level of uncertainty. In fact, by definition of $E_{i}$ in \eqref{Efs}, the function $\bar{E}_{i}$ in \eqref{Ef} does not depend on $\delta_{i}$. Therefore, despite its being based on an upper-bound approximation of the opportunity cost of uncertainty, the result in Theorem \ref{Theorem::DeltaROE} can be useful whenever a closed form expression for the opportunity cost of uncertainty is difficult do obtain. Note that the result in Theorem \ref{Theorem::DeltaROE} can be used instead of the one in Theorem \ref{Theorem::DeltaROEm} to study the loss of profit derived by uncertainty only when the payoff functions are concave with respect to the uncertain parameters. In this respect, the scope of Theorem \ref{Theorem::DeltaROE} is very large as players' payoff functions are linear with respect to the uncertain parameters, therefore concave, in the majority of the games proposed in the literature.

A further clarification is required for a correct interpretation of Theorems \ref{Theorem::DeltaROEm} and \ref{Theorem::DeltaROE}. As already observed, the opportunity cost of uncertainty does not measure the difference between a player's profit at the Nash equilibrium and at a robust-optimization equilibrium. Then, the results in Theorems \ref{Theorem::DeltaROEm} and \ref{Theorem::DeltaROE} do not imply that players' payoff at a robust-optimization equilibrium are always lower than at a Nash equilibrium of the nominal counterpart game. Moreover, Theorems \ref{Theorem::DeltaROEm} and \ref{Theorem::DeltaROE} define a relationship between an $\epsilon$-Nash equilibrium and a robust-optimization equilibrium, but an $\epsilon$-Nash equilibrium is not necessarily a Nash equilibrium. In particular, an $\epsilon$-Nash equilibrium is also a Nash equilibrium only if that $\epsilon$-Nash equilibrium remains so when $\epsilon\rightarrow 0$, which is not in general the case. In other words, Theorems \ref{Theorem::DeltaROEm} and \ref{Theorem::DeltaROE} do not exclude that uncertainty can create virtuous configurations such that players are better off with respect to a Nash equilibrium of the nominal counterpart game. A point that will be underlined and discussed also in the next section where a robust duopoly model is considered.

\subsection{An alternative foundation for $\epsilon$-Nash equilibria}

The concept of $\epsilon$-Nash equilibrium is more general than the one of Nash equilibrium, in the sense that a Nash equilibrium is an $\epsilon$-Nash equilibrium but it is not true the vice versa. Hence, the set of $\epsilon$-Nash equilibria of a game does not necessarily include a Nash equilibrium. The same holds true for robust-optimization equilibria. In fact, Theorems \ref{Theorem::DeltaROEm} and \ref{Theorem::DeltaROE} underline that the concept of $\epsilon$-Nash equilibrium is more general than the one of robust-optimization equilibrium, i.e. the existence of a robust-optimization equilibrium of a robust game implies the existence of an $\epsilon$-Nash equilibrium of the nominal counterpart game. However, the vice versa is not true in general. Hence, a robust game does not necessarily have a robust-optimization equilibrium, which, in case it exists, it belongs to the set of $\epsilon$-Nash equilibria of the nominal counterpart game. Nevertheless, as specified in the following Theorem (see proof in \ref{Ap:T}), for each $\epsilon$-Nash equilibrium of a nominal game, it is possible to construct a robust game, by choosing suitable payoff-uncertainty sets, such that the first one is the nominal counterpart of the second one, and the $\epsilon$-Nash equilibrium of the nominal game is also a robust-optimization equilibrium of the robust game.

\medskip

\begin{theorem}\label{EspilonNashAsROE}
Consider a nominal game $\left\{ A_{i},f_{i}: i \in \mathcal{N} \right\}$. Then, for each $\epsilon$-Nash equilibrium $\left(x_{1}^{*},\ldots,x^{*}_{n}\right)$ of this game there exists at least a robust game such that $\left(x_{1}^{*},\ldots,x^{*}_{n}\right)$ is a robust-optimization equilibrium of the robust game and $\left\{ A_{i},f_{i}: i \in \mathcal{N} \right\}$ is its nominal counterpart.
\end{theorem}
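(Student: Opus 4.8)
The plan is to prove the statement constructively. Given the $\epsilon$-Nash equilibrium $\left(x_{1}^{*},\ldots,x_{n}^{*}\right)$, I would exhibit, for each player, an uncertain parameter together with an uncertainty set and a parameter-dependent payoff that (i) agrees with the given payoff at the nominal value and (ii) has a worst-case payoff function maximized at $x_{i}^{*}$ against $\mathbf{x}_{-i}^{*}$. Two conditions must hold simultaneously: the nominal counterpart of the robust game, obtained by setting the uncertainty to zero, must be exactly $\left\{A_{i},f_{i}:i\in\mathcal{N}\right\}$, and $\left(x_{1}^{*},\ldots,x_{n}^{*}\right)$ must be a Nash equilibrium of the worst-case game $\left\{A_{i},\rho_{i}^{\delta_{i}}:i\in\mathcal{N}\right\}$. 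The freedom I exploit is that the theorem only asks for \emph{some} robust game in the sense of Definition \ref{Def::RG}: I may design both the shape of $U_{i}$ and the way the payoff depends on the parameter away from the nominal value, subject only to matching the given game at $\boldsymbol{\alpha}_{i}^{0}$ and to the worst-case payoff being well defined.

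For the construction I would take, for every $i$, a scalar uncertain parameter $t_{i}$ with $U_{i}=\left[0,1\right]$, nominal value $t_{i}=0$, and level of uncertainty $\delta_{i}=1$, so that $W_{i}^{1}=\left[0,1\right]$; any other $\delta_{i}\in(0,1]$ works after rescaling $U_{i}$ to $\left[0,1/\delta_{i}\right]$. I then extend the given payoff to
\begin{equation}
\hat{f}_{i}\left(t_{i};x_{i},\mathbf{x}_{-i}\right):=f_{i}\left(\boldsymbol{\alpha}^{0}_{i};x_{i},\mathbf{x}_{-i}\right)-t_{i}\,\max\left\{f_{i}\left(\boldsymbol{\alpha}^{0}_{i};x_{i},\mathbf{x}_{-i}\right)-f_{i}\left(\boldsymbol{\alpha}^{0}_{i};x^{*}_{i},\mathbf{x}_{-i}\right),\,0\right\}.
\end{equation}
At $t_{i}=0$ this returns the given payoff, so the nominal counterpart of the robust game is $\left\{A_{i},f_{i}:i\in\mathcal{N}\right\}$, as required. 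Since the penalty multiplying $t_{i}$ is nonnegative, $\hat{f}_{i}$ is affine and non-increasing in $t_{i}$ on the compact interval $\left[0,1\right]$, so the minimum defining the worst-case payoff exists, is finite, and is attained at $t_{i}=1$, giving
\begin{equation}
\rho_{i}^{1}\left(x_{i},\mathbf{x}_{-i}\right)=\min\left\{f_{i}\left(\boldsymbol{\alpha}^{0}_{i};x_{i},\mathbf{x}_{-i}\right),\,f_{i}\left(\boldsymbol{\alpha}^{0}_{i};x^{*}_{i},\mathbf{x}_{-i}\right)\right\}.
\end{equation}

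The verification is then immediate: for every $\mathbf{x}_{-i}$ and every $x_{i}\in A_{i}$ one has $\rho_{i}^{1}\left(x_{i},\mathbf{x}_{-i}\right)\le f_{i}\left(\boldsymbol{\alpha}^{0}_{i};x^{*}_{i},\mathbf{x}_{-i}\right)=\rho_{i}^{1}\left(x^{*}_{i},\mathbf{x}_{-i}\right)$, so $x_{i}^{*}$ maximizes $\rho_{i}^{1}\left(\cdot,\mathbf{x}_{-i}\right)$ for \emph{every} profile of the opponents, in particular for $\mathbf{x}_{-i}^{*}$. Hence $\left(x_{1}^{*},\ldots,x_{n}^{*}\right)$ is a Nash equilibrium of $\left\{A_{i},\rho_{i}^{1}:i\in\mathcal{N}\right\}$, i.e.\ a robust-optimization equilibrium of $\left\{A_{i},\hat{f}_{i},W_{i}^{1}:i\in\mathcal{N}\right\}$, whose nominal counterpart is the given game.

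I expect the conceptual heart, rather than any calculation, to be the main obstacle. By Property \ref{Prop1} the worst-case operator can only push payoffs down, so forcing $x_{i}^{*}$ to be optimal requires lowering the payoff at each competing action by at least its nominal advantage $f_{i}\left(\boldsymbol{\alpha}^{0}_{i};x_{i},\mathbf{x}_{-i}^{*}\right)-f_{i}\left(\boldsymbol{\alpha}^{0}_{i};x^{*}_{i},\mathbf{x}_{-i}^{*}\right)$ while lowering it by nothing at $x_{i}^{*}$; the clipping above realizes exactly this selective reduction through a single uncertainty set. This is where the $\epsilon$-Nash hypothesis enters conceptually: it guarantees those advantages are bounded by $\epsilon$, so the induced opportunity cost of uncertainty $C_{i}^{1}\left(\mathbf{x}_{-i}^{*}\right)=\max_{x_{i}}f_{i}\left(\boldsymbol{\alpha}^{0}_{i};x_{i},\mathbf{x}_{-i}^{*}\right)-f_{i}\left(\boldsymbol{\alpha}^{0}_{i};x^{*}_{i},\mathbf{x}_{-i}^{*}\right)$ does not exceed $\epsilon$, making the construction consistent with Theorem \ref{Theorem::DeltaROEm} and delivering the promised foundation for $\epsilon$-Nash equilibria. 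I would close by noting that the theorem only requires the worst-case payoff to be well defined (Definition \ref{Def::RG}), so I need not verify Assumption \ref{Ass1} for $\hat{f}_{i}$; if one additionally wanted the constructed game to fall under the existence theory of Section \ref{ExistenceROE}, it suffices to observe that $\rho_{i}^{1}\left(\cdot,\mathbf{x}_{-i}\right)$, being the minimum of the given payoff and a constant, is concave whenever $f_{i}$ is concave in $x_{i}$, so that Theorem \ref{EquilibriumExistenceROEandNASHequivalentGame} applies directly.
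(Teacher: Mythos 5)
Your proposal is correct, and it follows the same overall strategy as the paper's proof---append to each player's nominal payoff a penalty controlled by a single scalar uncertain parameter, so that the worst case selectively lowers the payoff away from $x_{i}^{*}$ while leaving it untouched at $x_{i}^{*}$, and the nominal counterpart (zero uncertainty) recovers the given game---but your choice of penalty is genuinely different and has different consequences. The paper sets $\bar{f}_{i}\left(\boldsymbol{\alpha}_{i};x_{i},\mathbf{x}_{-i}\right)=f_{i}\left(x_{i},\mathbf{x}_{-i}\right)-\boldsymbol{\alpha}_{i}\mathbf{1}_{\left\{x_{i}\neq x_{i}^{*}\right\}}$ with worst-case set $\left[0,\epsilon\right]$: a flat penalty of size $\epsilon$ at every deviation, so the verification that $x_{i}^{*}$ maximizes the worst-case payoff against $\mathbf{x}_{-i}^{*}$ is exactly the $\epsilon$-Nash inequality, and the constructed payoffs are discontinuous in $x_{i}$ whenever $A_{i}$ is not discrete. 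Your clipped penalty $t_{i}\max\left\{f_{i}\left(\boldsymbol{\alpha}^{0}_{i};x_{i},\mathbf{x}_{-i}\right)-f_{i}\left(\boldsymbol{\alpha}^{0}_{i};x_{i}^{*},\mathbf{x}_{-i}\right),0\right\}$ instead subtracts exactly the nominal advantage of each deviation, which buys two things: the constructed payoff stays continuous (indeed concave in $x_{i}$ when $f_{i}$ is, as you note), and $x_{i}^{*}$ becomes a best reply of the worst-case game against \emph{every} opponent profile, so your argument never actually invokes the $\epsilon$-Nash hypothesis---it proves the strictly stronger statement that any action profile whatsoever is a robust-optimization equilibrium of some robust game with the given nominal counterpart. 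That is not a flaw, since the theorem only asserts existence for $\epsilon$-Nash equilibria, and your closing remark correctly locates where the hypothesis re-enters: it bounds the induced opportunity cost $C_{i}^{1}\left(\mathbf{x}_{-i}^{*}\right)$ by $\epsilon$, which is what makes the construction meaningful as a foundation for $\epsilon$-Nash equilibria, consistent with Theorem \ref{Theorem::DeltaROEm}, rather than a vacuous one.
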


\medskip

The game-theory literature motivates players' waiver of an extra $\epsilon$ profit required to have that an $\epsilon$-Nash configuration is an equilibrium, as an extra cost for searching a better solution or for changing strategy, see, e.g., \cite{Dixon1987}. The result in Theorem \ref{EspilonNashAsROE} provides a further explanation for the existence of an $\epsilon$-Nash equilibrium in terms of opportunity cost of uncertainty. Specifically, for each $\epsilon$-Nash equilibrium, the $\epsilon$ profit that players give up when they play an $\epsilon$-Nash equilibrium can be interpreted as the opportunity cost of uncertainty. As the opportunity cost of uncertainty depends on the worst-case approach to uncertainty of players, interpreting uncertainty as the lack of trust on the nominal realization, see \cite{Knight1921}, the opportunity cost of uncertainty can also be named \emph{cost of ambiguity aversion (or cost of aversion to uncertainty)}. In this case, the $\epsilon$ profit that players waive when they play an $\epsilon$-Nash equilibrium can be interpreted as the cost of aversion to uncertainty.

Regarding $\epsilon$-Nash equilibria of a nominal game, the result in Theorem \ref{EspilonNashAsROE} provides therefore a theoretical foundation for these equilibria in terms of uncertainty aversion. Moreover, it provides a selection criterion to discriminate among $\epsilon$-Nash equilibria. Indeed, defined an $\epsilon$-level of approximation, these equilibria are not generally unique. In most of the real-world applications it is the algorithm employed to solve a nominal game and its setting to discriminate the $\epsilon$-Nash equilibrium that will be considered. Having at least a theoretical justification for the arbitrary choice is a \emph{reassuring} aspect. In this respect, Theorem \ref{EspilonNashAsROE} indicates that for the $\epsilon$-approximation chosen and for the $\epsilon$-Nash equilibrium computed, there exists an uncertainty set, a level of uncertainty and a robust game such that the  $\epsilon$-Nash equilibrium can be interpreted as a robust-optimization equilibrium of the game and the $\epsilon$-approximation is the maximum of the opportunity costs of uncertainty (or the maximum of the costs of aversion to uncertainty) of the players of the robust game.

\subsection{An example of robust game and remarks on computing equilibria}

Summarizing, a robust-optimization equilibrium of a robust game is an $\epsilon$-Nash equilibrium of the nominal counterpart game, where $\epsilon$ measures the opportunity cost of uncertainty. At the same time, each $\epsilon$-Nash equilibrium of a nominal game can be interpreted as a robust-optimization equilibrium of a robust game that has the first one as its nominal counterpart. Despite these analogies, there are differences in terms of complexity and methods to adopt when computing an $\epsilon$-Nash equilibrium of a nominal game and a robust-optimization equilibrium of a robust game. In very large games, the $\epsilon$-Nash equilibria can be computed using polynomial-time algorithms, while to find a Nash solution requires non-polynomial-time algorithms in general, see, e.g., \cite{DaskalakisPapadimitriou2015}. Since searching for a robust-optimization equilibrium is equivalent to finding a Nash equilibrium once the worst-case payoff functions are derived (see again the definition of robust-optimization equilibrium above), the same computational complexity involves robust-optimization equilibria. In stylized games, on the contrary, the $\epsilon$-Nash equilibria cannot be computed with general analytical methods, while the robust-optimization equilibria can be computed employing standard techniques used for searching Nash equilibria in nominal games. For example, it is possible to construct the \emph{worst-case (or robust) best-reply functions}, which are the robust counterpart of best-reply functions and are defined as follows, see \cite{CrespiRadiRocca2017}:
\begin{equation}\label{RBRFi}
R_{i}^{\delta_{i}}\left(\mathbf{x}_{-i}\right)=\arg\max_{x_{i}\in A_{i}}\left[\min_{\boldsymbol{\alpha}_{i}\in W_{i}^{\delta_{i}}} f_{i}\left(\boldsymbol{\alpha}_{i};x_{i},\mathbf{x}_{-i}\right)\right]
\end{equation}
These functions can be used to identify the robust-optimization equilibria. In fact, the action profile $\left(x^{*}_{1},\ldots,x^{*}_{n}\right)\in A$ is a robust-optimization equilibrium if and only if 
\begin{equation}\label{ROEwcbrf}
x^{*}_{i}\in R_{i}^{\delta_{i}}\left(\mathbf{x}^{*}_{-i}\right) \quad \forall i \in \mathcal{N}
\end{equation}
If the worst-case best-reply functions can be used to compute robust-optimization equilibria of a robust game, their derivation may not be so straightforward as it is for the best-reply functions of a nominal game. To overcome the issue, a simple procedure is proposed in \ref{AlgorithmBR} to derive players' worst-case best-reply functions when the uncertainty sets are polyhedra, which is a common assumption in robust-optimization. The algorithm is used to obtain the worst-case best-reply functions in the example that follows. The example regards a simple robust game where players' payoff functions are linear-quadratic and depend on a single unknown parameter. 

\begin{example}\label{example1}
Consider the finite-person, non-cooperative, simultaneous-move, one-shot robust game $\left\{ A_{i},f_{i},W_{i}^{\delta_{i}} : i \in \left\{1,2\right\} \right\}$, where 
\begin{itemize}
\item[]
\item The action spaces are $A_{i}:=\left[0,1.8\right]\subset \mathbb{R}$, for $i=1,2$;
\item[]
\item The uncertainty sets are $W_{i}^{\delta_{i}}:=\delta_{i} U_{i}+ \left(1-\delta_{i}\right)\boldsymbol{\alpha}_{i}^{0}=\delta_{i} \left[0.1,0.8\right]+ \left(1-\delta_{i}\right)0.6\subset \mathbb{R}$, for $i=1,2$.
\item[]
\item The level of uncertainty is the same for both players, i.e. $\delta_{1}=\delta_{2}=\delta$.
\item[]
\item The payoff functions are $f_{i}\left(\boldsymbol{\alpha}_{i};x_{i},\mathbf{x}_{-i}\right):=\left(1+\boldsymbol{\alpha}_{i}\left(1-x_{i}\right)-\mathbf{x}_{-i} \right)x_{i}$, for $i=1,2$;
\item[]
\end{itemize}

This is a two-player game. The action spaces, the uncertainty sets and the payoff functions are the same for both players. Hence, the game is symmetric. For analogy with the notation in the rest of the section, variables $\boldsymbol{\alpha}_{i}$ and $\mathbf{x}_{i}$ are in bold despite being single-entry vectors. All conditions in Assumption \ref{Ass1} are satisfied when $\delta=1$, therefore at least a robust-optimization equilibrium exists for the robust game, whatever level of uncertainty $\delta$, see Theorem \ref{EquilibriumExistenceROE}. Specifically, the nominal counterpart of the game, obtained for $\delta=0$, has a unique Nash equilibrium, which is the intersection point of the best-reply functions, as observable in Figure \ref{FigI}, picture in the middle. Therefore, the nominal counterpart game is a symmetric game with a unique symmetric Nash equilibrium. This feature is not preserved in the robust version of the game when the levels of uncertainty are large. Consider for example the case $\delta = 1$ and build the worst-case best-reply functions using the Algorithm suggested in \ref{AlgorithmBR}. As observable in Figure \ref{FigI}, first picture from the left, the robust game has multiple robust-optimization equilibria, which correspond to the intersection points of the worst-case best reply functions. Moreover, only one of the seven robust-optimization equilibria is symmetric. Figure \ref{FigI}, last picture from the left, reports the projections of the equilibrium outputs of the robust game on the action space of the first player. The graphic shows that when the level of uncertainty converges to zero only one of the robust-optimization equilibria of the robust game survives and converges \emph{smoothly} to the Nash equilibrium of the nominal game. The other robust-optimization equilibria disappear. The unique robust-optimization equilibrium that survives to small level of uncertainty is the only equilibrium output of the robust game that in the next section will be classified as the robust-optimization equilibrium with a Nash equilibrium counterpart.

The example marks the large differences that could occur between a simple robust game and its nominal counterpart. These differences are inherited from the robust optimization employed by the players of the game to handle uncertainty. A robust-optimization problem can be more complicated, and can generate a different output, than its nominal correspondence. In robust optimization, the complexity of the optimization problem to solve depends on the shapes of the uncertainty sets. There are conservative formulations of the uncertainty set, where the worst-case realization of the uncertain parameters does not depend on the decision variables. In this case, the robust problem is equivalent to the nominal optimization problem by setting the values of the uncertain parameters equal to their worst-case realizations, see \cite{Soyster1973}. As the shape of the uncertainty sets becomes more complex, the complexity of the robust-optimization problem increases and the optimal solution can be substantially different from the one of the nominal optimization problem, see, e.g., \cite{Ben-TalGhaouiNemirovski2009}. This aspect of the robust optimization is reflected and magnified in robust game theory. To magnify the difference between a robust game and its nominal counterpart is the strategic interaction among players. In fact, in game theory a worst-case realization of the value of an uncertain parameter does not change as a function of the decision variables (the action) of a player only but it depends on the action of the opponents as well. This is witnessed by the current example, where the uncertainty sets considered are the simplest possible\footnote{Note that a simpler shape of an uncertainty set is only the trivial one given by a single point or realization, i.e. absence of uncertainty.}, the uncertainty regards a single parameter of the payoff functions, and despite those features the solution of the robust game for $\delta=1$ is very different from the one of the nominal counterpart obtained for $\delta=0$.

\begin{figure}
\begin{center}
\includegraphics[scale=0.45]{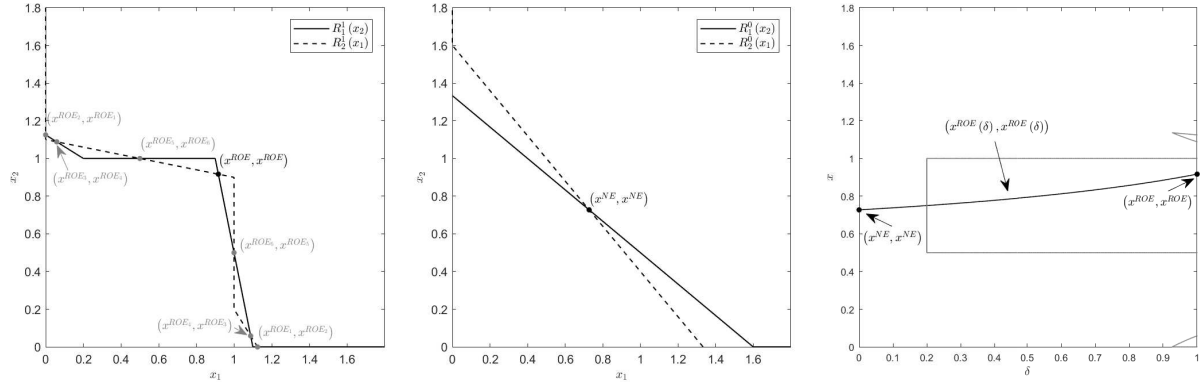}
\caption{Graphical representation of worst-case best-reply functions for the robust game in Example \ref{example1}, left panel. Graphical representation of the best-response functions of the nominal counterpart of the robust game in Example \ref{example1}, panel in the middle. Projections in the action space of player 1 of the robust-optimization equilibria of the robust game as a function of the level of uncertainty, right panel.}\label{FigI}
\end{center}
\end{figure}

\end{example}

\subsection{Measuring the effect of uncertainty}

The example underlines that not all robust-optimization equilibria converge to a Nash equilibrium. Some disappear as uncertainty reduces. Some change and become Nash equilibria when uncertainty vanishes for all players. These are equilibrium configurations that have a Nash equilibrium counterpart. In this section we consider $\delta_{i}=\delta$, for all $i\in\mathcal{N}$, for the sake of notational simplicity, and we discriminate between robust-optimization equilibria with a Nash equilibrium counterpart and robust-optimization equilibria without it.
Specifically, a Nash equilibrium is the counterpart of a robust-optimization equilibrium when reducing the level of uncertainty $\delta$ of the robust game, a continuous trajectory of robust-optimization equilibria parametrized by $\delta$ exists that converges to a Nash equilibrium. A formal definition of Nash equilibrium counterpart is provided in the following and it is essential to discuss similarities and differences between a robust game and its nominal counterpart. 

\medskip
 
\begin{definition}\label{DefinitionROENASHcounterpart}
The action profile $\mathbf{x}^{*}\left(0\right)$ is a Nash equilibrium counterpart  of a robust-optimization equilibrium $\mathbf{x}^{*}\left(\bar{\delta}\right)$ when there exists a continuous function $\chi:\left[0,\bar{\delta}\right]\rightarrow A$, such that $\chi\left(\delta\right)$ is a robust-optimization equilibrium of $\left\{ A_{i},f_{i},W_{i}^{\delta}: i \in \mathcal{N} \right\}$, $\chi\left(0\right)=\mathbf{x}^{*}\left(0\right)$ and $\chi\left(\bar{\delta}\right)=\mathbf{x}^{*}\left(\bar{\delta}\right)$.
\end{definition}

\medskip

According to the definition, the counterparty relationship occurs only between a robust-optimization equilibrium and a Nash equilibrium of the nominal counterpart game. Specifically,  given a robust game with a robust-optimization equilibrium, if a Nash equilibrium counterpart of a robust-optimization equilibrium exists, then it is a Nash equilibrium of the nominal counterpart game.

The existence of a robust-optimization equilibrium without a Nash equilibrium counterpart indicates that the equilibrium strategies of the players of a robust game can have an \emph{abrupt} change when uncertainty vanishes. This has several implications. Small changes on the level of uncertainty may imply big changes on the equilibrium output of the game. Moreover, small errors in the estimation of the level of uncertainty affecting players may cause big errors on the estimated output of the game. On the other hand, the Nash equilibrium counterpart of a robust-optimization equilibrium is not unique and a multiplicity of Nash equilibria that are counterpart of a robust-optimization equilibrium creates \emph{indeterminacy} that complicates the prediction of the effect of an uncertainty reduction on the equilibrium output of the game. This form of indeterminacy is not present in a robust game that admits a unique robust-optimization equilibrium, which has a Nash equilibrium counterpart, and the uniqueness of the equilibrium output is preserved varying the level of uncertainty. This is a robust game that does not admit surprises in the sense that small changes in the level of uncertainty will cause small changes in the equilibrium output which, being unique, is not path-dependent. In this respect, the definition of Nash-equilibrium counterpart of a robust-optimization equilibrium allows us to identify a class of robust games for which the uncertainty impacts \emph{smoothly} on the equilibrium output. More generally, for this class of games small levels of uncertainty ensure that a robust-optimization equilibrium is confined in the neighborhood of the Nash equilibrium of the nominal counterpart game.

A set of sufficient conditions to impose on the payoff functions to have a game with a unique equilibrium output, whatever level of uncertainty, may be very restrictive. Here, we avoid to identify such a set of conditions and we consider a class of robust games that have a unique robust-optimization equilibrium and its uniqueness is preserved when reducing the level of uncertainty. Then, we investigate the conditions to impose on the payoff functions for the existence of a unique Nash-equilibrium counterpart. In particular, the following theorem, the proof of which is in \ref{Ap:T}, provides sufficient conditions to identify such robust games.

\medskip

\begin{theorem}[Existence of Nash-equilibrium counterpart]\label{Th::ContinuityROE}
Consider $\bar{\delta}\in\left(0,1\right]$ and a robust game $\left\{ A_{i},f_{i},W_{i}^{\bar{\delta}} : i \in \mathcal{N} \right\}$ that satisfies the conditions in Assumption \ref{Ass1}. Assume that all the robust games $\left\{ A_{i},f_{i},W_{i}^{\delta} : i \in \mathcal{N} \right\}$, with $\delta\in\left[0,\bar{\delta}\right)$, have a unique robust-optimization equilibrium. This robust-optimization equilibrium has a Nash equilibrium counterpart when, for each $i\in \mathcal{N}$, one of the following conditions is satisfied:
\begin{itemize}
\smallskip
\item[] A1) $f_{i}\left(\boldsymbol{\alpha}_{i};x_{i},\mathbf{x}_{-i}\right)$ is strictly concave in $x_{i}\in A_{i}$, for all $\mathbf{x}_{-i}\in A_{-i}$ and $\boldsymbol{\alpha}_{i}\in W^{\bar{\delta}}_{i}$;
\smallskip
\item[] A2) $\rho^{\delta}_{i}\left(x_{i},\mathbf{x}_{-i}\right)$ is strictly concave in $x_{i}\in A_{i}$, for all $\delta\in\left[0,\bar{\delta}\right]$ and $\mathbf{x}_{-i}\in A_{-i}$.
\end{itemize}
\end{theorem}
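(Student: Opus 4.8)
The plan is to realise the Nash-equilibrium counterpart as the endpoint $\chi(0)$ of a continuous trajectory of robust-optimization equilibria parametrised by $\delta$, and the whole difficulty is to manufacture such a trajectory out of the uniqueness hypothesis on $[0,\bar{\delta})$. First I would observe that conditions A1 and A2 feed a single mechanism: strict concavity of the worst-case payoff in the own action. Under A2 this is assumed outright; under A1 it follows because the pointwise minimum of functions that are each strictly concave in $x_i$ is again strictly concave (evaluating the minimiser at a convex combination gives a strict inequality), and because the sets $W_i^{\delta}=\delta U_i+(1-\delta)\boldsymbol{\alpha}_i^0$ are nested increasingly in $\delta$ (using $\boldsymbol{\alpha}_i^0\in U_i$ and convexity of $U_i$), so A1 at $\bar{\delta}$ transmits to every $\delta\leq\bar{\delta}$. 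Hence, for each $\delta\in[0,\bar{\delta}]$ and each $\mathbf{x}_{-i}$, the maximiser of $\rho_i^{\delta}(\cdot,\mathbf{x}_{-i})$ is unique, i.e. the worst-case best-reply $R_i^{\delta}(\mathbf{x}_{-i})$ of \eqref{RBRFi} is single-valued.

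Next I would upgrade this to joint continuity. Since $U_i$ is compact and $(\delta,u)\mapsto\delta u+(1-\delta)\boldsymbol{\alpha}_i^0$ is jointly continuous, $W_i^{\delta}$ is a continuous (Hausdorff) correspondence of $\delta$; combined with continuity of $f_i$ and Berge's maximum theorem applied to the inner minimisation, $\rho_i^{\delta}(x_i,\mathbf{x}_{-i})$ is jointly continuous in $(\delta,x_i,\mathbf{x}_{-i})$, which extends Lemma \ref{PropWorstCasePayoffFunctions} to include the dependence on $\delta$. Applying Berge's theorem once more to the outer maximisation over the fixed compact set $A_i$, the argmax correspondence is upper hemicontinuous in $(\delta,\mathbf{x}_{-i})$, and single-valuedness turns it into a continuous function. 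Stacking these, $R^{\delta}(\mathbf{x}):=\left(R_1^{\delta}(\mathbf{x}_{-1}),\ldots,R_n^{\delta}(\mathbf{x}_{-n})\right)$ is continuous jointly in $(\delta,\mathbf{x})$, and by \eqref{ROEwcbrf} the robust-optimization equilibria at level $\delta$ are exactly the fixed points of $R^{\delta}$.

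I would then study the equilibrium correspondence $\Phi(\delta):=\{\mathbf{x}\in A:\mathbf{x}=R^{\delta}(\mathbf{x})\}$. Joint continuity of $R^{\delta}$ gives $\Phi$ a closed graph: if $\delta_k\to\delta$ and $\mathbf{x}_k\to\mathbf{x}$ with $\mathbf{x}_k=R^{\delta_k}(\mathbf{x}_k)$, passing to the limit yields $\mathbf{x}=R^{\delta}(\mathbf{x})$. On $[0,\bar{\delta})$ the hypothesis makes $\Phi$ single-valued, so it defines a map $\chi$, continuous there: for $\delta_k\to\delta\in[0,\bar{\delta})$, compactness of $A$ makes every subsequence of $\chi(\delta_k)$ cluster, the closed graph forces each cluster point into the singleton $\Phi(\delta)=\{\chi(\delta)\}$, and hence $\chi(\delta_k)\to\chi(\delta)$. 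Finally $\chi(0)$ is the unique equilibrium of the game at $\delta=0$, where $W_i^{0}=\{\boldsymbol{\alpha}_i^0\}$ forces $\rho_i^{0}(\cdot)=f_i(\boldsymbol{\alpha}_i^0;\cdot)$, so $\chi(0)$ is precisely a Nash equilibrium of the nominal counterpart; this is the candidate counterpart $\mathbf{x}^{*}(0)$.

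The main obstacle is the endpoint $\delta=\bar{\delta}$, where uniqueness is \emph{not} assumed. I would extend $\chi$ by setting $\chi(\bar{\delta}):=\lim_{\delta\to\bar{\delta}^{-}}\chi(\delta)$ and must justify that this one-sided limit exists. Compactness of $A$ guarantees cluster points, and the closed graph places every cluster point in $\Phi(\bar{\delta})$, so any such limit is a robust-optimization equilibrium of the game at $\bar{\delta}$; designating it $\mathbf{x}^{*}(\bar{\delta})$ then makes $\chi$ a continuous path on all of $[0,\bar{\delta}]$ with $\chi(0)=\mathbf{x}^{*}(0)$ and $\chi(\bar{\delta})=\mathbf{x}^{*}(\bar{\delta})$, which is exactly Definition \ref{DefinitionROENASHcounterpart}. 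The delicate point is ruling out that $\chi$ oscillates as $\delta\uparrow\bar{\delta}$: the cluster set is a nested intersection of compact connected sets, hence a continuum contained in $\Phi(\bar{\delta})$, and I would argue that it reduces to a single point, so that the genuine limit exists and is the equilibrium continued from below. This is where the continuous-selection structure inherited from uniqueness on $[0,\bar{\delta})$ does the real work, and it is the step I expect to require the most care.
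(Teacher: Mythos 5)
Your route is essentially the paper's: strict concavity of the worst-case payoffs (derived from A1 via the pointwise-minimum argument, exactly as you do, or assumed in A2) makes the worst-case best replies single-valued; Berge's Maximum Theorem applied twice gives joint continuity of the stacked best-reply map $T(\mathbf{x},\delta)=\left(R_{1}^{\delta}(\mathbf{x}_{-1}),\ldots,R_{n}^{\delta}(\mathbf{x}_{-n})\right)$; and the closed graph of the fixed-point correspondence plus compactness of $A$ plus uniqueness forces the equilibrium path $\delta\mapsto\mathbf{x}^{*}(\delta)$ to be continuous, with $\mathbf{x}^{*}(0)$ the Nash equilibrium of the nominal counterpart. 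This is precisely the paper's proof, and your treatment of A1 is in fact slightly more careful than the paper's, which cites Lemma \ref{PropWorstCasePayoffFunctions} (stated only for concavity) where the strict version is needed.

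The one point of divergence is the endpoint $\delta=\bar{\delta}$, and there your proposal has a genuine unfinished step. You correctly observe that uniqueness is not assumed at $\bar{\delta}$ and that $\chi(\bar{\delta})$ must be produced as the left limit of $\chi(\delta)$. But the argument you sketch does not close it: the cluster set of $\chi(\delta)$ as $\delta\uparrow\bar{\delta}$ is indeed a nested intersection of compact connected sets, hence a continuum inside $\Phi(\bar{\delta})$, but connectedness alone does not force it to be a singleton. If $\Phi(\bar{\delta})$ contains a continuum of equilibria (which the hypotheses do not exclude), a jointly continuous $T$ with unique fixed points for $\delta<\bar{\delta}$ can still have an oscillating selection whose cluster set is a nondegenerate interval; an abstract example is $T(x,\delta)=\delta x+(1-\delta)h(\delta)$ on $[0,1]$ with $h$ bounded and oscillating as $\delta\to 1^{-}$. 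So "the cluster set reduces to a single point" needs an input beyond what you have assembled. The paper resolves this only by tacitly applying the uniqueness hypothesis at $\delta^{*}=\bar{\delta}$ as well (its subsequence argument is run for arbitrary $\delta^{*}\in[0,1]$), i.e., by reading the hypothesis as uniqueness on the closed interval. If you adopt that reading, your argument closes immediately; if you insist on uniqueness only on $[0,\bar{\delta})$, the limit-existence step remains open in your proposal and is not supplied by the paper either.
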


\medskip

Imposing some further conditions on the payoff functions (in some cases of the nominal game only), the existence of a robust-optimization equilibrium and of the counterpart Nash equilibrium allows us to characterize also the behavior of the opportunity cost of uncertainty as uncertainty vanishes, which otherwise is not easy to determine as remarked above. Moreover, under these further assumptions, the existence of a Nash equilibrium counterpart is sufficient to confine a robust-optimization equilibrium of a robust game into the set of $\epsilon$-Nash equilibria of the nominal counterpart game, with $\epsilon$ arbitrary, as long as the level of uncertainty is sufficiently low as stated in the following theorem, the proof of which is in \ref{Ap:T}.

\medskip

\begin{theorem}\label{EspilonNashforCounterpartROE}
Consider a robust game with a positive level of uncertainty, i.e. $\delta>0$, which admits a robust-optimization equilibrium $\left(x^{*}_{i}\left(\delta\right),\mathbf{x}^{*}_{-i}\left(\delta\right)\right)$ which has a Nash equilibrium counterpart $\left(x^{*}_{i}\left(0\right),\mathbf{x}^{*}_{-i}\left(0\right)\right)$. Then,
\begin{itemize}
\smallskip
\item[-] The opportunity cost of uncertainty evaluated in $\mathbf{x}^{*}_{-i}\left(\delta\right)$ is a continuous function in $\delta =0$;
\item[-] For each $\epsilon>0$, there exists $\delta\left(\epsilon\right)\in\left(0,1\right)$ and there exists an $\epsilon$-Nash equilibrium of the nominal counterpart game that is also a robust-optimization equilibrium of the same robust game but with $\delta\left(\epsilon\right)$-level of uncertainty;
\end{itemize}
when one of the following conditions is satisfied:
\begin{itemize}
\smallskip
\item[] H1) for each $i\in \mathcal{N}$, $f_{i}\left(\boldsymbol{\alpha}^{0}_{i};\cdot,\cdot\right)$ is a continuous function;
\smallskip
\item[] H2)  for each $i\in \mathcal{N}$, $f_{i}\left(\boldsymbol{\alpha}_{i};x_{i},\mathbf{x}_{-i}\right)$ is concave w.r.t. $\boldsymbol{\alpha}_{i}$.
\end{itemize}
\end{theorem}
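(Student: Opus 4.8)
The plan is to reduce both claims to the single fact that, along the continuous trajectory of robust-optimization equilibria furnishing the Nash equilibrium counterpart, player $i$'s opportunity cost of uncertainty tends to zero as $\delta\to 0^+$. Write $\chi(\delta)=\left(x_i^*(\delta),\mathbf{x}_{-i}^*(\delta)\right)$ for the continuous selection of robust-optimization equilibria with $\chi(0)=\left(x_i^*(0),\mathbf{x}_{-i}^*(0)\right)$ a Nash equilibrium of the nominal counterpart (Definition \ref{DefinitionROENASHcounterpart}). Two preliminary remarks drive everything. First, since $\chi(\delta)$ is a robust-optimization equilibrium, $x_i^*(\delta)\in\arg\max_{x_i}\rho_i^{\delta}\left(x_i,\mathbf{x}_{-i}^*(\delta)\right)$, so in \eqref{Ebm} we may take $x_i^+(\delta)=x_i^*(\delta)$ and write
\begin{equation*}
C_i^{\delta}\left(\mathbf{x}_{-i}^*(\delta)\right)=M_i\left(\mathbf{x}_{-i}^*(\delta)\right)-\rho_i^{0}\left(x_i^*(\delta),\mathbf{x}_{-i}^*(\delta)\right),\qquad M_i\left(\mathbf{x}_{-i}\right):=\max_{x_i\in A_i}\rho_i^{0}\left(x_i,\mathbf{x}_{-i}\right).
\end{equation*}
Second, because $\chi(0)$ is a nominal Nash equilibrium, $x_i^*(0)$ maximizes $\rho_i^{0}\left(\cdot,\mathbf{x}_{-i}^*(0)\right)$, whence $C_i^{0}\left(\mathbf{x}_{-i}^*(0)\right)=0$.

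For the first bullet under hypothesis H1, I would argue that continuity of $f_i\left(\boldsymbol{\alpha}_i^{0};\cdot,\cdot\right)$ gives continuity of $\rho_i^{0}=f_i\left(\boldsymbol{\alpha}_i^{0};\cdot,\cdot\right)$ (as $W_i^{0}=\left\{\boldsymbol{\alpha}_i^{0}\right\}$), and Berge's maximum theorem then yields continuity of the value function $M_i$ on the compact action set $A_{-i}$. Combining this with the continuity of $\chi$, both terms in the displayed expression converge as $\delta\to 0^+$, and the limit equals $M_i\left(\mathbf{x}_{-i}^*(0)\right)-\rho_i^{0}\left(\chi(0)\right)=0=C_i^{0}\left(\mathbf{x}_{-i}^*(0)\right)$, establishing continuity at $\delta=0$. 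Under hypothesis H2 I would instead invoke Lemma \ref{Lemma::1}, which gives $0\le C_i^{\delta}\left(\mathbf{x}_{-i}^*(\delta)\right)\le \delta\,\bar{E}_i\left(\mathbf{x}_{-i}^*(\delta)\right)$; since $\bar{E}_i$ is bounded on the compact set $A_{-i}$, the right-hand side vanishes as $\delta\to 0^+$ and a squeeze argument gives the same conclusion.

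For the second bullet, I would combine the first bullet with Theorem \ref{Theorem::DeltaROEm}. The latter guarantees that $\chi(\delta)$ is an $\epsilon(\delta)$-Nash equilibrium of the nominal counterpart with $\epsilon(\delta)=\max_{i\in\mathcal{N}}C_i^{\delta}\left(\mathbf{x}_{-i}^*(\delta)\right)$. By the first bullet and the finiteness of $\mathcal{N}$, $\epsilon(\delta)\to 0$ as $\delta\to 0^+$. Hence, given any $\epsilon>0$, there is $\delta(\epsilon)\in(0,1)$ with $\epsilon(\delta(\epsilon))\le\epsilon$; since an $\epsilon'$-Nash equilibrium is an $\epsilon$-Nash equilibrium whenever $\epsilon\ge\epsilon'$, the profile $\chi(\delta(\epsilon))$ is simultaneously an $\epsilon$-Nash equilibrium of the nominal counterpart and, by construction, a robust-optimization equilibrium of the robust game with level $\delta(\epsilon)$.

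The main obstacle is the continuity statement in the first bullet, and specifically two points within it: (i) justifying that $x_i^+(\delta)$ in \eqref{Ebm} may be taken to be the trajectory value $x_i^*(\delta)$, which rests on the equilibrium property of $\chi(\delta)$ and resolves the ambiguity in the $\arg\max$ selection; and (ii) the continuity of the maximum-value function $M_i$, for which Berge's maximum theorem is the natural tool under H1. The H2 route sidesteps (ii) entirely by bounding the opportunity cost linearly in $\delta$ through Lemma \ref{Lemma::1}, which is cleaner but requires the concavity of $f_i$ in $\boldsymbol{\alpha}_i$.
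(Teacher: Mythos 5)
Your proposal is correct and follows essentially the same route as the paper: the same decomposition of $C_i^{\delta}\left(\mathbf{x}_{-i}^{*}\left(\delta\right)\right)$ with $x_i^{+}\left(\delta\right)=x_i^{*}\left(\delta\right)$, the same squeeze via Lemma \ref{Lemma::1} under H2, and the same reduction of the second bullet to Theorem \ref{Theorem::DeltaROEm} together with the monotonicity of the $\epsilon$-Nash notion. The only cosmetic difference is that under H1 you invoke Berge's maximum theorem for the continuity of the value function $M_i$, whereas the paper establishes that continuity by hand through a sequential compactness argument on maximizers $x_i\left(\delta_n\right)$.
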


\medskip

The results in Theorem \ref{EspilonNashforCounterpartROE} identify two different sets of sufficient conditions under which the existence of a Nash equilibrium, which is counterpart of a robust-optimization equilibrium, implies that the opportunity cost of uncertainty of each player computed at the robust-optimization equilibrium converges to zero as uncertainty vanishes. These sufficient conditions regard the nominal payoff functions. Hence, studying the nominal game only, it is possible to state if the opportunity cost of uncertainty of a player vanishes as uncertainty vanishes and, for every $\epsilon$, if the set of all $\epsilon$-Nash equilibria of the nominal counterpart game contains a robust-optimization equilibrium of the robust game. In other terms, assuming valid the behavioral hypothesis that players are robust optimizers, Theorem \ref{EspilonNashforCounterpartROE} indicates that is sufficient to impose conditions on the nominal game only, to have a Nash equilibrium that is \emph{robust} with respect to small levels of uncertainty.

The results so far discussed indicate that the existence of a Nash equilibrium counterpart is relevant for several aspects. The first relevant aspect is the possibility to confine a robust-optimization equilibrium in a neighborhood of the Nash equilibrium counterpart as long as the level of uncertainty is sufficiently low and the neighborhood is identified by the set of $\epsilon$-Nash equilibria, where $\epsilon$ measures the opportunity cost of uncertainty. This indicates equilibrium configurations that are not too dissimilar between a robust game and its nominal counterpart and predictions made according to a nominal game that are \emph{robust} to small levels of uncertainty. A second related aspect is the possibility to identify a class of robust games that are less sensitive to changes in the level of uncertainty. Here sensitivity is intended as the impact of the uncertainty on the equilibrium output of the game. A third relevant aspect is the possibility to use the opportunity cost of uncertainty to proxy the loss of gain that occurs between playing a robust-optimization equilibrium and its counterpart Nash equilibrium. Indeed, under the conditions identified in Theorem \ref{EspilonNashforCounterpartROE}, reducing the level of uncertainty, each set of $\epsilon$-Nash equilibria of the nominal counterpart game contains at least a robust-optimization equilibrium of the robust game, indicating that the opportunity cost of uncertainty converges to zero as uncertainty vanishes. Therefore, as long as the level of uncertainty is sufficiently low, the opportunity cost of uncertainty is a good approximation of the loss of gain that occurs by playing a robust-optimization equilibrium instead of its counterpart Nash equilibrium.

This last point allows us to provide a complete characterization of $\epsilon$-Nash equilibria in terms of robust-optimization equilibria. On one side, the result in Theorem \ref{EspilonNashAsROE} ensures that each $\epsilon$-Nash equilibrium can be interpreted as a robust-optimization equilibrium by introducing some uncertainty on the parameters of the payoff functions; and the $\epsilon$ waiver of an extra profit can be interpreted as the opportunity cost of uncertainty.
On the other side, Theorem \ref{EspilonNashforCounterpartROE} indicates in which specification an uncertainty set represents a criterion of choice to select a subgroup of $\epsilon$-Nash equilibria. In fact, having a robust game that satisfies the conditions in Theorem \ref{EspilonNashforCounterpartROE} indicates that the set of $\epsilon$-Nash equilibria contains a robust-optimization equilibrium which can be used to forecast the $\epsilon$-Nash equilibrium that the players will play. This is a relevant point as the $\epsilon$-Nash equilibria are usually infinitely many and to forecast an $\epsilon$-Nash equilibrium that the players will play is a discretionary choice of the game designer.

Summarizing, the interpretation/justification of the $\epsilon$-approximation related to an $\epsilon$-Nash equilibrium, as the cost for searching for a better solution, does not allow to discriminate among the set of $\epsilon$-Nash equilibria. On the contrary, interpreting the $\epsilon$-Nash equilibria as robust-optimization equilibria of a robust game, we say that there exists only a subgroup of $\epsilon$-Nash equilibria that is consistent with the level of uncertainty that defines the robust game. Therefore, this theoretical foundation of $\epsilon$-Nash equilibria provides also a criterion of choice for $\epsilon$-Nash equilibria.

\subsection{Further remarks}\label{RemSum}

The study of the similarities between an equilibrium output of a robust game and the one of the nominal counterpart game leads us  to introduce the concept of counterpart Nash equilibrium. 
A counterpart Nash equilibrium exists when a robust-optimization equilibrium moves smoothly towards a Nash equilibrium of the nominal counterpart game as uncertainty vanishes. The presence of such an equilibrium indicates a form of \emph{regularity} in the way in which the game reacts to uncertainty. The existence of a robust-optimization equilibrium that converges smoothly towards a Nash equilibrium of the nominal counterpart game does not exclude, however, the existence of other equilibria that have a less regular behavior as uncertainty vanishes. In addition, the existence of a counterpart Nash equilibrium is guaranteed under stringent conditions on the payoff functions of a game. More general results (derived in this section) indicate that a robust-optimization equilibrium of a robust game can be confined in the set of $\epsilon$-Nash equilibria of the nominal counterpart game, where the $\epsilon$ approximation measures the opportunity cost of uncertainty. These theoretical results are based on a parametrization of the uncertainty set and allow to measure the sensitivity of a game with respect to the level of uncertainty. This is only one feature of the uncertainty that influences the behavior of a robust player. The second element is the shape of the uncertainty set. The representation of the uncertainty proposed in this work is indeed characterized by these two elements. In forecasting or policy analysis, it is also relevant to measure the sensitivity of the equilibrium outputs of a robust game with respect to the shape of the uncertainty set. In fact, a large sensitivity to the configuration of the uncertainty set by the equilibrium output of a robust game can lead to wrong forecasts when the uncertainty is misspecified. Hence, the study of robust game theory here proposed is far from being complete.

The discussion underlines that the validity of the results of this paper is therefore confined to the assumption of a correct specification of the shape of the uncertainty set. In fact, only under this hypothesis, the theoretical findings on the analogies between robust games and their nominal counterparts allow us to measure the effect of uncertainty. However, independently of the correct specification of the shape of uncertainty set, the results in this paper do not allow us to neglect uncertainty. The risk is of inaccurate forecasts and misleading deductions. Consider, for example, a nominal game that admits only symmetric equilibrium configurations and a robust version of it that admits asymmetric robust-optimization equilibria. This qualitative discordance may lead to think of a wrong characterization of the nominal game. It occurs every time we observe robust players play an asymmetric robust-optimization equilibrium while the nominal game can predict only a symmetric equilibrium output. It is therefore relevant the possibility that a robust-optimization equilibrium of a robust game can be a Nash equilibrium of a nominal game obtained by considering a realization included in the uncertainty sets of the parameters with unknown value. If this is not the case, then the robust game can generate equilibrium outputs that are difficult even to imagine, not only to forecast, by studying the nominal counterpart game. Emblematic is Example \ref{example1} in this section, which underlines that robust-optimization equilibria characterized by asymmetric actions of the players can occur, but preserving the symmetry property of the robust game, such equilibrium-output configurations are not possible in the nominal game. Moreover, even though a single robust-optimization equilibrium coincides with a Nash equilibrium of a nominal version of the game, the entire set of robust-optimization equilibria may not be replicated by a single nominal version of the game. In fact, in a robust game the worst-case realizations of the parameters depend on the actions undertaken by the players, and this feature of robust games may make it not possible that all the robust-optimization equilibria of a robust game are also Nash equilibria of a nominal game obtained by considering a single constellation for the values of the parameters of the robust game.

Summarizing, the narrative in this section goes in the direction to underline analogies, similarities and differences between a robust game and its nominal counterpart. In the following we consider an economic application of the robust game theory. Specifically, a robust version of a Cournot duopoly model is considered. The duopoly model has a setup similar to the one proposed in \cite{SinghVives1984}. The investigation of the model aims to underline and discuss the complexity that emerges in robust games even considering very simple settings. Consistently with the scope of the example, only a simple configuration of the robust version of the Cournot duopoly model is considered.

\section{An application: A Cournot duopoly game with payoff uncertainty and robust firms}\label{Application}

In this section we consider a Cournot duopoly game with linear inverse demand functions, non constant marginal costs of production and differentiated products. The production is totally sold in the market which is characterized by a representative consumer that expresses a certain degree of substitutability between the two products. The two firms (robust players) that populate the duopoly produce differentiated goods, with each firm that produces one type of output only as in \cite{SinghVives1984}, and are uncertain about the slopes of the inverse demand functions as well as the degree of substitutability of the two products. Denote by $q_{1}$ the production of firm 1 and by $q_{2}$ the production of firm 2. The feasible levels of production for firm $i$ are represented by $A_{i}$, which is a subset of $\mathbb{R}$.

The price at which firm $i$ sells its output (or commodity) $q_{i}$, with $i=1,2$, is given by $P\left(q_{i},q_{-i}\right)=\max\left\{\beta_{i}-s_{i}q_{i}-\gamma q_{-i};0\right\}$, where $q_{-i}$ is the production of the competitor, $\beta_{i}\in \mathbb{R}_{+}$ is the choke price of the commodity produced by firm $i$, $s_{i}\in \mathbb{R}_{+}$ is the price sensitivity of output of firm $i$ with respect to its own production, and $\gamma\in \left[0,s_{i}\right]$ is the degree of substitutability between the two products. The cost of production of output $q_{i}$ is given by $C_{i}\left(q_{i}\right)=c_{i}q_{i}+d_{i}q^2_{i}$, where $c_{i}\in \mathbb{R}_{+}$ and $d_{i}\in \mathbb{R}$.\footnote{For firm $i$, $d_{i}=0$ means constant marginal costs of production (constant return to scale technology), $d_{i}>0$ means increasing marginal costs of production (decreasing return to scale technology), and $d_{i}<0$ means decreasing marginal costs of production (increasing return to scale technology).} Confining the firms' action spaces to levels of production for which the price functions are positive and setting $b_{i}=s_{i}+d_{i}$ and $a_{i}=\beta_{i}-c_{i}$, firm $i$'s profit (or payoff) function is then given by:
\begin{equation}\label{Profitfunctions}
f_{i}\left(\boldsymbol{\alpha}_{i};q_{i},q_{-i}\right)= \left(a_{i} - \gamma  q_{-i} - b_{i}q_{i}\right)q_{i}
\end{equation}
where $\boldsymbol{\alpha}_{i}=\left(b_{i},\gamma\right)$ is the vector of the uncertain parameters that affects the profit (payoff) function of each firm (player) $i$.

Firms (or players) know that the inverse demand functions are linear and downward sloping. However, the reactivity of the market price to changes in the level of production, as well as the degree of substitutability of the products, are uncertain. In particular, adopting the parametric representation of uncertainty proposed above, the uncertainty set of player $i$ is given by $W_{i}^{\delta_{i}}=\delta_{i} U_{i}+\left( 1-\delta_{i} \right) \boldsymbol{\alpha}_{i}^{0}$, where $\delta_{i} \in \left[ 0,1\right]$ measures the level of uncertainty, $\boldsymbol{\alpha}_{i}^{0}$ is the singleton $( \widehat{b}_{i},\widehat{\gamma })$, while $U_{i}$ is a non-empty compact set that includes all the possible values of $\boldsymbol{\alpha}_{i}=\left(b_{i},\gamma \right)$, i.e. it represents the maximum level of uncertainty. Hence, $\boldsymbol{\alpha}_{i}^{0}\subset U_{i}$.

The shape of the uncertainty set $U_{i}$ should reflect the economic relationship among the variables of the game, for example some variables are positively related, others negatively related. However, the uncertainty set does not need to be known by a robust player, who, being a worst-case maximizer, needs to know only the subset made by the worst-case parameter realizations, see Section \ref{ExistenceROE}. This subset does not need to reflect the shape of the uncertainty set, hence the choice of a specific worst-case frontier is only apparently an arbitrary assumption. For example, assuming a worst-case frontier represented by a downward sloping segment in the $b_{i}-\gamma$ plane, it  does not imply that lower values of $b_{i}$ are associated to higher values of $\gamma$ and vice versa. See Figure \ref{Fig1} for an example of three different uncertainty sets that, despite having the same worst-case frontier, indicate different economic relationships between the parameters $b_{i}$ and $\gamma$.

\begin{figure}
\begin{center}
\includegraphics[scale=0.6]{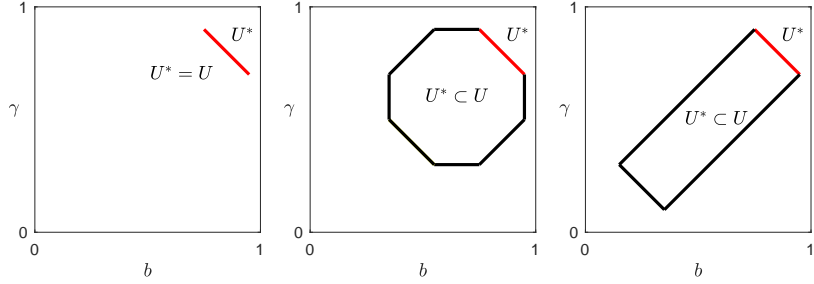}
\caption{Examples of uncertainty sets with the same worst-case frontier. First picture from the left, example of uncertainty set equal to the worst-case frontier. Second picture from the left, uncertainty set of octagonal shape. Third picture from the left, uncertainty set of rectangular shape. In the pictures, the uncertainty set is denoted by $U$ and its worst-case frontier by $U^{*}$.}\label{Fig1}
\end{center}
\end{figure}

In the light of these considerations, in the following only the worst-case realizations will be defined and the economic interpretation of the worst-case frontier is neglected as it is not relevant. In particular, $U_{i}\subset \mathbb{R}^{2}$ is assumed to be made of the segment joining the two worst-case realizations $\left(\bar{b}_{i},\underline{\gamma}\right)$ and $\left(\underline{b}_{i},\bar{\gamma}\right)$, with $\bar{b}_{i}>\underline{b}_{i}$ and $\bar{\gamma}>\underline{\gamma}$. This is one of the simplest representations of a worst-case frontier that makes the outputs of the robust game different from the output of the nominal game. A simpler worst-case frontier is made by a single worst-case realization, which is the case when the uncertainty set is component-wise and each parameter with unknown value impacts either positively or negatively the payoff function independently of the actions of the players. As discussed above, this implies that the robust game is the nominal game with each parameter of the payoff functions that is set at the unique worst-case realization.

Underlying the differences that can occur between a robust game and its nominal counterpart is one of the aims of this example. Consistently with this aim, only the symmetric version of the duopoly is considered. Indeed, this setting allows to better underline analogies and differences between a robust game and its nominal counterpart. Hence, the following restrictions are imposed.

\medskip

\begin{assumption}\label{Assumption1}
These restrictions hold true in the following:
\begin{enumerate}
\item The nominal game is symmetric: Parameters are such that $a_{1}=a_{2}=a$, $\widehat{b}_{1}=\widehat{b}_{2}=\widehat{b}$ and the action spaces are such that $A_{1}=A_{2}=\tilde{A}$;
\item The robust game is symmetric: $U_{1}=U_{2}=U$ (which implies $\underline{b}_{1}=\underline{b}_{2}=\underline{b}$ and $\bar{b}_{1}=\bar{b}_{2}=\bar{b}$) and $\delta_{1}=\delta_{2}=\delta$;
\item All the parameters are non-negative: $a,\bar{b},\underline{b},\bar{\gamma},\underline{\gamma}\geq 0$;
\item The action space $\tilde{A}$ is a non-empty, closed, bounded and convex subset of $\mathbb{R}_{\geq0}$;
\item $2\widehat{b}>\widehat{\gamma}$ (technical condition that implies a unique Nash equilibrium for the nominal game);
\item The set $U$ is defined as follows:
\begin{equation}\label{Uibar}
U=\left\{ \boldsymbol{\alpha}=\left(b,\gamma\right) \ | \ \bar{b} \geq b \geq\underline{b} 
\ \wedge \ \gamma =\frac{\bar{\gamma}\bar{b}-\underline{b}\underline{\gamma}}{\bar{b}-\underline{b}} - \frac{\bar{\gamma}-\underline{\gamma}}{\bar{b} -\underline{b}}b\right\}
\end{equation}
\end{enumerate}
\end{assumption}

\medskip

According to Assumption \ref{Assumption1}, $W^{\delta_{1}}_{1}=W^{\delta_{2}}_{2}=W^{\delta}$, with $W^{\delta}$ which is given by:
\begin{equation}\footnotesize
W^{\delta}=\left\{ \left( b,\gamma \right) \  | \ \bar{b}\left(\delta\right)\geq b\geq\underline{b} \left(\delta\right)
\ \wedge \ \gamma =\frac{\bar{\gamma}\left(\delta\right)\bar{b}\left(\delta\right)-\underline{b}\left(\delta\right)\underline{\gamma}\left(\delta\right)}{\bar{b}\left(\delta\right)-\underline{b}\left(\delta\right)} -\frac{\bar{\gamma}\left(\delta\right)-\underline{\gamma}\left(\delta\right)}{\bar{b}\left(\delta\right)-\underline{b}\left(\delta\right)}b\left(\delta\right)   \right\} 
\end{equation}
where
\begin{eqnarray}
\overline{b}\left( \delta\right)  &=&\left( 1-\delta \right) \widehat{b}+ \delta\overline{b}\\
\underline{b}\left( \delta \right)  &=&\left( 1-\delta \right)\widehat{b}+\delta \underline{b} \\
\overline{\gamma }\left( \delta \right)  &=&\left( 1-\delta \right) \widehat{\gamma }
+\delta \overline{\gamma } \\
\underline{\gamma }\left( \delta \right)  &=&\left( 1-\delta \right) \widehat{\gamma }
+\delta \underline{\gamma }
\end{eqnarray}
Hence $W^{\delta}$ is a non-empty, closed, bounded and convex subset of $\mathbb{R}^{2}$.

Consistently with the definition of robust game proposed in Section \ref{RobustGames}, the \emph{robust Cournot duopoly game} can therefore be defined as follows:
\begin{equation}\label{G}
\left\{A_{i},f_{i},W_{i}^{\delta_{i}}:i\in\left\{1,2\right\}\right\}
\end{equation}
where the dependence of $W^{\delta}$, $f$ and $A$ on $i$ can be dropped because of the assumption of symmetric game. By Definition \ref{WCWPF}, the \emph{worst-case payoff function} of firm (player) $i$ is given by
\begin{equation}
\rho^{\delta}\left(q_{i},q_{-i}\right)=\min_{\boldsymbol{\alpha}\in W ^{\delta}}f \left(\boldsymbol{\alpha};q_{i},q_{-i}\right) 
\end{equation}
The restrictions imposed in Assumption \ref{Assumption1} ensure that action spaces of firms as well as the uncertainty sets respect the conditions imposed in Assumption \ref{Ass1}. In addition, under the parameter value restrictions imposed in Assumption \ref{Assumption1}, it is straightforward to verify that the payoff function of firm $i$ defined in \eqref{Profitfunctions} is continuous and it is concave with respect to $q_{i}$ for all $\left(\boldsymbol{\alpha},q_{-i}\right)\in U\times \tilde{A}$. Therefore, independently of the level of uncertainty, the worst-case payoff function of player $i$ is well-defined, continuous and concave with respect to the action space of the player himself, and by Theorem \ref{EquilibriumExistenceROE} a robust-optimization equilibrium of the robust duopoly game exists as well as a Nash equilibrium of the nominal counterpart game.

To verify the number of robust-optimization equilibria that exist and to compute them, we derive the worst-case best reply functions. For $\delta>0$, the worst-case best reply (or reaction) function for robust firm $i$ is defined as follows:\footnote{The analytical expression of the worst-case best reply function can be obtained by employing the algorithm in \ref{AlgorithmBR}.}
\begin{equation}\label{RBR}
R^{\delta}\left(q_{-i}\right): =\arg \max_{q_{i}\in A }\rho^{\delta}\left(q_{i},q_{-i}\right)=
\left\{ 
\begin{array}{lcr}
\frac{a}{2\overline{b}\left( \delta \right) }-\frac{\underline{\gamma }\left( \delta \right) }{2\overline{b}\left( \delta\right) }q_{-i} &\quad  \text{if} \quad \quad  & \underline{q}\left( \delta \right)> \text{ }q_{-i}\geq0 \\
\\
\frac{\overline{\gamma }  -\underline{\gamma } }{\overline{b}  -\underline{b} }q_{-i} & \quad  \text{if} \quad\quad  & \overline{q}\left( \delta \right)\geq
q_{-i}\geq\underline{q}\left( \delta \right) \\ 
\\
\frac{a}{2\underline{b}\left( \delta\right) }-\frac{\overline{\gamma }\left( \delta\right) }{2\underline{b}\left( \delta\right) }q_{-i}\text{ } & \quad  \text{if} \quad\quad  & q^{M}\left( \delta \right)> q_{-i}>\overline{q}\left( \delta \right) \\ 
\\
0 & \quad  \text{if} \quad\quad  & q_{-i}\geq q^{M}\left( \delta \right)%
\end{array}
\right. 
\end{equation}
where
\begin{eqnarray} \label{qunderlinei}
\underline{q}\left( \delta \right) &=&\frac{a\left( \overline{b}  -\underline{b} 
\right) }{2\overline{b}\left( \delta\right) \left( \overline{\gamma }  -\underline{\gamma } 
\right)+\underline{\gamma }\left( \delta\right) \left( \overline{b}  -\underline{b} \right)} \\ \label{qbari}
\overline{q}\left( \delta \right) &=&\frac{a \left( \overline{b}  -\underline{b} 
\right) }{2\underline{b}\left( \delta\right) \left( \overline{\gamma }  -\underline{\gamma }  \right) +\overline{\gamma }\left( \delta\right) \left( 
\overline{b} -\underline{b} \right) } \\ \label{qMi}
q^{M}\left( \delta \right) &=&\frac{a}{\bar{\gamma}\left( \delta\right) }
\end{eqnarray}

In accordance to the definition of robust-optimization equilibrium in \eqref{ROEwcbrf}, see also \cite{AghassiBertsimas2006} and \cite{CrespiRadiRocca2017}, given an uncertainty level $\delta\in\left[0,1\right]$ and the robust best-reply function in \eqref{RBR}, the output set $\left(q^{*}_{i},q^{*}_{-i}\right)\in A$ is a robust-optimization equilibrium of the robust game \eqref{G} if and only if
\begin{equation}\label{defCournotROE}
q_{i}^{*}= R^{\delta}\left( q^{*}_{-i}\right) \quad \forall i \in \left\{1,2\right\} 
\end{equation}
where ``$=$'' substitutes ``$\in$'' in \eqref{ROEwcbrf}, as for robust game here considered the robust best replies are functions instead of correspondences. 
Hereafter, a robust-optimization equilibrium of the robust duopoly model \eqref{RBR} will be called \emph{Cournot robust-optimization equilibrium} (Cournot ROE in short), which becomes a Cournot-Nash equilibrium when uncertainty vanishes.

Setting $\delta=0$, uncertainty vanishes and the robust duopoly game with Cournot competition becomes a nominal Cournot duopoly game where players' behavior is characterized by a classical best-reply function:
\begin{equation}\label{BRnom}
R^{0}\left( q_{-i}\right) =\left\{ 
\begin{array}{lcr}
\frac{a}{2\widehat{b} }-\frac{\widehat{
\gamma }}{2\widehat{b} }q_{-i} & \quad  \text{if} \quad\quad  & q^{M}\left(0\right)> q_{-i}>0 \\ 
\\
0 & \quad  \text{if} \quad\quad  & q_{-i}\geq q^{M}\left(0\right)
\end{array}%
\right. 
\end{equation}

Solving system \eqref{defCournotROE} when $\delta=0$, it results that the unique Cournot-Nash equilibrium of the nominal counterpart of the duopoly game is the one provided in the following proposition (see proof in \ref{AppendixB}).

\medskip

\begin{proposition}[Cournot-Nash equilibrium of the nominal Cournot duopoly]\label{NENCDG}
Consider Assumption \ref{Assumption1}. The nominal version of the Cournot duopoly game, i.e. game \eqref{G} with $\delta=0$, admits one and only one Cournot-Nash equilibrium which is given by
\begin{equation}\label{NE:NominalGame}
\left(q^{NE},q^{NE}\right)=\left(\frac{a}{2\widehat{b}+\widehat{\gamma}},\frac{a}{2\widehat{b}+\widehat{\gamma }}\right)
\end{equation}
\end{proposition}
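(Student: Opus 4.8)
The plan is to characterize the Cournot-Nash equilibria as the fixed points of the best-reply system and then exploit the piecewise-linear structure of $R^{0}$ together with the technical condition $2\widehat{b}>\widehat{\gamma}$. Recall that $\left(q^{*}_{1},q^{*}_{2}\right)$ is a Cournot-Nash equilibrium precisely when $q^{*}_{i}=R^{0}\left(q^{*}_{-i}\right)$ for $i=1,2$, with $R^{0}$ given by \eqref{BRnom}. First I would look for a symmetric solution $q^{*}_{1}=q^{*}_{2}=q^{*}$ lying in the interior branch of \eqref{BRnom}, i.e. with $0<q^{*}<q^{M}(0)=a/\widehat{\gamma}$. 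There the fixed-point equation reads $q^{*}=\frac{a}{2\widehat{b}}-\frac{\widehat{\gamma}}{2\widehat{b}}q^{*}$, whose unique solution is $q^{*}=a/(2\widehat{b}+\widehat{\gamma})=q^{NE}$. I would then check consistency: since $a\geq 0$ and $\widehat{b},\widehat{\gamma}>0$ one has $q^{NE}\geq 0$, and $2\widehat{b}+\widehat{\gamma}>\widehat{\gamma}$ (because $\widehat{b}>0$) gives $q^{NE}<a/\widehat{\gamma}=q^{M}(0)$, so the candidate indeed sits in the branch used to derive it (the degenerate case $a=0$ forces $q^{NE}=0$, which is immediately the unique equilibrium since $R^{0}\equiv 0$ there). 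Existence is in any case already guaranteed by Theorem \ref{EquilibriumExistenceROE}; the point of this step is only to produce the explicit symmetric candidate \eqref{NE:NominalGame}.

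For uniqueness I would rule out the remaining regime combinations. If some player plays the boundary value, say $q^{*}_{1}=0$, then \eqref{BRnom} forces $q^{*}_{2}\geq q^{M}(0)=a/\widehat{\gamma}$; but the best reply of player $2$ is then $q^{*}_{2}=R^{0}\left(0\right)=a/(2\widehat{b})$ (the interior branch extends continuously to $q_{-i}=0$), and $a/(2\widehat{b})\geq a/\widehat{\gamma}$ would require $\widehat{\gamma}\geq 2\widehat{b}$, contradicting the condition $2\widehat{b}>\widehat{\gamma}$ in Assumption \ref{Assumption1} (for $a>0$; the case $a=0$ is already settled). Hence no equilibrium lies on the flat branch, and both components must use the interior branch. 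There the system is linear, $q^{*}_{1}=\frac{a}{2\widehat{b}}-\frac{\widehat{\gamma}}{2\widehat{b}}q^{*}_{2}$ and $q^{*}_{2}=\frac{a}{2\widehat{b}}-\frac{\widehat{\gamma}}{2\widehat{b}}q^{*}_{1}$; subtracting the two equations yields $\left(q^{*}_{1}-q^{*}_{2}\right)\left(1-\frac{\widehat{\gamma}}{2\widehat{b}}\right)=0$. Since $2\widehat{b}>\widehat{\gamma}$ makes the second factor strictly positive, we must have $q^{*}_{1}=q^{*}_{2}$, which returns us to the symmetric equation already solved, so $\left(q^{NE},q^{NE}\right)$ is the only equilibrium.

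The crux of the argument, and the role of the hypothesis $2\widehat{b}>\widehat{\gamma}$, is exactly that the slope $-\widehat{\gamma}/(2\widehat{b})$ of the interior branch has modulus strictly below one, so that $R^{0}$ is a contraction on $\tilde{A}$; this simultaneously places $q^{NE}$ in the interior branch and forces symmetry of any interior equilibrium. The step I expect to require the most care is the exhaustive handling of the piecewise structure of $R^{0}$ — in particular the mixed case in which one player sits on the flat branch and the other on the interior branch — together with verifying that $q^{NE}\in\tilde{A}$, so that the unconstrained maximizer in \eqref{BRnom} is actually feasible. Once these boundary possibilities are excluded, the interior linear system settles both existence and uniqueness.
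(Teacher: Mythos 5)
Your proposal is correct and follows essentially the same route as the paper: both arguments exploit the piecewise structure of $R^{0}$, rule out any equilibrium involving the flat (zero) branch by deriving the contradiction $\widehat{\gamma}\geq 2\widehat{b}$, and then solve the linear system on the interior branch (the paper's four regions $\Omega_{1},\ldots,\Omega_{4}$ are just the bookkeeping for your branch-by-branch case analysis). Your only addition is making explicit the subtraction step that forces $q^{*}_{1}=q^{*}_{2}$, which the paper compresses into ``straightforward algebra.''
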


\medskip

The strategic profile in \eqref{NE:NominalGame} represents a symmetric Nash equilibrium; all the players involved play the same Nash strategy, which is unique. Symmetry and uniqueness of the equilibrium solution are not guaranteed in the robust duopoly game. A striking feature of the robust duopoly model arises comparing the robust reaction function \eqref{RBR} with its nominal counterpart \eqref{BRnom}. The latter one is monotonically decreasing while the robust counterpart is in general a non-monotone function. This difference reflects in the number of the robust-optimization equilibria of the robust game, which can be multiple and even asymmetric as stated in the following proposition (see proof in \ref{AppendixB}).

\medskip

\begin{proposition}[Cournot-robust-optimization equilibria of the robust Cournot duopoly]\label{ROERCDG}
Consider the robust Cournot duopoly in \eqref{G}, with $\delta>0$. If the shape of the uncertainty set $U$ is such that:
\begin{enumerate}
\item $\overline{b}-\underline{b}>\overline{\gamma }-\underline{\gamma}$, then
\begin{equation}
\left(q^{ROE_{1}},q^{ROE_{1}}\right)=\left(\frac{a}{2\overline{b}\left( \delta \right) +\underline{\gamma}\left( \delta \right)} , \frac{a}{2
\overline{b}\left( \delta\right) + \underline{\gamma }\left( \delta\right) }\right)
\end{equation}
is the unique robust-optimization equilibrium of the robust Cournot duopoly in \eqref{G} and it converges to Cournot-Nash equilibrium of its nominal counterpart game when $\delta\rightarrow 0$.
\item $\overline{b}-\underline{b}=\overline{\gamma}-\underline{\gamma}$, then robust-optimization equilibria fill the interval $E_{1}=\left\{ \left(q,q\right) |\overline{q}\geq q\geq \underline{q}\right\}$. For $\delta\rightarrow 0$, the segment $E$ shrinks into the Cournot-Nash equilibrium $\left(q^{NE},q^{NE}\right)$ in \eqref{NE:NominalGame}.
\item $\overline{b}-\underline{b}<\overline{\gamma}-\underline{\gamma}$ and the level of uncertainty is such that
\begin{itemize}
\item[i)] $\delta<\delta^{*}$, where
\begin{equation}
\delta^{*} = \frac{\overline{\gamma}-2\underline{b}}{\overline{\gamma}-2\underline{b}+2\widehat{b}-\widehat{\gamma}}\left(<1\right),
\end{equation}
then
\begin{equation}
\left(q^{ROE_{2}},q^{ROE_{2}}\right)=\left(\frac{a}{2
\underline{b}\left( \delta\right) +\overline{\gamma }\left( \delta\right) }, \frac{a}{2
\underline{b}\left( \delta\right)+  \overline{\gamma }
\left( \delta\right) }
\right)
\end{equation}
is the unique robust-optimization equilibrium of the Cournot duopoly in \eqref{G} and it converges to Cournot-Nash equilibrium of its nominal counterpart when $\delta\rightarrow 0$.
\item[ii)] $\delta=\delta^{*}$, then robust-optimization equilibria fill the segment $E_{2}=\left\{ \left(q,\frac{a}{2\underline{b}\left( \delta\right) }-q\right) | q^{M}>\frac{a}{2\underline{b}\left( \delta\right) }-q,q>\overline{q}\right\}$.
\item[iii)] $\delta^{*}<\delta<1$ ($\delta^{*}>0$ is equivalent to $2\underline{b} < \overline{\gamma }$), then
$\left(q^{ROE_{2}},q^{ROE_{2}}\right)$,
\begin{equation}
\footnotesize
\left(q^{ROE_{3}},q^{ROE_{4}}\right)= \left(
\frac{a \left( \overline{b}  -\underline{b}  \right) }{2\underline{b}\left( \delta\right) \left( \overline{b}  -\underline{b}  \right) +\overline{\gamma }\left( \delta\right) \left( \overline{\gamma } -\underline{\gamma }  \right) }
,
\frac{a \left( \overline{\gamma }  -\underline{\gamma }  \right) }{2%
\underline{b}\left( \delta\right) \left( \overline{b}  -\underline{b}  \right) +%
\overline{\gamma }\left( \delta\right) \left( \overline{\gamma }  -\underline{\gamma } \right) }%
\right)
\end{equation}
 and $\left(q^{ROE_{4}},q^{ROE_{3}}\right)$ are robust-optimization equilibrium of the Cournot duopoly in \eqref{G}.
\end{itemize}
\end{enumerate}
\end{proposition}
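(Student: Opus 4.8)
The plan is to read off every robust-optimization equilibrium from the fixed-point system \eqref{defCournotROE}, namely $q_i^*=R^\delta(q_{-i}^*)$ for $i\in\{1,2\}$, using the explicit worst-case best-reply function \eqref{RBR}. Because the game is symmetric the two firms share the same $R^\delta$, so a pair $(q_1^*,q_2^*)$ is an equilibrium exactly when the graph of $q_1=R^\delta(q_2)$ meets the graph of its reflection $q_2=R^\delta(q_1)$ across the diagonal. I would therefore split the search into symmetric equilibria on the diagonal, solving $q=R^\delta(q)$, and asymmetric equilibria, which by symmetry occur in mirror pairs $(q_a,q_b),(q_b,q_a)$ with $q_a=R^\delta(q_b)$, $q_b=R^\delta(q_a)$ and $q_a\neq q_b$. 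Since $R^\delta$ is piecewise linear with four pieces, each candidate equilibrium is produced by a specific choice of branch(es), and is genuine only if the resulting actions lie in the domains attached to those branches in \eqref{RBR}; translating these domain conditions is what produces the whole case structure.

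For the symmetric part I would impose $q=R^\delta(q)$ on each branch in turn. The first (decreasing) branch gives $q=\frac{a}{2\overline b(\delta)}-\frac{\underline\gamma(\delta)}{2\overline b(\delta)}q$, whose solution is $q^{ROE_1}$; the third (decreasing) branch gives $q^{ROE_2}$; the middle branch gives $q=\frac{\overline\gamma-\underline\gamma}{\overline b-\underline b}\,q$, which forces $q=0$ unless its slope equals one. The slope of the middle branch, $\frac{\overline\gamma-\underline\gamma}{\overline b-\underline b}$, therefore governs everything: when it is below one (i.e.\ $\overline b-\underline b>\overline\gamma-\underline\gamma$) only the first-branch fixed point $q^{ROE_1}$ survives the domain test $q^{ROE_1}<\underline q(\delta)$, giving case 1; when it exceeds one, $q^{ROE_1}$ leaves its domain while $q^{ROE_2}$ enters the third-branch range $\overline q(\delta)<q^{ROE_2}<q^M(\delta)$, giving the symmetric equilibrium of case 3; when it equals one the middle branch coincides with the diagonal over its whole domain, so every point $(q,q)$ with $\underline q\le q\le\overline q$ is an equilibrium, which is the continuum $E_1$ of case 2.

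The asymmetric equilibria I would obtain by pairing the third branch of one firm with the middle branch of the other: substituting $q_2=\frac{\overline\gamma-\underline\gamma}{\overline b-\underline b}\,q_1$ into $q_1=\frac{a}{2\underline b(\delta)}-\frac{\overline\gamma(\delta)}{2\underline b(\delta)}q_2$ yields $q_1=q^{ROE_3}$ and $q_2=q^{ROE_4}$, and the mirror pair $(q^{ROE_4},q^{ROE_3})$ follows by symmetry. These are equilibria only if $q^{ROE_3}$ lies in the middle-branch domain $[\underline q(\delta),\overline q(\delta)]$ and $q^{ROE_4}$ in the third-branch domain $(\overline q(\delta),q^M(\delta))$; carrying out this check shows that they appear precisely in case 3 and only once the level of uncertainty is large enough that one of these endpoint inequalities is met. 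The threshold $\delta^*$ is exactly the level at which that domain inequality turns into an equality --- equivalently, the level at which the falling branch of $R^\delta$ attains slope $-1$ --- and at that single value the third branch coincides with its own reflection, so the intersection degenerates from isolated points into the segment $E_2$ of case 3(ii). Convergence to the Cournot-Nash equilibrium of Proposition \ref{NENCDG} is then immediate from the closed forms, since $\overline b(\delta),\underline b(\delta)\to\widehat b$ and $\overline\gamma(\delta),\underline\gamma(\delta)\to\widehat\gamma$ as $\delta\to0$, so that $q^{ROE_1},q^{ROE_2}\to\frac{a}{2\widehat b+\widehat\gamma}$ and the degenerate segments shrink to $(q^{NE},q^{NE})$.

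The main obstacle I anticipate is exhaustiveness and bookkeeping rather than any single computation. I must confirm that the branch pairings above capture all equilibria and introduce no spurious ones, which means systematically discarding every other combination (first-with-middle, first-with-third, and any pairing with the zero branch) by showing their candidate intersections violate the corresponding domains, and checking the inequalities against $q^M(\delta)$ so that no firm is ever driven to zero output. Throughout these checks I would keep track of the orderings $\underline q(\delta)<\overline q(\delta)<q^M(\delta)$ and of the sign restrictions granted by Assumption \ref{Assumption1} (in particular $2\widehat b>\widehat\gamma$ and the positivity of all parameters), and pin down the exact value of $\delta^*$ from the binding domain inequality; this delicate accounting of which branch is active for which range of the opponent's action is where the real work lies.
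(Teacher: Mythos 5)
Your proposal follows essentially the same route as the paper: the paper partitions the action space into the $4\times 4=16$ regions determined by which branch of $R^{\delta}$ is active for each firm, solves the linear fixed-point system $\left(q_{1},q_{2}\right)=\left(R^{\delta}\left(q_{2}\right),R^{\delta}\left(q_{1}\right)\right)$ region by region, and discards candidates that violate the region's defining inequalities — exactly your branch-pairing and domain-checking scheme, including obtaining $\left(q^{ROE_{3}},q^{ROE_{4}}\right)$ from the middle/third-branch pairing and the degenerate segments from the slope-$1$ and slope-$(-1)$ coincidences. The bookkeeping you flag as the real work is precisely what the paper's proof carries out, so your plan is correct and matches it.
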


\medskip

Proposition \ref{ROERCDG} underlines that a robust Cournot duopoly can have multiple robust-optimization equilibria, see, e.g., Figure \ref{FigureBR2}. Therefore, in a robust Cournot duopoly game there may even be uncertainty about the Cournot-robust-optimization equilibrium the firms play. The multiplicity of equilibria vanishes as uncertainty vanishes: The nominal counterpart of the game admits a unique Cournot-Nash equilibrium, as indicated in Proposition \ref{NENCDG}. In addition, Proposition \ref{ROERCDG} underlines that the coexistence of Cournot-robust-optimization equilibria requires a certain level of uncertainty as well as a certain shape of the uncertainty set.

The existence of multiple robust-optimization equilibria is only one peculiarity of the robust duopoly model \eqref{G}. The second peculiarity is the presence of asymmetric robust-optimization equilibria (where a robust firm produces more/less than the other one) despite the assumption of identical robust players. The existence of asymmetric equilibria in a symmetric game is not prerogative of robust games. It occurs in nominal games as well. After all, a robust game can be seen as a nominal game as shown in the previous sections. The interesting point to underline here is indeed another one, that is the difference between a robust game that admits asymmetric equilibria and its nominal counterpart game that does not. This aspect has relevant economic implications. In fact, predictions made according to a nominal game can be misleading and uncertainty can have a strong impact on the final outcome of a game.

Summarizing, as the current duopoly model shows, multiple equilibria instead of a unique output and asymmetric equilibria instead of symmetric outputs can be obtained by introducing uncertainty in a game. It is also worth underlining that the worst-case frontier considered is among the simplest ones. Thus, the differences between the robust duopoly model and its nominal counterpart is not the result of an ad-hoc worst-case frontier of the uncertainty set. A more complicated shape of the worst-case frontier could imply an even more marked contrast between the output of robust game and the one of its nominal counterpart, suggesting that using a nominal game to infer the output of a robust game can lead to wrong forecasts.

The presence of multiple and asymmetric robust-optimization equilibria in a symmetric robust duopoly game implies heterogeneous levels of production for the two identical firms. This implies that a firm experiences a maximum guaranteed payoff (profit) at the equilibrium output which is higher than the one of the competitor, as specified in the following proposition (see proof in \ref{AppendixB}). 

\medskip

\begin{proposition}\label{AsROEProfits}
Assume that $\left(q^{ROE_{3}},q^{ROE_{4}}\right)$ is a robust-optimization equilibrium for game \eqref{G}. Playing this robust-optimization equilibrium, robust firm (or player) $1$ produces less then the competitor and it records a lower maximum-guaranteed profit. The opposite holds true in $\left(q^{ROE_{4}},q^{ROE_{3}}\right)$.
\end{proposition}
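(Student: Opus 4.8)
The plan is to exploit the fact that the asymmetric equilibrium $\left(q^{ROE_3},q^{ROE_4}\right)$ arises only in case 3(iii) of Proposition \ref{ROERCDG}, where $\overline{b}-\underline{b}<\overline{\gamma}-\underline{\gamma}$. First I would settle the production ordering: the two components $q^{ROE_3}$ and $q^{ROE_4}$ share the same strictly positive denominator $2\underline{b}\left(\delta\right)\left(\overline{b}-\underline{b}\right)+\overline{\gamma}\left(\delta\right)\left(\overline{\gamma}-\underline{\gamma}\right)$ and the positive numerators $a\left(\overline{b}-\underline{b}\right)$ and $a\left(\overline{\gamma}-\underline{\gamma}\right)$, respectively. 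Hence $\overline{b}-\underline{b}<\overline{\gamma}-\underline{\gamma}$ gives at once $q^{ROE_3}<q^{ROE_4}$, so robust firm $1$, which plays $q^{ROE_3}$, produces strictly less than firm $2$.

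For the profit comparison I would locate the worst-case realizations from the robust best-reply \eqref{RBR}. Firm $1$ (playing $q^{ROE_3}$ against $q^{ROE_4}$) sits on the third branch, whose worst-case parameter is the endpoint $\left(\underline{b}\left(\delta\right),\overline{\gamma}\left(\delta\right)\right)$, whereas firm $2$ (playing $q^{ROE_4}$ against $q^{ROE_3}$) sits on the middle branch. Since $f$ is affine in $\boldsymbol{\alpha}$, the worst case over the segment $W^{\delta}$ is attained at an endpoint, and on the middle branch the defining kink condition $q^{ROE_4}=\frac{\overline{\gamma}-\underline{\gamma}}{\overline{b}-\underline{b}}q^{ROE_3}$ makes both endpoints yield the same payoff. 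Consequently both maximum-guaranteed profits can be evaluated at the single realization $\left(\underline{b}\left(\delta\right),\overline{\gamma}\left(\delta\right)\right)$, giving $\pi_1=\rho^{\delta}\left(q^{ROE_3},q^{ROE_4}\right)=\left(a-\overline{\gamma}\left(\delta\right)q^{ROE_4}-\underline{b}\left(\delta\right)q^{ROE_3}\right)q^{ROE_3}$ and $\pi_2=\rho^{\delta}\left(q^{ROE_4},q^{ROE_3}\right)=\left(a-\overline{\gamma}\left(\delta\right)q^{ROE_3}-\underline{b}\left(\delta\right)q^{ROE_4}\right)q^{ROE_4}$.

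Then I would compute the difference. The cross term $\overline{\gamma}\left(\delta\right)q^{ROE_3}q^{ROE_4}$ is common to both and cancels, so the remainder factors as $\pi_2-\pi_1=\left(q^{ROE_4}-q^{ROE_3}\right)\left[a-\underline{b}\left(\delta\right)\left(q^{ROE_3}+q^{ROE_4}\right)\right]$. The first factor is positive by the production ordering, so everything reduces to signing the bracket, which I expect to be the main obstacle, since it cannot be read off the closed forms alone. To handle it I would invoke firm $1$'s interior first-order condition on the third branch, $a=2\underline{b}\left(\delta\right)q^{ROE_3}+\overline{\gamma}\left(\delta\right)q^{ROE_4}$, which rewrites the bracket as $\underline{b}\left(\delta\right)q^{ROE_3}+\left(\overline{\gamma}\left(\delta\right)-\underline{b}\left(\delta\right)\right)q^{ROE_4}$.

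Finally I would argue $\overline{\gamma}\left(\delta\right)\geq\underline{b}\left(\delta\right)$, so that this bracket is strictly positive. This follows from the requirement that $q^{ROE_3}$ lie in the middle-branch region of firm $2$, i.e. $q^{ROE_3}\leq\overline{q}\left(\delta\right)$: comparing the denominators of $q^{ROE_3}$ and $\overline{q}\left(\delta\right)$ (the numerators again coincide) and using $\overline{b}-\underline{b}<\overline{\gamma}-\underline{\gamma}$ yields $\overline{\gamma}\left(\delta\right)\geq 2\underline{b}\left(\delta\right)$, hence $\overline{\gamma}\left(\delta\right)-\underline{b}\left(\delta\right)\geq\underline{b}\left(\delta\right)\geq 0$ and the bracket is positive. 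Therefore $\pi_2>\pi_1$, i.e. firm $1$ records the strictly lower maximum-guaranteed profit. The statement for $\left(q^{ROE_4},q^{ROE_3}\right)$ then follows verbatim after exchanging the roles of the two firms.
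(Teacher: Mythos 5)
Your proof is correct and follows essentially the same route as the paper's: the ordering $q^{ROE_4}>q^{ROE_3}$ comes from $\overline{b}-\underline{b}<\overline{\gamma}-\underline{\gamma}$, both worst-case payoffs are evaluated at the realization $\left(\underline{b}\left(\delta\right),\overline{\gamma}\left(\delta\right)\right)$, and the profit gap factors as $\left(q^{ROE_4}-q^{ROE_3}\right)\left[a-\underline{b}\left(\delta\right)\left(q^{ROE_3}+q^{ROE_4}\right)\right]$, which is exactly the paper's displayed expression once the closed forms are substituted. Your signing of the bracket via firm $1$'s first-order condition and $\overline{\gamma}\left(\delta\right)>2\underline{b}\left(\delta\right)$ is equivalent to the paper's direct verification, and your explicit check that firm $2$'s middle-branch kink makes both endpoints of the worst-case frontier yield the same payoff is a welcome detail the paper leaves implicit.
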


\medskip

It comes out that in a robust Cournot duopoly game, which is symmetric and whose nominal counterpart has a unique symmetric Cournot-Nash equilibrium, an increase of uncertainty may generate asymmetry in the final outcome of the game and the earnings are not equally distributed. In other words, the aversion to uncertainty can cause two identical firms to coordinate to play an asymmetric robust-optimization equilibrium where one firm produces more and records a higher maximum guaranteed payoff. Moreover, the asymmetric robust-optimization equilibrium does not have a nominal counterpart, consistently with Definition \ref{DefinitionROENASHcounterpart}. This confirms that uncertainty and ambiguity aversion can favor forms of coordination between firms non consistent with the ones predicted by a nominal game. On the contrary, the symmetric Cournot-robust-optimization equilibrium has a Cournot-Nash equilibrium counterpart. As indicated in Proposition \ref{ROERCDG}, the Cournot-robust-optimization equilibrium $\left(q^{ROE_{1}},q^{ROE_{1}}\right)$ converges \emph{smoothly} towards the Cournot-Nash equilibrium $\left(q^{NE},q^{NE}\right)$ when the level of uncertainty goes to zero. For a low level of uncertainty, the existence of a Cournot-robust-optimization equilibrium with a Cournot-Nash equilibrium counterpart is ensured by Theorem \ref{Th::ContinuityROE}. In fact the robust-duopoly model here considered satisfies all the conditions imposed in Theorem \ref{Th::ContinuityROE}.

Concerning the payoffs of the players at the robust-optimization equilibria, an asymmetric robust-optimization equilibrium ensures to a firm a higher profit than the one the same firm would experience by playing the symmetric robust-optimization equilibrium.  The same asymmetric robust-optimization equilibrium is not convenient for the competitor. In particular, we have the following result (see proof in \ref{AppendixB}). 

\medskip

\begin{proposition}\label{AsROEProfitsC}
Assume that $\left(q^{ROE_{3}},q^{ROE_{4}}\right)$, $\left(q ^{ROE_{2}},q^{ROE_{2}}\right)$ and $\left(q^{ROE_{4}},q^{ROE_{3}}\right)$ are robust-optimization equilibria, then
\begin{equation}
\rho^{\delta}\left(q^{ROE_{4}},q^{ROE_{3}}\right) \geq \rho^{\delta}\left(q^{ROE_{2}},q ^{ROE_{2}}\right) \geq \rho^{\delta}\left(q^{ROE_{3}},q^{ROE_{4}}\right) 
\end{equation}
\end{proposition}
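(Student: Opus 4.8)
The plan is to collapse the entire chain of inequalities into a single monotonicity statement about the equilibrium value function. Recall that, by Definition \ref{definition::ROE} and the characterization in \eqref{defCournotROE}, at every robust-optimization equilibrium each firm plays a worst-case best reply to the opponent's output. So, introducing the \emph{equilibrium value function}
$$V\left(q\right):=\max_{\tilde q\in \tilde A}\rho^{\delta}\left(\tilde q,q\right),$$
firm $1$'s maximum-guaranteed profit at each of the three equilibria is exactly $V$ evaluated at the competitor's equilibrium output. Concretely, since $q^{ROE_{4}}$ is firm $1$'s best reply to $q^{ROE_{3}}$, and so on, one has $\rho^{\delta}\left(q^{ROE_{4}},q^{ROE_{3}}\right)=V\left(q^{ROE_{3}}\right)$, $\rho^{\delta}\left(q^{ROE_{2}},q^{ROE_{2}}\right)=V\left(q^{ROE_{2}}\right)$ and $\rho^{\delta}\left(q^{ROE_{3}},q^{ROE_{4}}\right)=V\left(q^{ROE_{4}}\right)$. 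Thus the assertion is equivalent to proving $V\left(q^{ROE_{3}}\right)\geq V\left(q^{ROE_{2}}\right)\geq V\left(q^{ROE_{4}}\right)$.

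First I would show that $V$ is non-increasing. For every fixed own-output $\tilde q\geq 0$ and every $\boldsymbol{\alpha}=\left(b,\gamma\right)\in W^{\delta}$, the map $q\mapsto f\left(\boldsymbol{\alpha};\tilde q,q\right)=\left(a-\gamma q-b\tilde q\right)\tilde q$ has slope $-\gamma\tilde q\leq 0$ in $q$, because $\gamma\geq 0$ on $W^{\delta}$ and $\tilde q\geq 0$ on $\tilde A$; hence it is non-increasing in $q$. A pointwise minimum of non-increasing functions is non-increasing, so $\rho^{\delta}\left(\tilde q,q\right)=\min_{\boldsymbol{\alpha}\in W^{\delta}}f\left(\boldsymbol{\alpha};\tilde q,q\right)$ is non-increasing in $q$; and a pointwise supremum over $\tilde q\in\tilde A$ of non-increasing functions is again non-increasing, so $V$ is non-increasing. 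This is just the Cournot feature that a larger competitor output lowers the price and therefore one's best attainable guaranteed profit.

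The remaining step is to order the three competitor outputs, i.e. to prove $q^{ROE_{3}}\leq q^{ROE_{2}}\leq q^{ROE_{4}}$. Using the closed forms of Proposition \ref{ROERCDG}, write $\Delta b:=\bar b-\underline b$, $\Delta\gamma:=\bar\gamma-\underline\gamma$ and $D:=2\underline{b}\left(\delta\right)\Delta b+\bar\gamma\left(\delta\right)\Delta\gamma$, so that $q^{ROE_{3}}=a\Delta b/D$, $q^{ROE_{4}}=a\Delta\gamma/D$ and $q^{ROE_{2}}=a/\left(2\underline{b}\left(\delta\right)+\bar\gamma\left(\delta\right)\right)$. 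Clearing denominators, both $q^{ROE_{3}}\leq q^{ROE_{2}}$ and $q^{ROE_{2}}\leq q^{ROE_{4}}$ reduce, after cancellation, to the single inequality $\bar b-\underline b\leq\bar\gamma-\underline\gamma$, which is precisely the standing hypothesis of case $3$ of Proposition \ref{ROERCDG} under which these asymmetric equilibria exist. Combining this ordering with the monotonicity of $V$ gives $V\left(q^{ROE_{3}}\right)\geq V\left(q^{ROE_{2}}\right)\geq V\left(q^{ROE_{4}}\right)$, which is the claim.

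I expect no genuine obstacle: the only computational content is the elementary ordering of the three outputs, and the real crux is the structural observation that each equilibrium profit is the value of one and the same non-increasing function $V$ at the respective competitor output, which converts three separate payoff comparisons into a single monotonicity statement. One could instead evaluate all three payoffs in closed form — the first two equal $\underline{b}\left(\delta\right)\left(q^{ROE_{2}}\right)^{2}$ and $\underline{b}\left(\delta\right)\left(q^{ROE_{3}}\right)^{2}$ via the interior first-order condition, while $\rho^{\delta}\left(q^{ROE_{4}},q^{ROE_{3}}\right)$ must be handled on the middle (kink) branch — but the value-function route is shorter and avoids treating the middle-branch payoff explicitly.
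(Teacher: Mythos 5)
Your proof is correct, and it takes a genuinely different route from the paper's. The paper evaluates all three equilibrium payoffs in closed form --- $\rho^{\delta}\left(q^{ROE_{2}},q^{ROE_{2}}\right)=\underline{b}\left(\delta\right)\left(q^{ROE_{2}}\right)^{2}$ and $\rho^{\delta}\left(q^{ROE_{3}},q^{ROE_{4}}\right)=\underline{b}\left(\delta\right)\left(q^{ROE_{3}}\right)^{2}$ from the interior first-order condition, plus a separate, messier expression for $\rho^{\delta}\left(q^{ROE_{4}},q^{ROE_{3}}\right)$ coming from the middle (kink) branch of the worst-case best reply --- and then checks the signs of the two differences by direct algebra. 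You instead note that at each equilibrium the relevant firm plays a worst-case best reply, so all three payoffs are values of the single function $V\left(q\right)=\max_{\tilde q\in\tilde A}\rho^{\delta}\left(\tilde q,q\right)$, which is non-increasing because each $f\left(\boldsymbol{\alpha};\tilde q,\cdot\right)$ has slope $-\gamma\tilde q\leq 0$ and pointwise minima and suprema preserve monotonicity; the claim then reduces to the output ordering $q^{ROE_{3}}\leq q^{ROE_{2}}\leq q^{ROE_{4}}$, which both proofs establish from the closed forms and the standing hypothesis $\bar b-\underline b\leq\bar\gamma-\underline\gamma$. What your route buys is that the kink-branch payoff, where the worst-case realization is not a single vertex of the uncertainty set, never has to be computed; what the paper's route buys is explicit formulas quantifying the payoff gaps. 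One small caveat: the cancellations reducing the two output comparisons to $\bar b-\underline b\leq\bar\gamma-\underline\gamma$ divide by $\bar\gamma\left(\delta\right)$ and $\underline b\left(\delta\right)$ respectively; the former is strictly positive in the coexistence regime ($\bar\gamma\left(\delta\right)>2\underline b\left(\delta\right)\geq 0$), and if $\underline b\left(\delta\right)=0$ the second comparison degenerates to an equality, so the weak inequalities asserted in the proposition survive in every case.
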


\medskip

Therefore, for firm $2$ it is convenient (in terms of maximum guaranteed payoff) to play the asymmetric robust-optimization equilibrium $\left(q^{ROE_{3}},q^{ROE_{4}}\right)$ rather than the symmetric robust-optimization equilibrium $\left(q^{ROE_{2}},q^{ROE_{2}}\right)$. On the contrary, for firm $1$ it is convenient to play the symmetric robust-optimization equilibrium $\left(q^{ROE_{2}},q^{ROE_{2}}\right)$, rather than the asymmetric robust-optimization equilibrium $\left(q^{ROE_{3}},q^{ROE_{4}}\right)$. Symmetric arguments suggest that the vice versa is true when the asymmetric robust-optimization equilibrium $\left(q^{ROE_{4}},q^{ROE_{3}}\right)$ is considered in place of $\left(q^{ROE_{3}},q^{ROE_{4}}\right)$. 

As a further remark on this point, in Figure \ref{FigureBR2}, third picture from the left, we can observe that the nominal profit of a firm at a robust-optimization equilibrium is larger than the one at the Nash equilibrium. Despite the profit of a firm at the Nash equilibrium being lower than the one at a robust-optimization equilibrium, it is always convenient to reduce one's own level of uncertainty. This benefit can also be quantified. As underlined in the previous section, each firm (robust player) associates an opportunity cost of uncertainty to each possible action profile of this robust duopoly game. This opportunity cost of uncertainty measures the maximum amount that a single firm would pay to eliminate uncertainty when the competitor does not modify its own level of production. According to definition \eqref{Ebm}, the opportunity cost of uncertainty for firm $i$ is given by
\begin{equation}
C^{\delta}\left(q_{-i}\right)=\rho^{0}\left(R^{0}\left(q_{-i}\right),q_{-i}\right)-\rho^{0}\left(R^{\delta}\left(q_{-i}\right),q_{-i}\right)
\end{equation}
where the best-reply function $R^{\delta}$ is as in \eqref{RBR}.

The behavior of firm $i$'s opportunity cost of uncertainty at a robust-optimization equilibrium, as a function of the level of uncertainty, is drawn in Figure \ref{FigureBR2}, picture in the middle. Considering the symmetric Cournot-robust-optimization equilibrium, the opportunity cost of uncertainty converges to zero when the level of uncertainty of the game vanishes. This indicates that the Cournot-robust-optimization equilibrium converges \emph{smoothly} to the Cournot-Nash equilibrium of the nominal game when uncertainty vanishes, i.e. the Cournot-Nash equilibrium is its nominal counterpart. The asymmetric Cournot-robust-optimization equilibria disappear instead when uncertainty shrinks, see Proposition \ref{ROERCDG}, therefore they do not have a Cournot-Nash equilibrium counterpart. Nevertheless, the opportunity cost of uncertainty at these equilibria shrinks when uncertainty reduces. This is not a surprise, as for the robust Cournot duopoly game the results in Theorem \ref{Theorem::DeltaROE} are valid since the payoff functions are concave with respect to the uncertain parameters.

As a final remark, the theoretical results developed in the previous section, specifically Theorem \ref{Theorem::DeltaROEm}, indicate that in equilibrium robust players play an $\epsilon$-Nash equilibrium, where $\epsilon$ measures the maximum of players' opportunity cost of uncertainty. To underline this aspect, consider a parameter constellation such that three Cournot-robust-optimization equilibria exist. Figure \ref{FigureBR2}, first picture from the left, shows that the Cournot-robust-optimization equilibria of the duopoly game are $\epsilon$-Cournot-Nash equilibria.


\begin{figure}
\begin{center}
\includegraphics[scale=0.45]{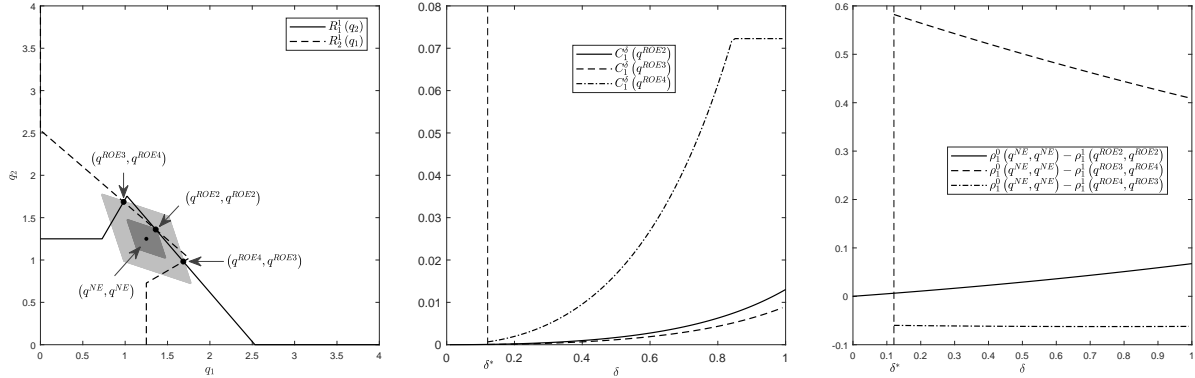}
\caption{First picture form the left, we observe in light-gray the set of all $\epsilon$-Nash equilibria of the nominal duopoly game, with $\epsilon = \max\left\{C_{1}^{\delta}\left(q^{ROE_{4}}\right),C_{2}^{\delta}\left(q^{ROE_{3}}\right)\right\}$, which contains the robust-optimization equilibrium $\left(q^{ROE_{3}},q^{ROE_{4}}\right)$, see Theorem \ref{Theorem::DeltaROEm}, and $\left(q^{ROE_{4}},q^{ROE_{3}}\right)$ by symmetry. In dark gray the set of all $\epsilon$-Nash equilibria of the nominal duopoly game, with $\epsilon = \max\left\{C_{1}^{\delta}\left(q^{ROE_{2}}\right),C_{2}^{\delta}\left(q^{ROE_{2}}\right)\right\}$, which contains the robust-optimization equilibrium $\left(q^{ROE_{2}},q^{ROE_{2}}\right)$, see Theorem \ref{Theorem::DeltaROEm}. Second picture from the left, the opportunity cost of uncertainty for firm $i$ as a function of the level of uncertainty $\delta$. The third picture form the left, the different of profit that a firm obtain at a robust-optimization equilibrium with respect to the Nash equilibrium of the game.}\label{FigureBR2}
\end{center}
\end{figure}

At the end of this section, it is worth noting the modeling flexibility offered by the robust game theory. As already pointed out, the worst-case best-reply functions are not monotonically decreasing despite the assumptions of a linear inverse demand function and of linear marginal costs. This is a relevant point. In fact non-monotone reaction functions are assumed by many, see for example \cite{Rand1978}, to show the inherent chaotic nature of a Cournot system. However, microeconomic foundations of Cournot duopoly games characterized by non-monotonic reaction curves usually require ad-hoc assumptions, such as marginal costs represented by cubic functions see, e.g. \cite{Furth1986}, or cost functions with an inter-firm externalities, see, e.g., \cite{PostonStewart1978}. This opens the puzzle on 
which market structures are capable of generating non-descending best reply functions. Robust game theory shows that non-monotonic reaction curves are possible even assuming simple market structures as long as there is some uncertainty about payoffs and firms are worst-case maximizers.

\section{Summary and conclusions}\label{Conc}

A robust game is a generalization of a normal form game where the worst-case approach is employed by players to handle uncertainty about payoff environment and this uncertainty is represented by a bounded set of possible realizations. This modeling framework is complementary to the ambiguity-game framework where players have only vague beliefs about opponents' actions and define strategies that maximize their Choquet expected utilities. The theoretical results in this work underline the analogies, similarities and differences between a robust game and its nominal counterpart.

The first aim has been to provide theoretical foundations for robust games by investigating the link with nominal games. The main results are summarized in the following. Given a robust game $\left\{ A_{i},f_{i},W_{i}^{\delta_{i}} : i \in N \right\}$, we have that
\begin{itemize}
\smallskip
\item $\left\{ A_{i},f_{i},W_{i}^{0} : i \in N \right\}$ is a nominal game, it can be rewritten as $\left\{ A_{i},f_{i} : i \in N \right\}$, and it is called nominal counterpart game;
\smallskip
\item $\left(x_{1}^{*},\ldots,x_{n}^{*}\right)$ is a robust-optimization equilibrium of the robust game if and only if it is a Nash equilibrium of the (nominal) game $\left\{ A_{i},\rho_{i}^{\delta_{i}} : i \in N \right\}$, where $\rho_{i}^{\delta_{i}}$ are the worst-case payoff functions defined in \eqref{WCWPF};
\smallskip
\item Given the action of the opponents $\mathbf{x}_{-i}$, player $i$'s opportunity cost of uncertainty is measured by function $C^{\delta_{i}}_{i}\left(\mathbf{x}_{-i}\right)$ defined in \eqref{Ebm};
\smallskip
\item A robust-optimization equilibrium of a robust game is an $\epsilon$-Nash equilibrium of the nominal counterpart game, where $\epsilon$ is the maximum of players' opportunity cost of uncertainty at the robust-optimization equilibrium, see Theorem \ref{Theorem::DeltaROEm}.
\smallskip
\item If $\left(x_{1}^{*},\ldots,x_{n}^{*}\right)$ is a robust-optimization equilibrium of a robust game, $\epsilon$ defined as in Theorem \ref{Theorem::DeltaROEm} measures the uncertainty aversion at the equilibrium in terms of opportunity cost of uncertainty;
\smallskip
\smallskip
\item Each $\epsilon$-Nash equilibrium of a nominal game can be seen as a robust-optimization equilibrium of a robust game where the first is its nominal counterpart, see Theorem \ref{EspilonNashAsROE}.
\smallskip
\item Under some regularity assumptions, player $i$'s opportunity cost of uncertainty converges to zero as the uncertainty of the game vanishes and a robust-optimization equilibrium of the robust game converges to a Nash equilibrium of the nominal counterpart game.
\smallskip
\end{itemize}

The second aim of this paper has been to show that neglecting players' aversion towards the uncertainty of the environment of the game may be dangerous. Indeed, the possible differences in terms of equilibrium outputs between a robust game and its nominal counterpart may be many.

A natural extension of the current work is to develop a general game-theoretical framework that accounts for both ambiguity on the opponents' actions and uncertainty of the payoff environment of the game. An intermediate research project could be limited to introducing soft-robust games where players' uncertain payoff environment is described by a distribution that belongs to some set of probability measures and players take actions merging robust and stochastic approach as in \cite{BenTalBertsimasBrown2010}. The robust-game framework can be generalized to account for different degrees of uncertainty aversion. As a player may be pessimistic looking for the maximum guaranteed payoff, another one may be optimistic looking for the maximum profit possible. Experimental tests of the robust-game setup is another relevant research topic. Important is also to study and to model the time evolution of the uncertainty aversion when historical data on opponents' actions and performances become available.   

\medskip

\begin{acknowledgements}
The authors are grateful for helpful comments and suggestions by Herbert Dawid, Michael Kopel, Luca Lambertini, Arsen Palestini, Jan Wenzelburger and by participants of the following
Workshops, Conferences and Research Seminars: the INFORMS Annual Meeting in Houston~(2017); the 14th Viennese Conference on Optimal Control and Dynamic Games, Vienna~(2019); XLII AMASES Annual Meeting in Naples~(2018); the 10th Workshop Dynamic Models in Economics and Finance -- MDEF in Urbino~(2018); the 19th Annual SAET Conference in Ischia~(2019). The research was supported by V\v{S}B-TU Ostrava under the SGS Project SP2018/34. All remaining mistakes are our own. 
\end{acknowledgements}

\medskip

\textbf{Compliance with Ethical Standards:}
The authors declare that they have no conflict of interest.

\appendix

\renewcommand\thesection{Appendix \Alph{section}}

\addcontentsline{toc}{section}{Appendices}

\section{- Proofs of results on robust games.}\label{Ap:T}

\begin{proof}[Proof of Theorem \ref{EquilibriumExistenceROEandNASHequivalentGame}] This proof of the Theorem follows the one proposed in \cite{AghassiBertsimas2006} and is reported here only for the sake of completeness. We will proceed by constructing a point-to-set mapping that satisfies the conditions of Kakutani's Fixed Point Theorem, see \cite{Kakutani1941}, and whose fixed points are precisely the equilibria of the robust game. To begin, clearly, $A=\times_{i\in\mathcal{N}}A_{i}$ is non-empty, closed, bounded, and convex, since $A_{i}$ is, $\forall i \in\mathcal{N}$. Define $\Phi:A\rightarrow 2^{A}$ as
\begin{equation}
\Phi\left(x_{1},\ldots,x_{n}\right)=\left\{\left(y_{1},\ldots,y_{n}\right)\in A| y_{i}\in\arg\max_{u_{i}\in A_{i}}\rho^{\delta_{i}}_{i}\left(u_{i},\mathbf{x}_{-i}\right), i=1,\ldots,n\right\}
\end{equation}
Let us show that $\Phi\left(x_{1},\ldots,x_{N}\right)\neq\emptyset$, $\forall\left(x_{1},\ldots,x_{N}\right)\in A$. By hypothesis of $\rho^{\delta_{i}}_{i}$ continuous in $A$ for all $i\in\mathcal{N}$, it follows that $\forall i$, $\forall \mathbf{x}_{-i}\in A_{-i}$ fixed, $\rho^{\delta_{i}}_{i}\left(x_{i},\mathbf{x}_{-i}\right)$ is continuous on $A_{i}$, a non-empty, closed, and bounded subset of an Euclidean space. Thus, by Weierstrass' Extreme Value Theorem,
\begin{equation}
\arg \max_{u_{i}\in A_{i}}\rho^{\delta_{i}}_{i}\left(u_{i},\mathbf{x}_{-i}\right)\neq\emptyset
\end{equation}
Accordingly, $\forall \left(x_{1},\ldots,x_{n}\right)\in A$,
\begin{equation}
\Phi\left(x_{1},\ldots,x_{n}\right)\neq\emptyset
\end{equation}
It is obvious from the definition of $\Phi$, that $\forall\left(x_{1},\ldots,x_{n}\right)\in A$, $\Phi\left(x_{1},\ldots,x_{n}\right)\subseteq A$, and that $\left(x_{1},\ldots,x_{n}\right)$ is an equilibrium of the robust game if and only if it is a fixed point of $\Phi$. Thus, we need only prove the existence of a fixed point of $\Phi$. Let us therefore establish that $\Phi$ satisfies the remaining conditions of Kakutani's Fixed Point Theorem; that is, we must show that $\Phi$ maps $A$ into a family of closed, convex sets, and that $\Phi$ is upper semi-continuous.
Let us first prove that, $\forall\left(x_{1},\ldots,x_{n}\right)\in A$, $\Phi\left(x_{1},\ldots,x_{n}\right)$ is a convex set. Suppose
\begin{equation}
\left(u_{1},\ldots,u_{n}\right),\left(v_{1},\ldots,v_{n}\right)\in\Phi\left(x_{1},\ldots,x_{N}\right)
\end{equation}
Then, by the definition of $\Phi$, $\forall i\in\mathcal{N}$, $\forall y_{i}\in A_{i}$,
\begin{equation}
\rho^{\delta_{i}}_{i}\left(u_{i},\mathbf{x}_{-i}\right)=\rho^{\delta_{i}}_{i}\left(v_{i},\mathbf{x}_{-i}\right)\geq\rho^{\delta_{i}}_{i}\left(y_{i},\mathbf{x}_{-i}\right)
\end{equation}
It follows that, $\forall \lambda \in \left[0,1\right]$, $\forall y_{i}\in A_{i}$,
\begin{equation}
\lambda\rho^{\delta_{i}}_{i}\left(u_{i},\mathbf{x}_{-i}\right)+\left(1-\lambda\right)\rho^{\delta_{i}}_{i}\left(v_{i},\mathbf{x}_{-i}\right)\geq \rho^{\delta_{i}}_{i}\left(y_{i},\mathbf{x}_{-i}\right)
\end{equation}
By the hypothesis of $\rho^{\delta_{i}}_{i}$ concave on $A_{i}$, it follows that
\begin{equation}
\lambda \left(u_{1},\ldots,u_{n}\right)+\left(1-\lambda\right)\left(v_{1},\ldots,v_{n}\right)\in\Phi\left(x_{1},\ldots,x_{n}\right)
\end{equation}
Hence $\Phi\left(x_{1},\ldots,x_{n}\right)$ is a non-empty and convex subset of $A$.
Let us now show that $\Phi$ is upper semi-continuous. Suppose that, for $k=1,2,3,\ldots,$
\begin{equation}
\left(x_{1}^{k},\ldots,x_{n}^{k}\right)\in A
\end{equation}
\begin{equation}
\left(y_{1}^{k},\ldots,y_{n}^{k}\right)\in \Phi\left(x_{1}^{k},\ldots,x_{n}^{k}\right)
\end{equation}
\begin{equation}
\lim_{k\rightarrow\infty}\left(x_{1}^{k},\ldots,x_{n}^{k}\right)= \left(u_{1},\ldots,u_{n}\right)\in A
\end{equation}
\begin{equation}
\lim_{k\rightarrow\infty}\left(y_{1}^{k},\ldots,y_{n}^{k}\right)= \left(v_{1},\ldots,v_{n}\right)\in A
\end{equation}
By the definition of $\Phi$, we know that, $\forall k=1,2,3,\ldots$, $\forall i \in \mathcal{N}$ and $\forall w_{i}\in A_{i}$,
\begin{equation}
\rho^{\delta_{i}}_{i}\left(y_{i}^{k},\mathbf{x}_{-i}^{k}\right)\geq\rho^{\delta_{i}}_{i}\left(w_{i},\mathbf{x}_{-i}^{k}\right)
\end{equation}
Taking the limit of both sides
\begin{equation}
\lim_{k\rightarrow +\infty}\rho^{\delta_{i}}_{i}\left(y_{i}^{k},\mathbf{x}_{-i}^{k}\right)\geq \lim_{k\rightarrow +\infty}\rho^{\delta_{i}}_{i}\left(w_{i},\mathbf{x}_{-i}^{k}\right)
\end{equation}
by hypothesis of continuity of $\rho^{\delta_{i}}_{i}$ on $A$, we obtain that, $\forall i \in\mathcal{N}$ and $\forall w_{i}\in A_{i}$,
\begin{equation}
\rho^{\delta_{i}}_{i}\left(v_{i},\mathbf{u}_{-i}\right)\geq\rho^{\delta_{i}}_{i}\left(w_{i},\mathbf{u}_{-i}\right)
\end{equation}
Hence,
\begin{equation}
\left(v_{1},\ldots,v_{n}\right)\in \Phi\left(u_{1},\ldots,u_{n}\right)
\end{equation}
and $\Phi$ is upper semi-continuous. Note that the fact that $\Phi\left(x_{1},\ldots,x_{N}\right)$ is closed follows from the fact that $\Phi$ is upper semi-continuous. This completes the proof that $\Phi$ satisfies the conditions of Kakutani's Fixed Point Theorem, and thereby establishes the existence of an equilibrium in the robust game.
\end{proof}

\begin{proof}[Proof of Lemma \ref{PropWorstCasePayoffFunctions}] Since $f_{i}$ are continuous on $W^{\delta_{i}}_{i}\times A$ by hypothesis, $W^{\delta_{i}}_{i}\times A$ is bounded by hypothesis, and the worst-case payoff function is defined as
\begin{equation}
\rho^{\delta_{i}}_{i}\left(x_{i},\mathbf{x}_{-i}\right)= \min_{\boldsymbol{\alpha}_{i}\in W^{\delta_{i}}_{i}}f_{i}\left(\boldsymbol{\alpha}_{i};x_{i},\mathbf{x}_{-i}\right)
\end{equation}
the continuity of $\rho^{\delta_{i}}_{i}$ follows from Berge's Maximum Theorem, see \cite[pp. 115-117]{Berge1963}. Consider $x_{i}\in A_{i}$ and $y_{i} \in A_{i}$. By the hypothesis of the convexity of $A_{i}$ and the concavity of $f_{i}$ on $A_{i}$, for each $\lambda\in\left[0,1\right]$ and for each $\mathbf{x}_{-i}\in A_{-i}$ we have:
\begin{equation}
\begin{array}{lll}
\rho^{\delta_{i}}_{i}\left(\lambda x_{i}+\left(1-\lambda\right)y_{i},\mathbf{x}_{-i}\right) & = & \min_{\boldsymbol{\alpha}_{i}\in W^{\delta_{i}}_{i}}f\left(\boldsymbol{\alpha}_{i};\lambda x_{i}+\left(1-\lambda\right)y_{i},\mathbf{x}_{-i}\right)\\
\\
& = & f\left(\boldsymbol{\alpha}^{*}_{i};\lambda x_{i}+\left(1-\lambda\right)y_{i},\mathbf{x}_{-i}\right)\\
\\
& \geq & \lambda f\left(\boldsymbol{\alpha}^{*}_{i}; x_{i},\mathbf{x}_{-i}\right) + \left(1-\lambda\right)f\left(\boldsymbol{\alpha}^{*}_{i}; y_{i},\mathbf{x}_{-i}\right)\\
\\
& \geq & \lambda \min_{\boldsymbol{\alpha}_{i}\in W^{\delta_{i}}_{i}}f\left(\boldsymbol{\alpha}_{i}; x_{i},\mathbf{x}_{-i}\right) + \left(1-\lambda\right)\min_{\boldsymbol{\alpha}_{i}\in W^{\delta_{i}}_{i}}f\left(\boldsymbol{\alpha}_{i}; y_{i},\mathbf{x}_{-i}\right)\\
\\
& \geq &\lambda\rho^{\delta_{i}}_{i}\left( x_{i},\mathbf{x}_{-i}\right)+\left(1-\lambda\right)\rho^{\delta_{i}}_{i}\left( y_{i},\mathbf{x}_{-i}\right)
\end{array}
\end{equation}
which proves the concavity of $\rho^{\delta_{i}}_{i}\left(\cdot,\mathbf{x}_{-i}\right)$ for every $\mathbf{x}_{-i}\in A_{-i}$.
\end{proof}

\begin{proof}[Proof of Theorem \ref{EquilibriumExistenceROE}]
The conditions in Assumption \ref{Ass1} imply that the worst-case payoff functions satisfy the conditions imposed in Theorem \ref{EquilibriumExistenceROEandNASHequivalentGame}, see Lemma \ref{PropWorstCasePayoffFunctions}. Then, the first point of the theorem follows (the first point is also considered in \cite{CrespiRadiRocca2017}, the proof is reported here for the sake of completeness). 
Concerning the second point. If $\left\{ A_{i},f_{i},W_{i}^{\delta_{i}} : i \in \mathcal{N} \right\}$ satisfies the conditions in Assumption \ref{Ass1}, then $\left\{ A_{i},f_{i},W_{i}^{\bar{\delta}_{i}} : i \in \mathcal{N} \right\}$, with $\bar{\delta}_{i}\in\left[0,\delta_{i}\right]$, satisfies the conditions in Assumption \ref{Ass1} since $W^{\bar{\delta}_{i}}\subseteq W^{\bar{\delta}_{i}}$. Hence, the proof of the second point of theorem follows form the first one.
\end{proof}

\begin{proof}[Proof of Theorem \ref{Theorem::DeltaROEm}]
Given a generic strategic profile $\left(x_{1}^{+},\ldots,x_{n}^{+}\right)\in A$, we have that
\begin{equation}
\rho^{0}_{i}\left(x^{+}_{i},\mathbf{x}^{+}_{-i}\right) \geq  \rho^{0}_{i}\left(x_{i},\mathbf{x}^{+}_{-i}\right) - \max_{x_{i}\in A_{i}}\rho^{0}_{i}\left(x_{i},\mathbf{x}^{+}_{-i}\right)+\rho^{0}_{i}\left(x^{+}_{i},\mathbf{x}^{+}_{-i}\right)  \text{, } \quad \forall x_{i} \in A_{i} \ \text{ and } \ \forall i\in \mathcal{N}
\end{equation}
Then, let $ \left(x_{1}^{*},\ldots,x_{n}^{*}\right) \in A$ be a robust-optimization equilibrium of $\left\{ A_{i},f_{i},W_{i}^{\delta_{i}} : i \in \mathcal{N} \right\}$, it follows that
\begin{equation}
\rho^{0}_{i}\left(x^{*}_{i},\mathbf{x}^{*}_{-i}\right) \geq  \rho^{0}_{i}\left(x_{i},\mathbf{x}^{*}_{-i}\right) - \max_{x_{i}\in A_{i}}\rho^{0}_{i}\left(x_{i},\mathbf{x}^{*}_{-i}\right)+\rho^{0}_{i}\left(x^{*}_{i},\mathbf{x}^{*}_{-i}\right)= \rho^{0}_{i}\left(x_{i},\mathbf{x}^{*}_{-i}\right) -C^{\delta_{i}}_{i}\left(\mathbf{x}^{*}_{-i}\right)  \text{, } \quad \forall i\in \mathcal{N}
\end{equation}
where $C_{i}^{\delta_{i}}\left(\mathbf{x}_{-i}\right)$ is defined in \eqref{Ebm}. Set $\epsilon =\max\left\{C^{\delta_{1}}_{1}\left(\mathbf{x}_{-1}^{*}\right),\ldots, C^{\delta_{n}}_{n}\left(\mathbf{x}_{-n}^{*}\right)\right\}$, we get
\begin{equation}\label{eqproof}
\rho^{0}_{i}\left(x_{i}^{*},\mathbf{x}_{-i}^{*}\right)\geq\rho^{0}_{i}\left(x_{i},\mathbf{x}_{-i}^{*}\right)-\epsilon  \text{, } \quad \forall x_{i} \in A_{i} \ \text{ and } \ \forall i\in \mathcal{N}
\end{equation}
Since $\rho^{0}_{i}$ corresponds to the payoff function of agent $i$ in the nominal counterpart game, for every $i \in \mathcal{N}$, \eqref{eqproof} proves the first result of the theorem. To prove that $ \left(x_{1}^{*},\ldots,x_{n}^{*}\right) \in A$ is not an $\hat{\epsilon}$-Nash equilibrium, with $\hat{\epsilon}\in\left(0,\epsilon\right)$, note that by definition $C_{i}^{\delta_{i}}\left(\mathbf{x}^{*}_{-i}\right)\geq 0$. If $C_{i}^{\delta_{i}}\left(\mathbf{x}^{*}_{-i}\right)=0$ for all $i\in\mathcal{N}$, then $\epsilon=0$ by definition and $\hat{\epsilon}\in\left(0,\epsilon\right)$ cannot exist. If $\exists$ $i\in\mathcal{N}$ such that $C_{i}^{\delta_{i}}\left(\mathbf{x}^{*}_{-i}\right)>0$, then $\exists$ $\left(x^{o}_{i},\mathbf{x}^{*}_{-i}\right)\in A$, and $\left(x^{o}_{i},\mathbf{x}^{*}_{-i}\right)\neq \left(x^{*}_{i},\mathbf{x}^{*}_{-i}\right)$, such that $\rho_{i}^{0}\left(x^{o}_{i},\mathbf{x}^{*}_{-i}\right)- \rho_{i}^{0}\left(x^{*}_{i},\mathbf{x}^{*}_{-i}\right)=\epsilon >0$. It follows that $\left(x^{*}_{i},\mathbf{x}^{*}_{-i}\right)$ cannot be an $\hat{\epsilon}$-Nash equilibrium, with $\hat{\epsilon}\in\left(0,\epsilon\right)$. This proves the second result of the theorem. 
\end{proof}

\begin{proof}[Proof of Lemma \ref{Lemma::1}]
From the assumption of concavity of $f_{i}\left(\boldsymbol{\alpha}_{i};x_{i},\mathbf{x}_{-i}\right)$ w.r.t. $\boldsymbol{\alpha}_{i}$,  $\forall i\in \mathcal{N}$, we have that
\begin{equation}
\rho_{i}^{0}\left(x_{i},\mathbf{x}_{-i}\right)-\rho_{i}^{\delta_{i}}\left(x_{i},\mathbf{x}_{-i}\right)-\delta_{i}\left( \rho_{i}^{0}\left(x_{i},\mathbf{x}_{-i}\right)-\rho_{i}^{1}\left(x_{i},\mathbf{x}_{-i}\right)  \right)\leq 0
\end{equation}
Then, by the sub-additivity property of the maximum operator, it follows that 
\begin{equation}
\begin{array}{lll}
\max_{x_{i}\in A_{i}}\rho_{i}^{0}\left(x_{i},\mathbf{x}_{-i}\right) & \leq &  \max_{x_{i}\in A_{i}}\rho_{i}^{\delta_{i}}\left(x_{i},\mathbf{x}_{-i}\right)+\max_{x_{i}\in A_{i}} \delta_{i} \left( \rho_{i}^{0}\left(x_{i},\mathbf{x}_{-i}\right)-\rho_{i}^{1}\left(x_{i},\mathbf{x}_{-i}\right)  \right)\\
\\
&\leq &\rho_{i}^{0}\left(x^{+}_{i}\left(\delta_{i}\right),\mathbf{x}_{-i}\right)+\max_{x_{i}\in A_{i}}\delta_{i}\left( \rho_{i}^{0}\left(x_{i},\mathbf{x}_{-i} \right)-\rho_{i}^{1}\left(x_{i},\mathbf{x}_{-i}\right)  \right)
\end{array}
\end{equation}
where $x^{+}_{i}\left(\delta_{i}\right)$ is as in \eqref{Ebm} and the last inequality follows from the property of the worst-case payoff function (see Property \ref{Prop1}). This proves the lemma.
\end{proof}

\begin{proof}[Proof of Theorem \ref{EspilonNashAsROE}]
By assumption $\left(x_{1}^{*},\ldots,x^{*}_{n}\right)$ is an $\epsilon$-Nash equilibrium of the nominal game $\left\{ A_{i},f_{i}: i \in \mathcal{N} \right\}$. Then, define $\left\{ A_{i},\bar{f}_{i},W_{i}^{\delta_{i}} : i \in \mathcal{N} \right\}$ such that:
\begin{equation}
\bar{f}_{i}\left(\boldsymbol{\alpha}_{i};x_{i},\mathbf{x}_{-i}\right)=f_{i}\left(x_{i},\mathbf{x}_{-i}\right)-\boldsymbol{\alpha}_{i}\mathbf{1}_{\left\{x_{i}\neq x^{*}_{i}\right\}}\text{,}\quad \forall i \in \mathcal{N}
\end{equation}
where $\mathbf{1}_{\left\{\cdot\right\}}$ is the indicator function, and $W_{i}^{\delta_{i}}=\delta_{i} \left[0,H\right]$, with $H>\epsilon$.
Set $\delta_{i} =\frac{\epsilon}{H}$, the worst-case payoff functions $\rho^{\delta_{i}}_{i}$ of the robust game $\left\{ A_{i},\bar{f}_{i},W_{i}^{\delta_{i}} : i \in \mathcal{N} \right\}$ are such that $\rho^{\delta_{i}}_{i}\left(x^{*}_{i},\mathbf{x}^{*}_{-i}\right)=f_{i}\left(x^{*}_{i},\mathbf{x}^{*}_{-i}\right)$ and $\rho^{\delta_{i}}_{i}\left(x_{i},\mathbf{x}_{-i}\right)=f_{i}\left(x_{i},\mathbf{x}_{-i}\right)-\epsilon$ otherwise. It follows that $\left(x^{*}_{i},\mathbf{x}^{*}_{-i}\right)$ is a robust-optimization equilibrium of the robust game $\left\{ A_{i},\bar{f}_{i},W_{i}^{\delta_{i}} : i \in \mathcal{N} \right\}$, which reduces to the nominal game $\left\{ A_{i},f_{i}: i \in \mathcal{N} \right\}$ when $\delta_{i} =0$. This completes the proof.
\end{proof}

\bigskip

\begin{proof}[Proof of Theorem \ref{Th::ContinuityROE}]
Note that condition A1) implies condition A2), see Lemma \ref{PropWorstCasePayoffFunctions}. Therefore, it is sufficient to prove that the statement of the theorem holds true under condition A2). Moreover, considering the case $\bar{\delta}=1$, the generalization to the case $\bar{\delta}\in\left(0,1\right]$ is straightforward. Choose $\delta\in\left[0,1\right]$ arbitrarily. By the conditions imposed in Assumption \ref{Ass1} and since by hypothesis $\rho^{\delta}_{i}\left(x_{i},\mathbf{x}_{-i}\right)$ are strictly concave w.r.t. $x_{i}$ for all $\mathbf{x}_{-i}\in A_{-i}$, the (robust) best reply $R^{\delta}_{i}$ defined in \eqref{RBRFi}, with $i=1,\ldots,n$, are single valued, i.e. they are real valued functions instead of more general correspondences: $R^{\delta}_{i}:A_{-i}\rightarrow A_{i}$. In fact, it is well-known that the maximum set of a strictly concave function is either empty or single-valued. Since the conditions imposed in Assumption \ref{Ass1} ensure that it is not empty, then it is single valued. Moreover, due to the assumption of continuity of $\rho_{i}^{\delta}$ w.r.t $\left(x_{i},\mathbf{x}_{-i}\right)$, by the Berge's Maximum Theorem we known that $R^{\delta}_{i}\left(\mathbf{x}_{-i}\right)$, for all $i=1,\ldots,n$, are upper hemicontinuous in $\mathbf{x}_{-i}\in A_{-i}$. Since it is well-known that a single-valued correspondence is upper hemicontinuous if and only if it is continuous as a function, we have that $R^{\delta}_{i}\left(\mathbf{x}_{-i}\right)$, for all $i=1,\ldots,n$, are continuous functions in $\mathbf{x}_{-i}\in A_{-i}$. Consider now $R^{\delta}_{i}\left(\mathbf{x}_{-i}\right)$ as a function of $\left(\mathbf{x}_{-i},\delta\right)$, i.e. $R^{\delta}_{i}:A_{-i}\times\left[0,1\right]\rightarrow A_{i}$. By the same argument and noting that $\delta$ defines a linear and convex combination of $\boldsymbol{\alpha}_{i}\in W_{i}^{\bar{\delta}}$, the further hypothesis of $f_{i}\left(\boldsymbol{\alpha}_{i};x_{i},\mathbf{x}_{-i}\right)$ continuous on $W_{i}^{\bar{\delta}}\times A$ ensures $R^{\delta}_{i}\left(\mathbf{x}_{-i}\right)$, for all $i=1,\ldots,n$, continuous in $\left(\mathbf{x}_{-i},\delta\right)$. Define $T\left(\mathbf{x},\delta\right)=\left(R^{\delta}_{1}\left(\mathbf{x}_{-1}\right),\ldots,R^{\delta}_{n}\left(\mathbf{x}_{-n}\right) \right)$. By construction $T\left(\mathbf{x},\delta\right): A\times\left[0,1\right]\rightarrow A$ is a single valued correspondence (a function), it is continuous, and the robust-optimization equilibria of the robust game $\left\{ A_{i},f_{i},W_{i}^{\delta} : i \in N \right\}$ are the solution of $T\left(\mathbf{x},\delta\right)=\mathbf{x}$. Hence, by Theorem \ref{EquilibriumExistenceROE} it follows that for each $\delta\in\left[0,1\right]$ there is a solution (not necessarily unique), labeled by $\mathbf{x}^{*}\left(\delta\right)$, of $T\left(\mathbf{x},\delta\right)=\mathbf{x}$. Moreover, 
by hypothesis that the robust game has a unique robust-optimization equilibrium, there exists a unique solution of $T\left(\mathbf{x},\delta\right)=\mathbf{x}$, labeled by $\mathbf{x}^{*}\left(\delta\right)$. Consider a $\delta^{*}\in\left[0,1\right]$. 
Suppose $\left\{\delta^{k}\right\}$ is an arbitrary sequence converging to $\delta^{*}$, as $k\rightarrow +\infty$, and consider the set $B:=\left\{\mathbf{x}^{*}\left(\delta^{k}\right):k=1,2,\ldots\right\}\subset A$, where $\mathbf{x}^{*}\left(\delta^{k}\right)$ is solution of $T\left(\mathbf{x},\delta^{k}\right)=\mathbf{x}$ in $A$ (i.e. it is a robust-optimization equilibrium of $\left\{ A_{i},f_{i},W_{i}^{\delta^{k}} : i \in \mathcal{N} \right\}$). Since $A$ is complete, there exists a subsequence $\left\{v^{k}\right\}$ of $\left\{\delta^{k}\right\}$ such that $\left\{\mathbf{x}^{*}\left(v^{k}\right)\right\}$ is convergent to some $\mathbf{z}\in A$. By the assumption of continuity of $T$ in $\left(\mathbf{x},\delta\right)$ for each $\mathbf{x}\in A$, it follows that $\mathbf{z} =T\left(\mathbf{z},\delta\right)$. Since $\mathbf{x} =T\left(\mathbf{x},\delta\right)$ has a unique solution by the hypothesis of a unique robust-optimization equilibrium for the robust game, we have $\mathbf{z}=\mathbf{x}^{*}\left(\delta^{*}\right)$. This implies that $\mathbf{x}^{*}\left(\delta^{k}\right)$ converges to $\mathbf{x}^{*}\left(\delta^{*}\right)$. Since $\left\{\delta^{k}\right\}$ is an arbitrary sequence converging to $\delta^{*}$, we have that for each $\delta \in \left[0,1\right]$, equation $T\left(\mathbf{x},\delta\right)=\mathbf{x}$ has a solution {(at least a solution)} $\mathbf{x}^{*}\left(\delta\right)\in A$ and $\mathbf{x}^{*}\left(\delta\right)\rightarrow \mathbf{x}^{*}\left(\delta^{*}\right)$ as $\delta\rightarrow \delta^{*}$. By the arbitrariness of $\delta^{*}\in\left[0,1\right]$ and since $\mathbf{x}^{*}\left(\delta\right)$ is a robust-optimization equilibrium for the robust game with $\delta$-level of uncertainty and $\mathbf{x}^{*}\left(0\right)$ is the Nash equilibrium of the nominal counterpart game by definition of $T$, the proof of the theorem is complete.
\end{proof}

\bigskip
 
\begin{proof}[Proof of Theorem \ref{EspilonNashforCounterpartROE}]
Assume H1) holds. By hypothesis, a robust-optimization equilibrium exists, let us denote it by $\mathbf{x}^{*}\left(\delta\right)$ which has a Nash-equilibrium counterpart $\mathbf{x}^{*}\left(0\right)$, which implies $\mathbf{x}^{*}\left(\delta\right)\rightarrow \mathbf{x}^{*}\left(0\right)$, as $\delta\rightarrow 0^{+}$. By definition, see \eqref{Ebm}, the opportunity cost of uncertainty of player $i$ is given by
\begin{equation}
C_{i}^{\delta}\left(\mathbf{x}^{*}_{-i}\left(\delta\right)\right)=\max_{x_{i}\in A_{i}}\rho_{i}^{0}\left(x_{i},\mathbf{x}_{-i}^{*}\left(\delta\right)\right)- \rho_{i}^{0}\left(x_{i}^{*}\left(\delta\right),\mathbf{x}_{-i}^{*}\left(\delta\right)\right)
\end{equation}
It holds:
\begin{equation}
\begin{array}{lll}
\lim_{\delta \rightarrow 0^{+}}C_{i}^{\delta}\left(\mathbf{x}^{*}_{-i}\left(\delta\right)\right) & =& \lim_{\delta\rightarrow 0^{+}}\max_{x_{i}\in A_{i}}\rho_{i}^{0}\left(x_{i},\mathbf{x}_{-i}^{*}\left(\delta\right)\right)-\lim_{\delta\rightarrow 0^{+}}\rho_{i}^{0}\left(x^{*}_{i}\left(\delta\right),\mathbf{x}_{-i}^{*}\left(\delta\right)\right)\\
\\
& = & \lim_{\delta\rightarrow 0^{+}}\max_{x_{i}\in A_{i}}\rho_{i}^{0}\left(x_{i},\mathbf{x}_{-i}^{*}\left(\delta\right)\right)-\rho_{i}^{0}\left(x^{*}_{i}\left(0\right),\mathbf{x}_{-i}^{*}\left(0\right)\right)
\end{array}
\end{equation}
The last inequality follows from the continuity assumption on $f_{i}\left(\boldsymbol{\alpha}_{i}^{0};\cdot,\cdot\right):=\rho_{i}^{0}\left(\cdot,\cdot\right)$, while the existence of the maximum follows from the continuity assumption on $\rho_{i}^{0}\left(\cdot,\cdot\right)$ and compactness of $A_{i}$. Take a sequence $\delta_{n}\rightarrow 0^{+}$ as $n\rightarrow +\infty$ and consider
\begin{equation}
x_{i}\left(\delta_{n}\right) \in \arg\max_{x_{i}\in A_{i}}\rho_{i}^{0}\left(x_{i},\mathbf{x}^{*}_{-i}\left(\delta_{n}\right)\right)
\end{equation}
Since $A_{i}$ is compact, without loss of generality we can assume $x_{i}\left(\delta_{n}\right) \rightarrow \bar{x}_{i}\in A_{i}$. The continuity assumption on $\rho^{0}_{i}$ gives
\begin{equation}
\lim_{n\rightarrow +\infty}\rho_{i}^{0}\left(x_{i}\left(\delta_{n}\right),\mathbf{x}^{*}_{-i}\left(\delta_{n}\right)\right)\rightarrow \rho_{i}^{0}\left(\bar{x}_{i},\mathbf{x}^{*}_{-i}\left(0\right)\right)
\end{equation}
Then, from 
\begin{equation}
\rho_{i}^{0}\left(x_{i}\left(\delta_{n}\right),\mathbf{x}^{*}_{-i}\left(\delta_{n}\right)\right)\geq \rho_{i}^{0}\left(x_{i},\mathbf{x}^{*}_{-i}\left(\delta_{n}\right)\right)\text{, } \quad \forall x_{i}\in A_{i}
\end{equation}
it follows
\begin{equation}
\rho_{i}^{0}\left(\bar{x}_{i},\mathbf{x}^{*}_{-i}\left(0\right)\right)\geq \rho_{i}^{0}\left(x_{i},\mathbf{x}^{*}_{-i}\left(0\right)\right) \text{, }\quad \forall x_{i}\in A_{i}
\end{equation}
which gives
\begin{equation}
\bar{x}_{i} \in \arg\max_{x_{i}\in A_{i}}\rho^{0}_{i}\left(x_{i},\mathbf{x}^{*}_{-i}\left(0\right)\right)
\end{equation}
Hence
\begin{equation}
\rho^{0}_{i}\left(\bar{x}_{i},\mathbf{x}^{*}_{-i}\left(0\right)\right)=\rho^{0}_{i}\left(x^{*}_{i}\left(0\right),\mathbf{x}^{*}_{-i}\left(0\right)\right)
\end{equation}
and the proof of the first point of the theorem under assumption H1) is completed. Assume H2) holds. The continuity of opportunity cost of uncertainty (i.e. the first point of the theorem) follows from Lemma \ref{Lemma::1}. If the first point holds true, the second point of the theorem is a consequence of Theorem \ref{Theorem::DeltaROEm} and the definition of $\epsilon$-Nash equilibrium.
\end{proof}

\bigskip

\section{- Algorithm for worst-case best-reply functions}\label{AlgorithmBR}

Consider a robust game, where the uncertainty sets are polyhedral, then the following algorithm can used to obtain the worst-case best-reply functions.

\noindent\rule{8cm}{1pt}

\textbf{Algorithm:}

\noindent\rule{8cm}{1pt}

\begin{itemize}
\item[1:] Define $\bar{W}^{\delta_{i}}_{i}$ as the set of corner points of $W^{\delta_{i}}_{i}$.
\medskip
\item[2:] Build the function $g^{\delta_{i}}_{i}:\bar{W}^{\delta_{i}}_{i}\times A_{-i}\rightarrow \mathbb{R}$, defined as follows
\begin{equation}
g_{i}^{\delta_{i}}\left(\boldsymbol{\alpha}_{i};\mathbf{x}_{-i}\right)=\arg\max_{x_{i}\in A_{i}} f_{i}\left(\boldsymbol{\alpha}_{i}|x_{i},\mathbf{x}_{-i}\right)
\end{equation}
\item[3:] Build the function $h^{\delta_{i}}_{i}:\bar{W}^{\delta_{i}}_{i}\times A_{-i}\rightarrow \mathbb{R}$, defined as follows
\begin{equation}\small
h_{i}^{\delta_{i}}\left(\boldsymbol{\alpha}_{i};\mathbf{x}_{-i}\right)=
\left\{
\begin{array}{ll}
1 \quad \quad & \text{if} \quad f_{i}\left(\boldsymbol{\alpha}_{i}|g^{\delta_{i}}_{i}\left(\boldsymbol{\alpha}_{i};\mathbf{x}_{-i}\right),\mathbf{x}_{-i}\right)=\displaystyle\min_{\boldsymbol{\alpha}_{j}\in W_{i}^{\delta_{i}}} f_{i}\left(\boldsymbol{\alpha}_{j}|g^{\delta_{i}}_{i}\left(\boldsymbol{\alpha}_{i};\mathbf{x}_{-i}\right),\mathbf{x}_{-i}\right)\quad \text{and}\quad g^{\delta_{i}}_{i}\left(\boldsymbol{\alpha}_{i};\mathbf{x}_{-i}\right)\in A_{i} \\
\\
0 & \text{otherwise}
\end{array}
\right .
\end{equation}
\item[4:] Player $i$'s worst-case best-reply function can be obtained as
\begin{equation}
R^{\delta_{i}}\left(\mathbf{x}_{-i}\right)=\sum_{\boldsymbol{\alpha}_{i}\in \bar{W}^{\delta_{i}}_{i}}g^{\delta_{i}}_{i}\left(\boldsymbol{\alpha}_{i};\mathbf{x}_{-i}\right)h^{\delta_{i}}_{i}\left(\boldsymbol{\alpha}_{i};\mathbf{x}_{-i}\right)-\sum_{\boldsymbol{\alpha}_{i}\in \bar{W}^{\delta_{i}}_{i}}\sum_{\substack{\boldsymbol{\alpha}_{j}\in \bar{W}^{\delta_{i}}_{i} \\ j>i }}g^{\delta_{i}}_{i}\left(\boldsymbol{\alpha}_{i};\mathbf{x}_{-i}\right)h^{\delta_{i}}_{i}\left(\boldsymbol{\alpha}_{i};\mathbf{x}_{-i}\right)h^{\delta_{i}}_{i}\left(\boldsymbol{\alpha}_{j};\mathbf{x}_{-i}\right)
\end{equation}
\end{itemize}

\noindent\rule{8cm}{1pt}

\section{- Proofs of the results on the robust duopoly model in Section \ref{Application}.}\label{AppendixB}
\begin{proof}[Proof of Proposition \ref{NENCDG}]
The action space of the nominal Cournot duopoly model can be partitioned in 4 regions
\begin{equation*}
\begin{array}{lllcclll}
\Omega _{1} &=&\left\{ \left( q_{1},q_{2}\right) |\frac{a}{\widehat{
\gamma }}>\text{ }q_{2}\geq 0\wedge \frac{a}{\widehat{\gamma }}>\text{ 
}q_{1}\geq 0\right\}  & \quad \quad & \Omega _{3} &=&\left\{ \left( q_{1},q_{2}\right) |\text{ }q_{2}\geq \frac{a}{\widehat{\gamma }}\wedge \frac{a}{\widehat{\gamma }}>\text{ }%
q_{1}\geq 0\right\}  \\
\Omega _{2} &=&\left\{ \left( q_{1},q_{2}\right) |\frac{a}{\widehat{
\gamma }}>\text{ }q_{2}\geq 0\wedge \text{ }q_{1}\geq \frac{a}{\widehat{\gamma }}\right\}  
& \quad \quad &
\Omega _{4} &=&\left\{ \left( q_{1},q_{2}\right) |q_{2}\geq \frac{a}{\widehat{\gamma }}0\wedge q_{1}\geq \frac{a}{\widehat{\gamma }}
\right\} \\
\end{array}
\end{equation*}
Moreover, by definition a Nash equilibrium $\left( q_{1}^{*},q_{2}^{*}\right)$ of the game solves
\begin{equation}
\left( q_{1}^{*},q_{2}^{*}\right) =\left( R^{0}\left( q_{2}^{*}\right)
,R^{0}\left( q_{1}^{*}\right) \right) 
\end{equation}
Hence, a sufficient condition for $\left( q_{1}^{*},q_{2}^{*}\right) \in
\Omega _{i}$, with $i\in \left\{ 1,2,3,4\right\} $, to be a Nash equilibrium is $\left( R^{0}\left(
q_{2}^{*}\right) ,R^{0}\left( q_{1}^{*}\right) \right) \in \Omega _{i}$.
Then, since $\forall \left( q_{1}^{*},q_{2}^{*}\right) \in $ $\Omega _{4}$ holds that $\left( R^{0}\left( q_{2}^{*}\right) ,R^{0}\left(
q_{1}^{*}\right) \right) =\left( 0,0\right) \notin \Omega _{4}$, Nash
equilibria do not exist in $\Omega _{4}$. At the same time, $\forall \left(
q_{1}^{*},q_{2}^{*}\right) \in $ $\Omega _{3}$ the condition $\left( R^{0}\left(
q_{2}^{*}\right) ,R^{0}\left( q_{1}^{*}\right) \right) =\left(
0,R^{0}\left( 0\right) \right) \in \Omega _{3}$ implies $R^{0}\left(
0\right) \geq \frac{a}{\widehat{\gamma }}$, or equivalently $\widehat{\gamma }>2\widehat{b} $, which violates Assumption \ref{Assumption1}. Thus, there are no Nash equilibria in $\Omega _{3}$. By symmetry, there are no Nash equilibria in $\Omega _{2}$. Finally, straightforward algebra shows that the Nash equilibrium in \eqref{NE:NominalGame} is the unique solution of the system $\left( q_{1},q_{2}\right) =\left( R^{0}\left( q_{2}\right)
,R^{0}\left( q_{1}\right) \right) $ in $ \Omega _{1}$ and it always exists under Assumption \ref{Assumption1}.
\end{proof}

\begin{proof}[Proof of Proposition \ref{ROERCDG}]
By hypothesis $\delta>0$. Then, the action space of the robust Cournot duopoly model in \eqref{G} can be partitioned in 16 regions.
\begin{equation*}\footnotesize
\begin{array}{lllcclll}
\Omega _{1} &=&\left\{ \left( q_{1},q_{2}\right) |\underline{q}\left(\delta\right)>\text{ }%
q_{1}\geq 0\wedge \underline{q}\left(\delta\right)>\text{ }q_{2}\geq 0\right\}  & \quad & \Omega _{9} &=&\left\{ \left( q_{1},q_{2}\right) |q ^{M}\left(\delta\right)>\text{ }%
q_{1}\geq \overline{q}\left(\delta\right)\wedge \underline{q}\left(\delta\right)>\text{ }q_{2}\geq
0\right\}\\
\Omega _{2} &=&\left\{ \left( q_{1},q_{2}\right) |\underline{q}\left(\delta\right)>\text{ }%
q_{1}\geq 0\wedge \overline{q}\left(\delta\right)>\text{ }q_{2}\geq \underline{q}\left(\delta\right)\right\} & \quad & \Omega _{10} &=&\left\{ \left( q_{1},q_{2}\right) |q^{M}\left(\delta\right)>\text{ }%
q_{1}\geq \overline{q}\left(\delta\right)\wedge \overline{q}\left(\delta\right)>\text{ }q_{2}\geq 
\underline{q}\left(\delta\right)\right\} \\
\Omega _{3} &=&\left\{ \left( q_{1},q_{2}\right) |\text{ }\underline{q}\left(\delta\right)>%
\text{ }q_{1}\geq 0\wedge q ^{M}\left(\delta\right)>\text{ }q_{2}\geq \overline{q}\left(\delta\right)\right\}  & \quad &  \Omega _{11} &=&\left\{ \left( q_{1},q_{2}\right) |q^{M}\left(\delta\right)>\text{ }%
q_{1}\geq \overline{q}\left(\delta\right)\wedge q^{M}\left(\delta\right)>\text{ }q_{2}\geq \overline{q}\left(\delta\right)\right\} \\
\Omega _{4} &=&\left\{ \left( q_{1},q_{2}\right) |\underline{q}\left(\delta\right)>\text{ }%
q_{1}\geq 0\wedge \text{ }q_{2}\geq q ^{M}\left(\delta\right)\right\}  & \quad & \Omega _{12} &=&\left\{ \left( q_{1},q_{2}\right) |q^{M}\left(\delta\right)>\text{ }%
q_{1}\geq \overline{q}\left(\delta\right)\wedge q_{2}\geq q^{M}\left(\delta\right)\right\} \\
\Omega _{5} &=&\left\{ \left( q_{1},q_{2}\right) |\overline{q}\left(\delta\right)>\text{ }%
q_{1}\geq \underline{q}\left(\delta\right)\wedge \underline{q}\left(\delta\right)>\text{ }q_{2}\geq
0\right\}  & \quad & \Omega _{13} &=&\left\{ \left( q_{1},q_{2}\right) |\text{ }q_{1}\geq
q^{M}\left(\delta\right)\wedge \underline{q}\left(\delta\right)>\text{ }q_{2}\geq 0\right\} \\
\Omega _{6} &=&\left\{ \left( q_{1},q_{2}\right) |\overline{q}\left(\delta\right)>\text{ }%
q_{1}\geq \underline{q}\left(\delta\right)\wedge \overline{q}\left(\delta\right)>\text{ }q_{2}\geq 
\underline{q}\left(\delta\right)\right\}  & \quad &  \Omega _{14} &=&\left\{ \left( q_{1},q_{2}\right) |q_{1}\geq q^{M}\left(\delta\right)\wedge 
\overline{q}\left(\delta\right)>\text{ }q_{2}\geq \underline{q}\left(\delta\right)\right\} \\
\Omega _{7} &=&\left\{ \left( q_{1},q_{2}\right) |\overline{q}\left(\delta\right)>\text{ }%
q_{1}\geq \underline{q}\left(\delta\right)\wedge q ^{M}\left(\delta\right)>\text{ }q_{2}\geq \overline{q}\left(\delta\right)\right\}  & \quad &  \Omega _{15} &=&\left\{ \left( q_{1},q_{2}\right) |q_{1}\geq q^{M}\left(\delta\right)\wedge
q^{M}\left(\delta\right)>\text{ }q_{2}\geq \overline{q}\left(\delta\right)\right\} \\
\Omega _{8} &=&\left\{ \left( q_{1},q_{2}\right) |\overline{q}\left(\delta\right)>\text{ }%
q_{1}\geq \underline{q}\left(\delta\right)\wedge \text{ }q_{2}\geq q ^{M}\left(\delta\right)\right\}  & \quad & \Omega _{16} &=&\left\{ \left( q_{1},q_{2}\right) |q_{1}\geq q^{M}\left(\delta\right)\wedge
q_{2}\geq q^{M}\left(\delta\right)\right\}
\end{array}
\end{equation*}
According to definition \eqref{defCournotROE}, $\left( q_{1}^{*},q_{2}^{*}\right)$ is a robust-optimization equilibrium of our Cournot duopoly model if and only if
\begin{equation}\label{CournotROEcond}
\left( q_{1}^{*},q_{2}^{*}\right) =\left( R^{\delta}\left( q_{2}^{*}\right)
,R^{\delta}\left( q_{1}^{*}\right) \right) 
\end{equation}
Then, a necessary condition for $\left( q_{1}^{*},q_{2}^{*}\right) \in
\Omega _{i}$, with $i\in \left\{ 1,2,...,16\right\} $, is $\left(
R^{\delta}\left( q_{2}^{*}\right) ,R^{\delta}\left( q_{1}^{*}\right) \right) \in
\Omega _{i}$. It follows that:
\begin{enumerate}
\item[-] There are no robust-optimization equilibria in $\Omega _{16}$. Indeed $\forall \left( q_{1},q_{2}\right) \in $ 
$\Omega _{16}$, we have that $\left( R^{\delta}\left( q_{2}\right)
,R^{\delta}\left( q_{1}\right) \right) =\left( 0,0\right) \notin \Omega
_{16} $;
\item[-] There are no robust-optimization equilibria in $\Omega _{15}\cup \Omega _{14}$. Indeed $\forall \left( q_{1},q_{2}\right) \in \Omega _{15}\cup \Omega _{14}
$ we have $\left( R^{\delta}\left( q_{2}\right) ,R^{\delta}\left( q_{1}\right) \right)
=\left( R^{\delta}\left( q_{2}\right) ,0\right) \notin $ $\Omega _{15}\cup \Omega
_{14}$. By symmetric arguments, the same holds true in $\Omega _{8}\cup \Omega _{12}$;
\item[-] There are no robust-optimization equilibria in $\Omega _{4}$. Indeed $\forall \left( q_{1}^{\ast },q_{2}^{\ast }\right) \in \Omega _{4}$
the condition $\left( R^{\delta}\left( q_{2}^{\ast }\right) ,R^{\delta}\left(
q_{1}^{\ast }\right) \right) =\left( 0,R^{\delta}\left( 0\right) \right) \in \Omega _{4}$ implies $R^{\delta}\left( 0\right) \geq q^{M}\left(\delta\right)$, or
equivalently $\underline{\gamma }\left( \delta\right) >2\overline{b}\left( \delta\right)$ which contradicts Assumption \ref{Assumption1}. By symmetric arguments, the same holds true in $\Omega _{13}$;
\item[-] There are no robust-optimization equilibria in $\Omega _{9}$. Indeed, solving \eqref{CournotROEcond} in $\Omega _{9}$ we obtain
\begin{equation}
q_{1}^{*}=\frac{ \left(2 
\underline{b}\left( \delta\right) -\underline{\gamma }\left( \delta\right)\right)a}{ 4\overline{b}\left( \delta\right) \underline{b}\left( \delta\right) -\overline{\gamma }\left( \delta\right) \underline{\gamma }
\left( \delta\right)} \quad \text{and} \quad 
q_{2}^{*}=\frac{\left( 2\overline{b}\left( \delta\right) -\overline{\gamma }\left( \delta\right) \right) a }{4\overline{b}\left( \delta\right) \underline{b}\left( \delta\right) -\overline{\gamma}\left(\delta\right) \underline{\gamma}\left( \delta\right) } 
\end{equation}
imposing $\left(q_{1}^{*},q_{2}^{*}\right)\in\Omega_{9}$ is equivalent to $q ^{M}\left(\delta\right) > q_{1}^{*}\geq \overline{q}\left(\delta\right)$ and $\underline{q}\left(\delta\right) >q_{2}^{*}>0$, which implies
\begin{equation}
\left( 2\underline{b}\left( \delta \right) -\underline{\gamma }\left( \delta
\right) \right) \left( \overline{\gamma }  -\underline{\gamma }  \right) >\left( 2\overline{b}\left( \delta
\right) -\overline{\gamma }\left( \delta \right) \right) \left( \overline{b}  -\underline{b}  \right)  \quad \text{and}\quad
\left( 2\underline{b}\left( \delta \right) -\underline{\gamma }\left( \delta
\right) \right) \left( \overline{b}  -\underline{b}\right) > \left( 2\overline{b}\left( \delta \right) -\overline{\gamma }\left( \delta \right) \right) \left( \overline{\gamma } -\underline{\gamma }  \right) 
\end{equation}
where one condition contradicts the other one. By symmetric arguments, there are not robust-optimization equilibria in $\Omega _{3}$.
\item[-] There are no robust-optimization equilibria in $\Omega _{2}$. Indeed, solving \eqref{CournotROEcond} in $\Omega _{2}$ we obtain
\begin{equation}
q_{1}^{*}=\frac{\left( \overline{\gamma } -
\underline{\gamma }  \right) a }{2
\overline{b}\left( \delta\right) \left( \overline{b}  -\underline{b} \right) +
\underline{\gamma }\left( \delta\right) \left( \overline{\gamma } -\underline{\gamma }  \right) }
\quad \text{and}\quad 
q_{2}^{*}=\frac{\left( \overline{b} -
\underline{b} \right) a }{2\overline{b}\left( \delta\right) \left( \overline{b}-\underline{b}  \right) +%
\underline{\gamma }\left( \delta\right) \left( \overline{\gamma }  -\underline{\gamma }  \right) }
\end{equation}
imposing $\left(q_{1}^{*},q_{2}^{*}\right)\in\Omega_{2}$ is equivalent to have $\underline{q}\left(\delta\right) >q_{1}^{*}>0$ and $\overline{q}\left(\delta\right)  >q_{2}^{*}\geq \underline{q}\left(\delta\right)$, which implies $\overline{b} -\underline{b} >\overline{\gamma } -\underline{\gamma } $ to have $q_{2}^{*}>q_{1}^{*}$ and $2\overline{b}\left( \delta\right)<\underline{\gamma}\left( \delta\right)$ to have $q_{2}^{*}\geq \underline{q}\left( \delta\right)$. However, the latter condition violates Assumption  \ref{Assumption1}, in fact $2\widehat{b}\left( \delta\right)>\widehat{\gamma}\left( \delta\right)$ implies $2\overline{b}\left( \delta\right)\left(>2\widehat{b}\left( \delta\right)>\widehat{\gamma}\left( \delta\right)\right)>\underline{\gamma}\left( \delta\right)$. By symmetric arguments, there are no robust-optimization equilibria in $\Omega _{5}$.
\end{enumerate}
In the reminder partitions of the state space of the game, robust-optimization equilibria exist under certain conditions:
\begin{itemize} 
\item Solving \eqref{CournotROEcond} in $\Omega _{1}$, we obtain $q_{1}^{*}=q_{2}^{*}=q^{ROE_{1}}$. Imposing $\left(q^{ROE_{1}},q^{ROE_{1}}\right)\in \Omega_{1}$ is equivalent to $\underline{q}\left(\delta\right)>q^{ROE_{1}}>0$ and it is satisfied if and only if $\overline{b}  -\underline{b}  >\overline{\gamma } -\underline{\gamma } $. Then $\left(q^{ROE_{1}},q^{ROE_{1}}\right)$ is a robust-optimization equilibrium if and only if the latter condition is satisfied.
\item According to \eqref{CournotROEcond}, a robust-optimization equilibrium in $\Omega _{6}$ implies
\begin{equation}
q^{*}_{1}=\frac{\overline{\gamma} -\underline{\gamma} }{\overline{b} -\underline{b} }q^{*}_{2} \quad \text{and}
\quad 
q^{*}_{2}=\frac{\overline{\gamma} -\underline{\gamma}
 }{\overline{b}  -
\underline{b} }q^{*}_{1}
\end{equation}
Then, $\overline{b} -\underline{b} =\overline{\gamma}  -\underline{\gamma} $ and hence a continuum of robust-optimization equilibria exists and is given by
\begin{equation}\label{ROE6}
\left\{\left(q_{1}^{*},q_{2}^{*}\right) \ | \ \overline{q}\left( \delta\right) >\text{ }q_{1}^{*}\geq \underline{q}\left( \delta\right) \wedge \overline{q}\left( \delta\right) >\text{ }q_{2}^{*}\geq \underline{q}\left( \delta\right) \right\} 
\end{equation}
where \eqref{ROE6} is obtained imposing the feasibility condition $\left(q_{1}^{*},q_{2}^{*}\right) \in\Omega_{6}$, or there are no robust-optimization equilibria.
\item Solving \eqref{CournotROEcond} in $\Omega _{11}$ under condition $\overline{\gamma}\left( \delta\right)\neq 2\underline{b}\left(\delta\right)$, we obtain $q_{1}^{*}= q_{2}^{*}=q^{ROE_{2}}$. 
Imposing $\left(q^{ROE_{2}},q^{ROE_{2}}\right)\in \Omega_{11}$, we have that it is a robust-optimization equilibrium if and only if $\overline{b} -\underline{b} <\overline{\gamma}  -\underline{\gamma} $. For $\overline{\gamma}\left( \delta\right)= 2\underline{b}\left(\delta\right)$, solving \eqref{CournotROEcond} in $\Omega_{11}$ we obtain $q^{*}_{1}=\frac{a }{2\underline{b}\left( \delta\right) }-q^{*}_{2}$. 
Therefore, $\left(q,\frac{a-c}{2\underline{b}\left( \delta\right) }-q\right)\in\Omega_{11}$ are robust-optimization equilibria.
\item Solving \eqref{defCournotROE} in $\Omega _{7}$, we obtain $q_{1}^{*}=q^{ROE_{3}}$ and $q_{2}^{*}=q^{ROE_{4}}$. Imposing $\left(q^{ROE_{3}},q^{ROE_{4}}\right)\in \Omega_{7}$, we have $
q^{M}\left(\delta\right) >q^{ROE_{4}}\geq \overline{q}\left(\delta\right)$ and $
\overline{q}\left(\delta\right)>q^{ROE_{3}}\geq \underline{q}\left(\delta\right)$.
The first condition is equivalent to $\overline{\gamma }
  -\underline{\gamma }  >\overline{b}  -\underline{b} $, which implies that the second reduces to $\overline{\gamma}\left( \delta \right)>2\underline{b}\left( \delta \right)$. Then $\left(
q^{ROE_{3}},q^{ROE_{4}}\right) $ is a robust-optimization equilibrium if and only if the latter two conditions are both satisfied. By symmetric arguments, $\left(q^{ROE_{4}},q^{ROE_{3}}\right)\in\Omega _{10}$ is a robust-optimization equilibrium if and only if the latter two conditions hold.
\end{itemize}
This completes the proof.
\end{proof}

\begin{proof}[Proof of Proposition \ref{AsROEProfits}]
By assumption $\left(q^{ROE_{3}},q^{ROE_{4}}\right)$ is a robust-optimization equilibrium of the game, then $\overline{\gamma}-\underline{\gamma}>\overline{b}-\underline{b}$ and $\delta>\delta^{*}$ by Theorem \ref{ROERCDG}. The first condition implies $q^{ROE_{4}}>q^{ROE_{3}}$, while the second condition is equivalent to $\overline{\gamma}\left(\delta\right)>2\underline{b}\left(\delta\right)$ and implies
\begin{equation}
\footnotesize
\rho^{\delta}\left(q^{ROE_{3}},q^{ROE_{4}}\right)-\rho^{\delta}\left(q^{ROE_{4}},q^{ROE_{3}}\right)= a\left(1-
\frac{\underline{b}\left( \delta\right)\left( \overline{b}\left(\delta\right) -\underline{b}\left( \delta\right) \right) +\underline{b}\left( \delta\right)\left( \overline{\gamma }\left(\delta\right) -\underline{\gamma }\left( \delta\right) \right)}{2\underline{b}\left( \delta\right) \left( \overline{b}\left(
\delta\right) -\underline{b}\left( \delta\right) \right) +\overline{\gamma }\left( \delta\right) \left( \overline{\gamma }\left(\delta\right) -\underline{\gamma }\left( \delta\right) \right) }\right) \left(q^{ROE_{3}}-q^{ROE_{4}}\right)<0
\end{equation}
The symmetry of the game completes the proof.
\end{proof}

\begin{proof}[Proof of Proposition \ref{AsROEProfitsC}]
The existence of the three robust-optimization equilibria implies $\overline{\gamma}-\underline{\gamma}>\overline{b}-\underline{b}$ and $\delta\in\left(\delta^*,1\right]$ (or equivalently $\overline{\gamma}\left(\delta\right)>2\underline{b}\left(\delta\right)$), from which it follows that $q^{ROE_{4}}>q^{ROE_{2}}>q^{ROE_{3}}$. Moreover, we have
\begin{equation}
\rho^{\delta}\left(q^{ROE_{2}},q^{ROE_{2}}\right) = \underline{b} \left(q^{ROE_{2}}\right)^2\quad \quad \quad \quad
 \rho^{\delta}\left(q^{ROE_{3}},q^{ROE_{4}}\right)=\underline{b} \left(q^{ROE_{3}}\right)^2
\end{equation}
and
\begin{equation}
\rho^{\delta}\left(q^{ROE_{4}},q^{ROE_{3}}\right) =\frac{
\left(2\underline{b}-\overline{\gamma}\right)
\left(\overline{b}-\underline{b}\right)
+\left(\overline{\gamma}-\underline{b}\right)   \left(\overline{\gamma}-\underline{\gamma}\right)}{\overline{\gamma}-\underline{\gamma}} \left(q^{ROE_{4}}\right)^2
\end{equation}
for all $\delta\in\left[\delta^{*},1\right]$. Therefore,
\begin{equation}
\rho^{\delta}\left(q^{ROE_{2}},q^{ROE_{2}}\right) -\rho^{\delta}\left(q^{ROE_{3}},q^{ROE_{4}}\right) = \underline{b}\left( q^{ROE_{2}}   - q^{ROE_{3}} \right)\left(  q^{ROE_{2}}   + q^{ROE_{3}} \right)>0
\end{equation}
and
\begin{equation}
\rho^{\delta}\left(q^{ROE_{4}},q^{ROE_{3}}\right) -\rho^{\delta}\left(q^{ROE_{2}},q^{ROE_{2}}\right) > \underline{b}\left(q^{ROE_{4}}- q^{ROE_{2}}  \right)\left(q^{ROE_{4}} + q^{ROE_{2}}  \right)>0
\end{equation}
which proves the Proposition.
\end{proof}

\bibliographystyle{spbasic}
\bibliography{binary_bib}

\end{document}